\crefname{assumption}{Assumption}{Assumption}
\crefname{algorithm}{Algorithm}{Algorithm}
\def \vtaui {{v^\tau_i}}
\def \vsi {{v^s_i}}
\def \xsttaui {{ x^{*,\tau}_i }}
\def \xtaui {{ x^\tau_i }}
\def \xsi {{ x^s_i }}
\def \vmax {{ v_{\mathrm{max}}}}
\def \vimax {{ v_{i,\mathrm{max}}}}
\def \itau {{ i^\tau }}
\def \utaui {{ u^\tau_i }}
\def \Utaumi {{U^{\tau - 1}_i}}
\def \vitau {{ v_i^\tau }}
\def \ptau {{ p ^ \tau }}
\def \sumiton {{ \sum_{i=1}^n }}
\def \sumtau {{\sum_{\tau=1}^{T}}}
\def \sumiton {{ \sum_{i=1}^n }}
\def \sumstaui {{ \sum_{s = 1}^{\tau - 1}  }}
\def\##1\#{\begin{align}#1\end{align}}
\def\$#1\${\begin{align*}#1\end{align*}}
\begin{document}
\title{
Greedy-Based Online Fair Allocation with Adversarial Input: Enabling Best-of-Many-Worlds Guarantees
}

\author{Zongjun Yang\inst{1}\and
Luofeng Liao\inst{2} \and
Christian Kroer\inst{2}}
 \authorrunning{ Yang et al.}
%
\institute{Peking University \and Columbia University\\
\email{allenyzj@stu.pku.edu.cn, \{ll3530, christian.kroer\}@columbia.edu}}
\maketitle              
\begin{abstract}
We study an online allocation problem with sequentially arriving items and adversarially chosen agent values, with the goal of balancing fairness and efficiency. 
Our goal is to study the performance of algorithms that achieve strong guarantees under other input models such as stochastic inputs, in order to achieve robust guarantees against a variety of inputs.
To that end, we study the 
PACE (Pacing According to Current Estimated utility) algorithm, an existing algorithm designed for stochastic input. We show that in the equal-budgets case, PACE is equivalent to the integral greedy algorithm.  We go on to show that with natural restrictions on the adversarial input model, both integral greedy allocation and PACE have asymptotically bounded multiplicative envy as well as competitive ratio for Nash welfare, with the multiplicative factors either constant or with optimal order dependence on the number of agents. This completes a ``best-of-many-worlds'' guarantee for PACE, since past work showed that PACE achieves guarantees for stationary and stochastic-but-non-stationary input models.
\keywords{online fair allocation  \and envy analysis \and Nash Welfare optimization.}
\end{abstract}
\section{Introduction}
We study \emph{online fair allocation}, where items arrive sequentially in $T$ rounds, and we need to distribute them among a set of $n$ agents with heterogeneous preferences, with the goal of achieving both fairness and efficiency properties. In each round, we observe each agent's value for the item and make an irrevocable allocation. Frequently in the literature on this problem, the items are assumed to be divisible, and agents' utilities are linear and additive. We also consider linear and additive utilities, but indivisible items.

Fair allocation problem in the offline setting has been well-studied. A classical objective to optimize is the \textit{Nash welfare} (NW), defined as the geometric mean of the agents' utilities. Maximizing Nash welfare provides a balance between efficiency and fairness due to the multiplicative nature of the objective. For divisible items, an offline optimal allocation can be computed via solving the Eisenberg-Gale (EG) convex program ~\citep{eisenberg1959consensus}. The solution enjoys both envy-freeness and proportionality, which are important measures of fairness. For indivisible items, finding an allocation that maximizes NW is an APX-hard problem even in the offline setting~\citep{moulin2004fair,lee2017apx}, though constant-factor approximation algorithms are known~\citep{cole2018approximating,cole2017convex,mcglaughlin2020improving}. 

In the online setting, \citet{gao2021online} provides a simple allocation algorithm called PACE (Pace According to Current Estimated utility), which generates asymptotically fair and efficient allocations when items are drawn in an i.i.d. manner. PACE gives each agent a per-round budget of faux currency, and simulates a first-price auction in each round. The fair allocation is achieved by having each agent shade their bid with a \textit{pacing multiplier}, which is a projection of their current estimated inverse bang-per-buck to a fixed interval. \citet{liao2022nonstationary} extend these results to non-stationary inputs, where the distribution of items may change over time. They show that in this case, PACE still achieves asymptotic fairness and efficiency guarantees, up to linear error terms from the amount of non-stationarity.

Yet in many real-world scenarios, we cannot expect items to be drawn in a stochastic manner, even if from non-stationary distributions. This motivates the investigation of algorithms with competitive ratio guarantee for adversarial settings. To fit arbitrary inputs, including extreme ones, some algorithms adopt ``conservative'' designs for fairness, such as allocating half of each item purely equally ~\citep{banerjee2022online}. Although this helps to provide worst-case guarantees, it damages the efficiency in the average case, which may not be acceptable in some practical applications.
Moreover, this conservative allocation requires each item to be divisible, or at least for random allocation to be acceptable.

This motivates us to move in another direction: Instead of developing algorithms to fit extreme adversarial inputs, we seek to find worst-case guarantees for existing algorithms that are designed for stochastic inputs. In particular, we focus on the performance of the algorithm PACE ~\citep{gao2021online,liao2022nonstationary}, and explore the question:

$$\textit{How does PACE perform under adversarial input?}$$

Our first contribution is to show that, in the case where all agents have the same weight (or \emph{budget} in market equilibrium terminology), PACE is equivalent to the \textit{integral greedy algorithm}, assuming no projection of the pacing multiplier. 
Due to this equivalence, we start by studying the integral greedy allocation method. Our results for that method are of independent interest, as it is a natural allocation method.

Although both integral greedy allocation and PACE have infinite envy and $\Omega(T)$ competitive ratio when inputs are completely arbitrary, we notice that such pessimistic results only occur under extreme inputs where the ratio between the largest and smallest non-zero values for an agent differ by an exponential factor. We show that, once we rule out such extreme instances by introducing mild assumptions (such as a constant ratio between nonzero values observed by an agent) both algorithms converge with approximate envy-freeness and bounded competitive ratio as $T$ increases to infinity.
The upper bounds are either constant, or in near-optimal order of $n$, see \Cref{table-summary}. Combined with existing results under stationary ~\citep{gao2021online} and non-stationary ~\citep{liao2022nonstationary} input models, this establishes a ``best-of-many-worlds'' guarantee for PACE: it is the first online algorithm that simultaneously guarantees asymptotic fairness and efficiency guarantees under stochastic, stochastic but nonstationary, and adversarial inputs. As such, we believe our results show that PACE is a natural and robust algorithm for online fair allocation in real-world settings, since it achieves strong guarantees under many different utility models, and is thus likely to perform well on a variety of real-world inputs.

\begin{table*}[t]
    \centering
    \begin{tabular}{ccccc}
    \toprule
         \textbf{Algorithm} & \textbf{Assumptions} & \textbf{Measure} & \textbf{Upper-Bound} ($T\rightarrow \infty$) & \textbf{Theorem} \\
    \midrule
        \multirow{3}{*}{\makecell{Integral \\ Greedy}} & \multirow{2}{*}{ Assumption \ref{assumption-greedy}} & multiplicative envy & $1+2\log\frac{1}{\varepsilon}$ & \Cref{thm-envy}\\
        \cmidrule(r){3-5}
        && competitive ratio \textit{w.r.t.} NW & ${\lambda\cdot(n!)^{1/n+\alpha}, \forall \alpha>0}^*$ & \Cref{thm-cr-upper}\\
        \cmidrule(r){2-5}
        & seed utility $\delta$ & utility ratio with seeds & $O(\log T)$ &\Cref{thm:seeded int greedy}\\
    \midrule    
        \multirow{2}{*}{PACE} & Assumption \ref{assumption-pace-1} & multiplicative envy & $1+2\log\frac{1}{\varepsilon}$ & \Cref{thm-envy-pace}\\
        \cmidrule(r){2-5}
        &Assumption \ref{assumption-pace-2}& competitive ratio \textit{w.r.t.} NW  & $\left(1+2\log \frac{1}{\varepsilon}\right) \frac{1}{c}$ & \Cref{thm-cr-pace}\\
    \bottomrule
    \end{tabular}
    
    \medskip 
    $^*$ The lower bound for online algorithms is at least $(n!)^{1/n}$ when $n\rightarrow \infty$.
    
    \caption{Summary of Results}
    \label{table-summary}
\end{table*}
\subsection{Related Work}
In this subsection, we review previous works that are most closely related to ours. An extensive review of other related works is provided in \Cref{appendix-related-work}.
\paragraph{The PACE Algorithm}
Our work is a direct generalization of the PACE algorithm (Pace According to Current Estimated utility)~\citep{gao2021online,liao2022nonstationary} to adversarial inputs. We will review PACE in detail in \Cref{section-setup}.

\paragraph{Online Fair Division}
For maximizing Nash welfare in the online setting, the competitive ratio with respect to the offline optimal allocation is trivially $\Omega(n)$ and $\Omega(T)$, when the input is completely adversarial~\citep{banerjee2022online}. These pessimistic results under arbitrary input motivate the introduction of assumptions. \citet{azar2010allocate} adopted an assumption that we also make: that the minimum \textit{nonzero} valuation of each agent is at least $\varepsilon$ times the largest. 
However, their analysis does not remove the dependence on $T$ in their upper bound of the competitive ratio $O\left(\log \frac{nT}{\varepsilon}\right)$, meaning that the ratio is unbounded in the online setting with a fixed number of agents and an unbounded horizon.
\citet{banerjee2022online} assumes extra prior knowledge of the monopolistic utility of each agent, and gives an $O(\log n)$ and $O(\log T)$-competitive algorithm; their algorithm involves allocating half of each item uniformly across agents; since we are interested in algorithms with asymptotic convergence guarantees on non-adversarial inputs, such an approach cannot be used. 
\citet{huang2022online} assumes the input to be $\lambda$-balanced or $\mu$-impartial, where $\lambda$ and $\mu$ characterize the desired properties of the input; their competitive ratio upper bounds are logarithmic in the parameter and $n$. However, the parameters still implicitly depend on the horizon length $T$. 

The above works on online NW maximization consider greedy-style algorithms as we do. In contrast to them, our work focuses on an asymptotic bound that does not depend on $T$ as $T\rightarrow \infty$. Moreover, while their algorithms can only deal with divisible items, our paper adopts integral allocation, and show that integral decision is enough for the convergence of multiplicative envy and competitive ratio \textit{w.r.t.} the optimal continuous allocation, given our assumptions.

\paragraph{Online Allocation with Resource Constraints} We also briefly discuss how this paper differs from existing works in \textit{online resource allocation}, where a sequence of \emph{requests} arrive over time, with each request consisting of a reward and cost function, and at each time step the algorithm must select a decision with the goal of maximizing the sum of rewards while satisfying long-term cost constraints on each resource. In that setting, strong best-of-many-worlds guarantees are known~\citep{balseiro2023best,celli2022best,castiglioni2023online}. In online resource allocation, the optimization objective is separable across timesteps, e.g., in the form of $\sum_{s=1}^T f_s(x)$~\citep{balseiro2023best}. Time-separability is crucial for the regret bounds in these works, as it enables translating dual regret to primal regret through weak duality. However, in online fair allocation with the Nash welfare objective, time-separability no longer holds, since we take the logarithm of the utility over time. Therefore, our results cannot be derived with similar techniques to those papers.
Indeed, the types of competitive-ratio guarantees achieved e.g. by \citet{balseiro2023best} are impossible in the online fair allocation setting, where hard input sequences are known~\citep{banerjee2022online,gao2021online}.
\section{Setup}
\label{section-setup}
\subsection{Online Fair Allocation}

Consider a problem instance with $n$ agents and $T$ items. For $i\in[n]$ and $t\in [T]$, let $v_{i}^t\geq 0$ be agent $i$'s value for a unit of item $t$. The input of our problem is a sequence of agent valuations $\bm{v} = (v_i^t)^{n\times T}$. We assume that for each item $j$ at least one agent values it, i.e. there exists an agent $i$ such that $v_i^t > 0$. Each agent $i\in [n]$ has a non-negative weight $B_i$, which can also be interpreted as a \textit{budget} of faux currency in a Fisher market~\citep{varian1974equity}. 
An allocation $\bm{x} = (x_i^t)^{n\times T}$ distributes each item to an agent, where $x_i^t$ is the amount of item $t$ that is allocated to agent $i$ gets item $t$. We assume each item has a unit supply. An allocation is \textit{feasible} if $\sum_{i\in[n]}x_{i}^t\leq 1$ for each $t$. An allocation is \textit{integral} if $x_i^t \in \{0,1\}$. For a feasible, integral allocation $x$, let $A_i = \{t: x_i^t = 1\}$ be the set of items allocated to agent $i$.

We assume additive, linear utility for all agents. That is to say, $U_i^t = \sum_{s=1}^t x_{i}^s v_i^s$, where $U_i^t$ is the utility agent $i$ derives from the first $t$ items. For a subset of items $A \subseteq [T]$, agent $i$'s total value on the bundle $A$ is denoted as $U_i(A) = \sum_{s\in A} v_i^s$. Agent $i$'s \textit{monopolistic utility} $V_i$ is defined as his total value for all items $V_i = U_i([T]) = \sum_{t=1}^T v_i^t.$

We focus on the online setting where items arrive sequentially, while the set of agents is fixed. An online allocation algorithm is one that makes an irrevocable choice to distribute the item in each round based only on the information of past rounds. Concretely, it maps the history $\mathcal{H}^t = \left\{(v_i^s)_{s=1}^t,(x_i^s)_{s=1}^{t-1}\right\}_{i=1}^n$ to a decision $(x_1^t, \cdots, x_n^t)$ such that $\sum_{i=1}^n x_i^t =1$. 

In this paper, we are interested in \textit{envy} and \textit{Nash welfare} (NW) as measures of fairness. The multiplicative envy of agent $i$ to agent $j$ is defined as the ratio between the utility that agent $i$ would get from the allocation $x_j$ of agent $j$ to the utility of their own allocation $x_i$, adjusted by their respective budget. As a criterion for fairness, it measures the extent to which an agent prefers someone else's bundle to his own.
\[ \mathrm{Envy}_{ij} = \frac{B_i}{B_j}\frac{\sum_{t=1}^T x_j^t v_i^t}{\sum_{t=1}^T x_i^t v_i^t}.\]

Notice that when the allocation is integral, the above definition becomes ${U_i(A_j)}/{U_i(A_i)}$.

The Nash welfare of an allocation is defined as the weighted geometric mean of all agents' utilities:
\[\mathrm{NW} = \prod_{i=1}^n \left(U_i^T\right)^{B_i/\sum B_j} .\]
Maximizing the geometric mean is a well-studied proxy for balancing fairness and efficiency in fair allocation problems. It is also equivalent to maximizing the objective of the Eisenberg-Gale convex program in a Fisher market. For a Nash welfare maximizing allocation $\bm{x}^\star = \{x^{\star, t}_i\}$, let $U_i^\star = \sum_{t=1}^T x^{\star, t}_i v_i^t$ be the utility of agent $i\in [n]$. Notice that the optimal allocation might not be unique. In this paper, we measure the performance in terms of Nash welfare maximization based on the \textit{competitive ratio} of our allocation, which is defined as the supremum of the ratio between the online allocation and an optimal offline NW-maximizing allocation, over the space of all possible inputs. 

While we allow the input sequence to be adversarial, we restrict our attention to a subset of adversarial input sequences, where we use $\mathcal{V}^T$ to denote the space of valid length $T$ input sequences. Concrete assumptions on $\mathcal{V}^T$ will be specified in \Cref{section-greedy} and \Cref{section-pace} before analyzing specific algorithms.  

We consider the asymptotic worst-case envy and competitive ratio (\textit{w.r.t.} Nash welfare) over all possible inputs $\mathcal{V}^T$ when $T\rightarrow \infty$. Concretely, we will investigate

$$ \lim_{T\rightarrow \infty} \sup_{v \in \mathcal{V}^T} \max_{i,j} \mathrm{Envy}_{ij}, \ \lim_{T\rightarrow \infty} \sup_{v \in \mathcal{V}^T} \prod_{i=1}^n \left(\frac{U_i^T}{U_i^\star}\right)^{B_i/\sum B_j}.$$

In our analysis, we will show that with proper assumptions, both measures converge with an asymptotic upper bound which is independent of $T$. Our upper bounds will be either constant or with a near-optimal order dependence on $n$.

We emphasize that both measures adopted in this paper, multiplicative envy, and competitive ratio, are defined as ratios, not differences. This is mainly because of the \textit{scale invariant} property of fair division, which is an important and desirable property for allocation algorithms: if an agent's values for all items are multiplied by a constant factor, the resulting allocation stays the same. The algorithms that we are interested in, together with NW-maximizing allocations, are all scale invariant. Hence, it is more useful to consider multiplicative performance measures, which are also invariant to valuation scaling.

\subsection{Algorithms}
We introduce the two major algorithms that we study in this paper:
the PACE (Pace According to Current Estimated Utility) algorithm~\citep{gao2021online,liao2022nonstationary}, and the integral greedy algorithm.
Moreover, we will discuss the ``greedy-based'' nature of PACE by showing its equivalence to the integral greedy algorithm under certain conditions.

Pseudocode for PACE is shown in Algorithm \ref{alg-pace}. In each round $t$, the agent utilities are revealed. Each agent then places a bid for the item, which is equal to their value for the item multiplied by the current pacing multiplier $\beta_i^t$. The whole item is allocated to the highest bidder, preferring the bidder with the smallest index when a tie occurs. Each agent then observes their realized utility at this round, and updates their current estimated utility. The pacing multiplier is updated to be the weight $B_i$ divided by the estimated utility, then projected to an interval $[a,b]$. 
\begin{algorithm}
\caption{PACE}
\label{alg-pace}

    \SetKwInput{KwInit}{Initialization}
    \KwIn{number of agents $n$, time horizon $T$, truncation parameters $a$ and $b$.}
    \KwInit{$U_i^0 = 0, \beta^1 = 1^n$.}
    \For{$t = 1, \cdots, T$,}{
        Agent $i$ bids $\beta_i^t v_i^t$. 
        
        The whole item $t$ is allocated to the highest bidder, with arbitrary tie-breaking:
        $$i^t := \min\arg \max_{i\in[n]} \beta_i^t v_i^t, \ \  x_i^t = \bm{1}(i =i^t).$$
        Agent $i$ updates his estimated utility
        $$ \Bar{u}_i^t= \frac{1}{t} \cdot x_i^t v_i^t + \frac{t-1}{t}\Bar{u}_i^{t-1}.$$
        Agent $i$ updates the pacing multiplier 
        $$ \beta_i^{t+1} = \prod_{[a, b]}\left(\frac{B_i}{\Bar{u}_i^t}\right)$$
    }
\end{algorithm}
\subsubsection{Performance Under Stochastic Input}
As shown by \citet{gao2021online}, PACE is an instantiation of stochastic unregularized dual averaging ~\citep{xiao2009dual} applied to the dual of the underlying allocation program where the supplies are given by the density of each item. With i.i.d. input, PACE converges to the equilibrium of a potentially infinite-dimensional Fisher market~\citep{gao2023infinite}, which is also closely related to the game-theoretic solution concept of pacing equilibrium~\citep{conitzer2022pacing}. The agent utilities under PACE converge to those associated with the offline NW-maximizing allocation in the mean-square sense. 

\begin{theorem} \label{lemma-stationary} \textnormal{(Theorem 4. in \cite{gao2021online})} 
    Let $u_i^\star:= U_i^{\star}/T$ be agent $i$'s time-averaged utility under the Nash-welfare-maximizing allocation with supplies given by some underlying distribution. When the values in different rounds are i.i.d. chosen from the same distribution, it holds that
\begin{equation}
    \bm{E}\left[ \sum_{t=1}^T(\bar{u}_i^t - u_i^{\star})\right] \leq C \cdot  \frac{\log T}{T},
\end{equation}
where $C$ is a constant independent of $T$.
\end{theorem}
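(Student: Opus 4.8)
The plan is to recognize PACE as \emph{stochastic dual averaging} run on the (stochastic) dual of the Eisenberg--Gale / Fisher-market convex program, and then to exploit the logarithmic-regret guarantee that dual averaging enjoys when the objective already contains a built-in strongly convex term. Concretely, with i.i.d. values $v^t\sim\mathcal D$, I would consider the convex function
\[
 \psi(\beta)\;=\;\mathbb E_{v\sim\mathcal D}\Big[\max_{i\in[n]}\beta_i v_i\Big]\;-\;\sum_{i=1}^n B_i\log\beta_i
\]
on the box $[a,b]^n$. Its unique minimizer $\beta^\star$ satisfies the first-order condition $\mathbb E\big[v_i\,\mathbf{1}(i\text{ wins under }\beta^\star)\big]=B_i/\beta_i^\star$, and the left-hand side is exactly agent $i$'s expected utility in the Fisher/pacing equilibrium, i.e. $u_i^\star$; hence $\beta_i^\star=B_i/u_i^\star$. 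I would then verify that PACE is precisely the dual-averaging iterate for $\psi$: the realized-utility vector $g^t$ with $g^t_i=x_i^t v_i^t$ is a subgradient of $v\mapsto\max_i\beta_i v_i^t$ at $\beta^t$, and $\arg\min_{\beta\in[a,b]^n}\{\langle\sum_{s\le t}g^s,\beta\rangle-t\sum_i B_i\log\beta_i\}$ has $i$th coordinate $B_i/\bar u_i^t$, which is the PACE update (the truncation onto $[a,b]$ being the projection). So PACE is ``unregularized'' dual averaging on $\psi$, with $-\sum_i B_i\log\beta_i$ acting as an implicit, time-increasing regularizer.

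Next I would invoke the regret analysis of composite dual averaging \citep{xiao2009dual}. On $[a,b]^n$ the term $-\sum_i B_i\log\beta_i$ is $\sigma$-strongly convex with $\sigma\ge(\min_i B_i)/b^2$, and the stochastic subgradients are bounded, $\|g^t\|\le G$ (using bounded values). The standard FTRL potential-telescoping argument, together with a martingale-difference bound to replace the sampled subgradient $g^t$ by the true subgradient of $\mathbb E[\max_i\beta_i v_i]$, then gives $\mathbb E\big[\sum_{t=1}^T(\psi(\beta^t)-\psi(\beta^\star))\big]=O\big(\tfrac{G^2}{\sigma}\log T\big)$; the point is that the built-in strong convexity supplies the usual ``$1/t$ learning rate'' for free and removes the $O(\sqrt T)$ term. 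Since $\psi$ inherits $\sigma$-strong convexity, $\psi(\beta^t)-\psi(\beta^\star)\ge\tfrac\sigma2\|\beta^t-\beta^\star\|^2$, so $\mathbb E\big[\sum_{t=1}^T\|\beta^t-\beta^\star\|^2\big]=O(\log T)$.

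It then remains to pass from multipliers to utilities. One first argues that $\beta^\star$ lies in the interior of $[a,b]^n$ (bounded values keep all bang-per-buck ratios bounded, so the truncation is inactive at the optimum and does not bias the fixed point). Since $\bar u_i^t=B_i/\beta_i^{t+1}$ and $u_i^\star=B_i/\beta_i^\star$ with both multipliers in $[a,b]$, the map $\beta\mapsto B_i/\beta$ is bi-Lipschitz there, so $|\bar u_i^t-u_i^\star|\le(B_i/a^2)\,|\beta_i^{t+1}-\beta_i^\star|$. Combining with the previous display, normalizing by $T$, and (where needed) applying Cauchy--Schwarz to move from the squared sum to the unsquared one yields the claimed bound $\mathbb E\big[\sum_{t=1}^T(\bar u_i^t-u_i^\star)\big]\le C\log T/T$.

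The main obstacle I anticipate is the passage from ``dual'' convergence of the pacing multipliers $\beta^t$ — which is what dual averaging directly controls — to ``primal'' convergence of the utilities. Unlike the time-separable online-resource-allocation setting discussed in the introduction, weak duality cannot be used here, so this step must go through the envelope/KKT characterization of $\beta^\star$ together with the explicit identity $\bar u_i^t=B_i/\beta_i^{t+1}$. A secondary technical difficulty is making the regret argument rigorous despite the nonsmoothness of $v\mapsto\max_i\beta_i v_i$ and the stochasticity of its subgradients, and controlling the bias from truncating onto $[a,b]^n$; both are handled by standard stochastic-subgradient machinery once $\beta^\star$ is shown to be interior.
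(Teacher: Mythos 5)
This statement is Theorem~4 of \citet{gao2021online}, quoted in the paper without proof; the paper only sketches the methodology (PACE as stochastic dual averaging \`a la \citet{xiao2009dual} on the Eisenberg--Gale dual, with mean-square convergence of time-averaged utilities to the market-equilibrium utilities). Your high-level reconstruction matches that methodology: identifying $\psi(\beta)=\mathbb{E}[\max_i \beta_i v_i]-\sum_i B_i\log\beta_i$ as the dual objective, recognizing the $-\log$ term as supplying strong convexity and hence logarithmic regret, and translating from convergence of pacing multipliers $\beta^t$ to convergence of $\bar u_i^t = B_i/\beta_i^{t+1}$ via bi-Lipschitzness on $[a,b]$. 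That is the same route as the cited work.

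There is, however, a genuine gap in your final bookkeeping step, and it points to a likely transcription error in the quoted statement. You arrive at $\mathbb{E}\big[\sum_{t=1}^T\|\beta^t-\beta^\star\|^2\big]=O(\log T)$, i.e.\ a bound on a \emph{sum of squares}, and you then claim to ``apply Cauchy--Schwarz to move from the squared sum to the unsquared one'' to obtain $\mathbb{E}\big[\sum_{t=1}^T(\bar u_i^t-u_i^\star)\big]\le C\log T/T$. But Cauchy--Schwarz goes the wrong way here: from $\mathbb{E}\big[\sum_t |\bar u_i^t-u_i^\star|^2\big]=O(\log T)$ you only get $\mathbb{E}\big[\sum_t|\bar u_i^t-u_i^\star|\big]\le\sqrt{T\cdot O(\log T)}=O(\sqrt{T\log T})$, which does not go to zero and is nowhere near $\log T/T$. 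What your argument \emph{does} deliver, after dividing by $T$, is $\frac1T\,\mathbb{E}\big[\sum_t(\bar u_i^t-u_i^\star)^2\big]=O(\log T/T)$, i.e.\ a mean-square bound. This is consistent with the paper's own prose (``converge \dots\ in the mean-square sense''), and the original Theorem~4 of \citet{gao2021online} is indeed a mean-square statement; the version quoted here appears to have dropped a square and/or a $1/T$ normalization. So your argument is sound for the result that was actually proved, but the final sentence as you wrote it does not derive the inequality as literally stated, and Cauchy--Schwarz cannot be used in the direction you invoke it.
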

\citet{liao2022nonstationary} generalizes \Cref{lemma-stationary} to non-stationary inputs, which have a stochastic component yet change over time. Particularly, they consider three types of inputs: independent yet adversarially corrupted input, ergodic input, and periodic input, and show that for all three cases $\bm{E} \|u^T - u^\star\| \rightarrow 0$ is still preserved, up to errors due to non-stationarity. 

In the stationary and non-stationary cases, mean-square convergence of time-averaged utility implies an asymptotic competitive ratio of $1$ \textit{w.r.t} Nash welfare. For both cases, \citet{gao2021online,liao2022nonstationary} also provide a theoretical guarantee that PACE is asymptotically envy-free (again up to a non-stationarity error in the non-stationary case). In this paper, we provide bounds on PACE's performance with adversarial inputs with assumptions. Combined with the results from \citet{gao2021online,liao2022nonstationary}, this is the first ``best-of-many-worlds'' guarantee for online fair allocation under stationary, non-stationary, and adversarial inputs. 

While we make an attempt to take an algorithm for stochastic inputs and show its performance on adversarial input, the other direction seems to be difficult. It is hard for some algorithms that are designed for adversarial inputs to achieve optimality in stochastic scenarios, due to ``conservative'' routines that they adopt to deal with extremely bad inputs. For instance, \citet{banerjee2022online} divides half of the resources equally, which can be undesirable with stochastic input, see \Cref{example-stochastic}

\begin{example}
\label{example-stochastic}
    Consider an online scenario with $n$ agents and $n$ types of items $\{\theta_j\}_{j=1}^n$. Agent $n$'s value for a unit of type-$j$ item $v_i(\theta_j) = 1$ if $i = j$, and $v_i(\theta_j) = 0.01$ otherwise. In each round, the item type is drawn i.i.d. from a uniform distribution. In this scenario, 
    \begin{itemize}
        \item The equilibrium of the underlying Fisher market allocates all type-$i$ items to agent $i$. PACE converges to this equilibrium.
        \item The proposed algorithm proposed by \citep{banerjee2022online} allocates at most $\left(\frac{1}{2}+\frac{1}{n}\right)$ fraction of type-$i$ item to agent $i$, which is clearly not optimal.
    \end{itemize}
    
\end{example}

\subsubsection{Greedy Interpretation of PACE.} 
In the standard configuration of online fair allocation, where agents have equal weight $B_1 = \cdots = B_n = 1$, we now show that PACE, when the projection of multipliers is disregarded, can be interpreted as greedily maximizing Nash welfare with integral decisions.

To show this, the following optimization program \eqref{eq-greedy-opt} maximizes NW up to round $t$ greedily, given the history of previous $t-1$ rounds. Its decision is integral.
\begin{equation}
\label{eq-greedy-opt}
    \begin{aligned}
    \max_{x_{i}^t \in \{0,1\}} & \ \ \sum_{i=1}^n B_i \log U_{i}^t \\
    \text{s.t.} & \ \ U_{i}^t = \sum_{s=1}^{t-1} x_{i}^s v_{i}^s + x_{i}^t v_{i}^t ,\ \forall i\in [n] \\
                 & \ \ \sum_{i=1}^n x_{i}^t = 1 \\
    \end{aligned}
\end{equation}
The above program allocates the item to the agent $i^t$ that gives the maximum increment to the objective. 
This is equivalent to the following decision rule:
\begin{equation*}
    i^t \in \arg \max_{i\in [n]} B_i \log \left(1+\frac{v_i^t}{U_i^{t-1}}\right).
\end{equation*}
When agent weights are equal, $i^t$ is the agent $i$ that maximizes ${v_i^t}/{U_i^{t-1}}$; this coincides exactly with the decision of PACE with no projections.

This interpretation motivates us to first consider the behavior of a greedy algorithm (without projecting the multiplier), and then study PACE based on the insights derived for integral greedy. We focus on\textit{ weight-adapted integral greedy algorithm}, or simply integral greedy algorithm, as shown in \Cref{alg-greedy}.
\begin{algorithm}
\caption{Integral Greedy Algorithm (weight-adapted)}
\label{alg-greedy}

    \SetKwInput{KwInit}{Initialization}
    \KwIn{number of agents $n$, time horizon $T$}
    \KwInit{$U^0_i = 0$ for all $i$.}
    \For{$t = 1, \cdots, T$,}{
        Observe agent values for item $t$, and allocates the whole item to agent $i^t$:
        $$i^t := \min \bigg(\mathrm{arg} \max_{i\in[n]} \frac{B_i v_i^t}{U_i^{t-1}} \bigg), \ \  x_i^t = \bm{1}(i =i^t).$$
        Agent $i$ updates his current utility
        $$ U_i^t = U_i^{t-1}+ x_i^t v_i^t.$$
    }
\end{algorithm}

Algorithm \ref{alg-greedy} is equivalent to greedily maximizing the NW objective with integral decisions when agent weights are equal. However, with unequal weights the equivalence breaks down, so does its equivalence between PACE. For this consideration, in the following discussions, we assume equal weights $B_1 = \cdots =B_n$, which is a standard setting in the Nash welfare maximization and online fair allocation literature.
We will also remark on part of our results that can be generalized to unequal weights. 
Note that the PACE algorithm itself extends to unequal weights, and the stationary and nonstationary results extend as well.

With equal weight assumption, PACE can be regarded as greedy-based decision using $\Hat{U}_i^t$ as estimated utility. $\Hat{U}_i^t$ is a projected utility, where $\ell$ and $r$ are reinterpreted projection bounds:
\begin{equation*}
    i^t := \min\mathrm{arg} \max_{i\in[n]} \frac{v_i^t}{\Hat{U}_i^{t-1}}, \ \Hat{U}_{i}^{t} = \prod_{[\ell t, r t]} U_i^t.
\end{equation*}
\section{Analysis of the Integral Greedy Algorithm}
\label{section-greedy}
This section is devoted to show that integral greedy algorithm achieves convergence of multiplicative envy and competitive ratio, under reasonable assumptions. The missing proofs in this section are deferred to \Cref{appendix-greedy}.
\subsection{Assumptions on the Input}
We begin with introducing the assumptions on the input space, as well as the necessity of doing so. We focus on input space $\mathcal{V}_{\varepsilon}^T$, which is parametrized by $\varepsilon\in (0,1]$ and defined as follows:

\begin{assumption}
\label{assumption-greedy}
    For each integer $T$, $\mathcal{V}_{\varepsilon}^{T}$ is the set of inputs which satisfy the following requirements:
    \begin{itemize}
        \item The monopolistic utility of each agent is infinite: $V_i = \infty \ (T\rightarrow \infty).$ 
        \item For each $i\in[n]$ and $t\in[T]$, $v_{i}^t \in \{0\}\cup[\varepsilon, 1]$.
    \end{itemize}
\end{assumption}

The first requirement helps to avoid allocating nothing to some agent, since the decision is integral. We further require the number of nonzero valuations of each agent is not upper-bounded by a constant, so it is meaningful to consider the asymptotic sense in $T$. For the second requirement, we note that it is equivalent to assuming a constant bound on the ratio of the minimum nonzero item value and the maximum:
$$ \frac{\min_t\{v_i^t: v_i^t>0\}}{\max_t\{v_i^t\}} \geq \varepsilon, \ \forall i\in[n].$$
This equivalence is due to the scale-invariant property of integral greedy allocation, as well as the NW maximizing allocation.

If we do not include any extra assumptions on the input space, the worst-case envy and the competitive ratio is infinite, which makes the analysis of the integral greedy algorithm trivial and uninteresting. 
\begin{lemma}
\label{lemma-general}
When the values are arbitrarily chosen from $[0,1]$, it holds that
\begin{enumerate}
    \item \textnormal{~\citep{banerjee2022online}} Any online allocation algorithm has $\Omega(T)$ competitive ratio with respective to NW.
    \item The integral greedy algorithm has $\Omega(T)$ worst-case multiplicative envy.
\end{enumerate}
even when the first requirement in \Cref{assumption-greedy} is satisfied.
\end{lemma}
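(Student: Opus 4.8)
The lemma bundles two lower bounds, and the two parts should be handled separately. Part~1 is exactly the impossibility result of \citet{banerjee2022online}; nothing new is required there except to point out that their adversarial construction already falls inside the first-requirement subclass of \Cref{assumption-greedy}. Indeed their hard sequences can be taken so that every agent's valuations are bounded away from zero on a constant fraction of the $T$ rounds, whence each monopolistic utility $V_i$ grows linearly in $T$; so restricting to inputs with $V_i\to\infty$ does not rescue the competitive ratio. The substantive work is Part~2: building, for every $T$, an input with all valuations in $[0,1]$ on which integral greedy incurs multiplicative envy that is unbounded and in fact of order $T$, while still keeping every monopolistic utility divergent.

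The mechanism I would exploit is that greedy always allocates item $t$ to $\arg\max_i v_i^t/U_i^{t-1}$, so the only way the ratio can be manipulated over a long horizon is through a \emph{large spread} of nonzero values --- precisely the spread that \Cref{assumption-greedy}'s second requirement rules out. Concretely, if one feeds an agent a value stream that is geometrically increasing from a base $\gamma\asymp\rho^{-T}$ for a fixed $\rho>1$, then over all $T$ rounds that agent's cumulative utility stays $O(1)$ while its bang-per-buck ratio stays pinned near the constant $\rho-1$; greedy therefore keeps awarding it exactly those items, even though another agent values each of them at a fixed constant $c>0$ (that agent's bang-per-buck is only $\Theta(1)$, hence smaller). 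Using this ``locking'' behavior one forces a victim agent to end with only $O(1)$ total utility of its own, while the hoarding agent accumulates a bundle of $\Theta(T)$ items each worth $c$ to the victim; the resulting ratio $U_i(A_j)/U_i(A_i)$ is then $\Theta(T)$, giving the claimed envy.

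To additionally keep \emph{every} agent's monopolistic utility divergent --- so that the first requirement of \Cref{assumption-greedy} genuinely holds --- I would layer several such hoarder/victim relationships, e.g.\ cyclically over a handful of item types, so that each agent simultaneously plays the hoarder role for its own type and values at $c$ a linear-sized block of items won by a neighbor; this makes $V_i=\Theta(T)\to\infty$ for all $i$ without changing the envy computation. The step I expect to be the main obstacle, and the one deserving careful treatment, is the transient (warm-up) analysis: one must order the items and tune $c$, $\rho$, and $\gamma\asymp\rho^{-T}$ so that greedy drops into the intended locked allocation after only $O(1)$ initial rounds, and so that the constant number of items grabbed during that phase do not spoil the $O(1)$ utility bounds on the relevant agents. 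Tracking the $\arg\max$ comparisons while several agents' cumulative utilities are still tiny (and hence their bang-per-buck ratios are large and delicate) is the crux; once the steady state is established, the envy and monopolistic-utility estimates are routine.
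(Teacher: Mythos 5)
Your core construction for Part~2 is exactly the paper's: feed the ``hoarding'' agent a geometric value stream (the paper uses $v_2^t = a^{t-T}$ with $a>2$) while the victim values every item at~$1$. After the victim takes the first item, the hoarder's bang-per-buck stays near $a-1>1$, locking the victim out; the victim ends with utility~$1$ but values the hoarder's bundle at $T-1$, giving $\mathrm{Envy}_{12}=\Omega(T)$. For Part~1, like the paper you simply cite \citet{banerjee2022online}. So the approach is the same.

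You are right to flag that the bare 2-agent construction does not literally satisfy the first requirement of \Cref{assumption-greedy}: the hoarder's monopolistic utility is $V_2=\sum_t a^{t-T}=O(1)$, which stays bounded as $T\to\infty$. The paper's proof passes over this silently. However, your proposed cyclical-layering fix runs into a structural obstruction you underestimate. For a hoarder to lock out its victim you need the geometric ratio to dominate the victim's bang-per-buck, i.e.\ $\rho-1>c/u_{\mathrm{victim}}$. In the paper's instance this is fine because the victim has utility exactly $1$ and is \emph{not} itself a hoarder. If you stack the roles so that each agent simultaneously hoards a tiny-valued stream (keeping its own utility at $O(\gamma)$ with $\gamma\asymp\rho^{-T}$) and serves as a victim valuing a neighbor's items at $c=\Theta(1)$, then $c/u_{\mathrm{victim}}=\Theta(1/\gamma)$ is enormous, so the required $\rho$ explodes and the geometric range collides with the $[0,1]$ cap; the agent will in fact steal its neighbor's items rather than be locked out. ``Tuning $c,\rho,\gamma$'' does not obviously escape this, because the hoarder and victim roles place directly conflicting demands on the same agent's running utility. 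This is a gap, not just a deferred computation.

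The repair that the paper implicitly has in mind is much cheaper: keep the 2-agent instance for the first $T$ rounds, then append $T$ further rounds with $v_1^t=0$ and $v_2^t=1$. Greedy gives all appended items to agent~$2$ (agent~$1$ values them at~$0$), so $V_2$ becomes $\Theta(T)\to\infty$ while $U_1(A_1)=1$ and $U_1(A_2)=T-1$ are unchanged. On a horizon of length $2T$ the envy is still $\Omega(T)$, and now both monopolistic utilities diverge, certifying the ``even when the first requirement is satisfied'' clause without touching the delicate bang-per-buck arithmetic.
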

\begin{proof} 
We construct a sequence of instances where $\mathrm{Envy}_{12} = \Omega(T)$. Given horizon length $T$, fix agent $1$'s valuation to $1$ in all rounds. For agent $2$, we set $v_2^t = a^{t-T} \leq 1$ for some $a>2$. In the integral greedy allocation, agent $1$ receives only one item and has total utility $1$. This proves the lower bound of multiplicative envy.
\end{proof}

The above hard instance features exponential growth in agent $2$'s valuation. Hard instances with similar spirit for the continuous problem have previously been given by \cite{banerjee2022online}. The vulnerability of allocation algorithms under such instances can be explained by their non-anticipating nature: it does not know the future, so it fails to recognize that agent $2$'s valuation up to current rounds is only a negligible fraction of the entire horizon. More generally, it is difficult for online algorithms to distinguish agents that are hard to satisfy in the future, with those who are easily satisfied. 

However, we note that such adversarial instances are arguably not ``natural.'' For allocation problem in a real-world market, items are usually similar in nature, e.g., they are all food or ad slots. It is unlikely for an agent to have exponentially diverging nonzero values on these items. 
By requiring the ratio of minimal and maximal nonzero value to be bounded by $\varepsilon$, \cref{assumption-greedy} rules out extreme cases where values from a single agent diverge drastically. In the rest of this section, we will show that once the above assumptions are introduced, both multiplicative envy and competitive ratio of the greedy algorithm are independent of $T$ asymptotically, i.e., converge to a constant (which depends only on $n$ and $\varepsilon$).

\subsection{Envy Analysis for Greedy}
In this subsection we analyze the worst-case envy of agents in integral greedy allocation. We observe that, the envy between any pair of agents can be reduced to the case $n=2$ by the following lemma, which characterized an ``induction'' structure of the integral greedy allocation. 
\begin{lemma}[Inductive structure of greedy allocation.]
    \label{lemma-induction-property}
    For any $n$-agent instance $\bm v$ and agent subset $I \subseteq [n]$, define a new instance $\bm{v}|_{I}$ obtained by transforming $\bm v$ as:
    \begin{itemize}
        \item Remove all agents that are not in $I$ (by setting their values to $0$ on all items).
        \item Remove all items that are not in $\bigcup_{i\in I} A_i$ (by setting all agents' value to $0$ on them).
    \end{itemize}
    Then, the resulting allocation is the same when the integral greedy algorithm is run on $\bm{v}$ and $\bm{v}|_{I}$. 
\end{lemma}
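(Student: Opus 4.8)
The plan is to prove the statement by induction on the round index, processing items in their original arrival order. First I would run the integral greedy algorithm of \Cref{alg-greedy} on $\bm v$ and record the allocation $A_i$ and the utility trajectory $(U_i^t)_t$ of each agent, and set $S := \bigcup_{i\in I} A_i$, the set of items retained in $\bm v|_I$. The single observation that drives everything is that $A_i \subseteq S$ for every $i \in I$: an agent in $I$ never receives an item that gets removed, so deleting those items cannot disturb its history. Then I would run the algorithm on $\bm v|_I$, writing $\tilde A_i$ and $\tilde U_i^t$ for the resulting quantities, and take as the induction hypothesis at time $t$: for every $i \in I$ the items among $\{1,\dots,t\}$ assigned to $i$ coincide in the two runs, and consequently $U_i^t = \tilde U_i^t$ for all $i \in I$ (items outside $S$ contribute $0$ to both sides for agents in $I$, so the two readings of ``remove'' agree here).

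The base case $t = 0$ is immediate. For the inductive step at round $t+1$ I would split on whether $t+1 \in S$. If $t+1 \notin S$, then in the original run the item goes to some agent outside $I$, while in $\bm v|_I$ every agent values it at $0$; in either reading of ``remove,'' no agent in $I$ has its utility or its set of $S$-items changed, so the hypothesis persists. If $t+1 \in S$, the original run assigns the item to some $j \in I$. For each $i \in I$ we have $\tilde U_i^t = U_i^t$ by the hypothesis and $\tilde v_i^{t+1} = v_i^{t+1}$ because $t+1 \in S$, so the greedy score $B_i v_i^{t+1}/U_i^t$ of every $i \in I$ is literally the same number in the two runs. Since $j$ maximizes this score over all of $[n] \supseteq I$, it also maximizes it over $I$; hence the maximal score over $I$ equals that over $[n]$, the set of $I$-maximizers is exactly $I$ intersected with the set of $[n]$-maximizers, and the smallest-index tie-break selects the same $j$ in both runs. (Agents outside $I$ have value $0$ on item $t+1$ in $\bm v|_I$; since every item is valued by someone, the winner $j$ has $v_j^{t+1} > 0$, so those agents are never the maximizer.) Assigning item $t+1$ to $j$ in both runs preserves the hypothesis, since $U_j^{t+1} = U_j^t + v_j^{t+1} = \tilde U_j^t + \tilde v_j^{t+1} = \tilde U_j^{t+1}$ and all other agents in $I$ are untouched.

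Taking $t = T$ yields $A_i = \tilde A_i$ for every $i \in I$, so every item of $S$ is handed to the same agent in both executions, which is exactly the assertion. The induction itself is routine; the only place that genuinely needs care is the tie-break bookkeeping in the middle paragraph — one must verify that narrowing the feasible set from $[n]$ to $I$ changes neither the maximal score nor the minimizing index — and this is precisely where $A_i \subseteq S$ is used, since it forces the winner $j$ to lie in $I$. Standard conventions ($c/0 = +\infty$ for $c > 0$, and treating an agent with no value for the current item as a non-maximizer), together with the standing assumption that every item is valued by at least one agent, dispose of the division-by-zero and all-zero corner cases without affecting the structure of the argument.
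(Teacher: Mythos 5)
Your proof is correct. The paper itself gives no proof of Lemma~\ref{lemma-induction-property} (it does not appear in the appendix of missing proofs), so there is nothing to compare against, but your induction on rounds is the natural and essentially unique argument: the key observation that $A_i \subseteq \bigcup_{i\in I} A_i$ for $i\in I$, combined with the fact that when item $t+1$ is retained the greedy scores $B_i v_i^{t+1}/U_i^t$ of all agents in $I$ are numerically identical in the two runs while agents outside $I$ have zero value and hence cannot win, is exactly what makes the allocations agree; your handling of the tie-break (narrowing the maximizer set from $[n]$ to $I$ preserves both the maximum score and the minimizing index, since the original winner $j$ lies in $I$) and of the degenerate $0/0$ score (ruled out for the winner because every item is valued by at least one agent) disposes of the corner cases cleanly.
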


We show that the multiplicative envy of the integral greedy algorithm is upper bounded by $1$ plus a logarithmic term in $1/\varepsilon$, which also generalizes to unequal weights.
\begin{theorem}[Upper Bound for Multiplicative Envy]
\label{thm-envy}
Even with unequal agent weights, for inputs $\bm{v}$ satisfying Assumption \ref{assumption-greedy} and any $i, j$, 
\begin{equation*}
\sup_{v\in \mathcal{V}_{\varepsilon}^T}\mathrm{Envy}_{ij} 
\leq 1 + 2\log \frac{1}{\varepsilon} + O\left(\frac{1}{T}\right).
\end{equation*}

\end{theorem}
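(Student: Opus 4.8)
The plan is to use the inductive structure of the greedy allocation to collapse the $n$‑agent envy bound to the two‑agent case, and then run an amortized (telescoping) argument on the two utility trajectories, using the $\varepsilon$‑boundedness of values to pin the relevant ``window'' of agent $j$'s log‑utility.

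\textbf{Step 1: reduction to two agents.} Fix a pair $i,j$. Apply \Cref{lemma-induction-property} with $I=\{i,j\}$: the greedy run on $\bm v$ and on the two‑agent instance $\bm v|_{\{i,j\}}$ produce the same bundles $A_i$ and $A_j$, and the values $v_i^t$ on items of $A_i\cup A_j$ are unchanged. Since $\mathrm{Envy}_{ij}=\frac{B_i}{B_j}\frac{U_i(A_j)}{U_i(A_i)}$ depends only on $A_i,A_j$, the weights, and the values $v_i^{\cdot}$ on $A_i\cup A_j$, it suffices to bound $\mathrm{Envy}_{ij}$ when only two agents are present; from now on $i$ is the envier and $j$ the enviee.

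\textbf{Step 2: a pointwise inequality at the rounds $j$ wins.} Only items $t\in A_j$ with $v_i^t>0$ (hence $v_i^t\ge\varepsilon$, by \Cref{assumption-greedy}) contribute to $U_i(A_j)$. For such a $t$, the greedy rule and the fact that agent $j$ received item $t$ give $B_j v_j^t U_i^{t-1}\ge B_i v_i^t U_j^{t-1}$; in particular $v_j^t\ge\varepsilon$, and
\[
v_i^t\ \le\ \frac{B_j}{B_i}\,\frac{v_j^t}{U_j^{t-1}}\,U_i^{t-1},
\qquad
\frac{U_j^{t-1}}{U_i^{t-1}}\ \le\ \frac{B_j}{B_i\varepsilon}.
\]
Call such a contributing round \emph{doubling} if $v_j^t>U_j^{t-1}$ and \emph{regular} otherwise. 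A doubling round has $U_j^{t-1}<v_j^t\le 1$ and at least doubles $U_j$, while $U_j\ge\varepsilon$ once agent $j$ holds one positive item, so there are at most $\lceil\log_2\tfrac1\varepsilon\rceil+1$ doubling rounds, each adding at most $1$ to $U_i(A_j)$; for a regular round $v_j^t/U_j^{t-1}\le 2\log\!\big(U_j^{t}/U_j^{t-1}\big)$ by the elementary bound $u\le 2\log(1+u)$ on $[0,1]$, and $U_j^{t-1}\ge\varepsilon$. Hence
\[
U_i(A_j)\ \le\ O\!\Big(\log\tfrac1\varepsilon\Big)\ +\ \frac{2B_j}{B_i}\sum_{t\ \mathrm{regular}}U_i^{t-1}\big(\log U_j^{t}-\log U_j^{t-1}\big).
\]

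\textbf{Step 3: the telescoping bound (the crux).} It remains to show the last sum is at most $\frac{B_i}{B_j}\big(\log\tfrac1\varepsilon+O(1)\big)U_i^{T}+O(1)$; then dividing by $U_i(A_i)=U_i^{T}$ and multiplying by $B_i/B_j$ yields $\mathrm{Envy}_{ij}\le 1+2\log\tfrac1\varepsilon+O(1/T)$, the error being the residual boundary terms (which indeed vanish because $U_i^{T}\to\infty$ under \Cref{assumption-greedy}). Adding the nonnegative contributions of the remaining rounds in $A_j$ and letting $s$ be the last regular round, summation by parts (started after agent $j$'s first, necessarily doubling, item $t_1$, so that no singularity at $U_j=0$ appears) gives
\[
\sum_{t\ \mathrm{regular}}U_i^{t-1}\big(\log U_j^{t}-\log U_j^{t-1}\big)
\ \le\
U_i^{s-1}\log U_j^{s}\ -\ U_i^{t_1}\log U_j^{t_1}\ -\!\!\sum_{t\in A_i,\ t_1<t<s}\!\! v_i^t\,\log U_j^{t}.
\]
Using $U_j^{s}\le 2U_j^{s-1}\le\frac{2B_j}{B_i\varepsilon}U_i^{s-1}$ from Step~2, the first term is $\le U_i^{s-1}\log U_i^{s-1}+U_i^{s-1}\log\frac{2B_j}{B_i\varepsilon}$, so the whole expression is controlled once the subtracted sum $\sum_{t\in A_i}v_i^t\log U_j^{t}$ cancels the divergent piece $U_i^{s-1}\log U_i^{s-1}$ up to an $O(1)\cdot U_i^{s-1}$ remainder — equivalently, once one shows that only a $\big(\log\tfrac1\varepsilon+O(1)\big)$‑wide band of $\log U_j$ is ``seen'' with nontrivial weight. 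This is exactly the point where the two‑sided coupling of Step~2 is needed: whenever $U_j^{t}$ is far below $U_j^{s}$, agent $i$'s accumulated utility $U_i^{t-1}$ is correspondingly far below $U_i^{s-1}$, so agent $i$ accrues essentially all of its utility ``late'', when $\log U_j^{s}-\log U_j^{t}=O(\log\tfrac1\varepsilon)$. Turning this ``agent $i$ accumulates late'' phenomenon into the stated inequality is the main technical obstacle; the doubling rounds and the $t_1$ term supply the additive ``$1$'', and the factor $2$ is the slack inherited from $u\le 2\log(1+u)$. Finally, none of Steps~1--3 used equality of weights, which is why the bound holds even with unequal $B_i$.
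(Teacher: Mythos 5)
There is a genuine gap, and it is not just a missing detail: the bound you claim in Step~3 is actually false, so the argument cannot be repaired without replacing the central inequality of Step~2. To see the problem, run your scheme on the worst‑case instance the paper uses for \Cref{thm-envy-lower} (with equal weights): agent $i$ (the envier) receives $T_0$ items of value $1$ in a first phase, and thereafter all items go to agent $j$, whose utility climbs from $\varepsilon$ up to $T_0/\varepsilon$. Here $U_i^{t-1}\equiv T_0$ for every regular round, the doubling set is a single round, and the regular‑round sum telescopes to
$\sum_{t\,\mathrm{regular}} U_i^{t-1}\bigl(\log U_j^{t}-\log U_j^{t-1}\bigr)\approx T_0\bigl(\log T_0+2\log\tfrac1\varepsilon\bigr)$,
which is $\Theta(U_i^T\log T)$, not $O(U_i^T)$. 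Your summation‑by‑parts expression gives no help here either (the subtracted term $\sum_{t\in A_i,t_1<t<s}v_i^t\log U_j^t$ is empty because all of $A_i$ precedes $A_j$), so the claimed inequality ``$\sum \le \frac{B_i}{B_j}(\log\frac1\varepsilon+O(1))U_i^T+O(1)$'' fails by an unbounded factor. The root cause is that the pointwise bound $v_i^t\le \frac{B_j}{B_i}\frac{v_j^t}{U_j^{t-1}}U_i^{t-1}$ is very loose early in agent $j$'s accumulation: at those rounds the right-hand side is $\Theta(T_0/k)$ while $v_i^t\le 1$, and summing the slack accrues a $\log T$ factor. You record the constraint $v_i^t/v_j^t\le 1/\varepsilon$ (from $v_i^t\le 1$, $v_j^t\ge\varepsilon$) but never use it to cap the ratio; your ``doubling'' rounds truncate only when $U_j^{t-1}<1$, which is the wrong regime (the relevant cutoff is $U_j^{t-1}\lesssim \varepsilon U_i^{t-1}$). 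Separately, even if Step~3 were true, the $B_i/B_j$ factor does not cancel as you wrote: plugging your Step~3 bound into Step~2 leaves a residual $B_i/B_j$ multiplying the main term, so the conclusion would not be weight‑independent, and the $u\le 2\log(1+u)$ slack plus the unspecified $O(1)$ constants would not reproduce the exact additive ``$1$'' in the theorem.

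The paper's proof avoids all of this by a different route. After a reduction to two agents it \emph{transforms} the instance so all of agent $i$'s items come first (making $U_i^{t-1}$ a constant $V$ on the remaining rounds), and then bounds $\frac{v_i^t}{v_j^t}\le q(U_j^{t-1})$ with $q(u)=\min\{\frac{1}{\varepsilon},\frac{V}{u}\}$ on $[0,V/\varepsilon]$ and $q\equiv 0$ beyond; the key structural fact is that $q$ is eventually zero (agent $j$ simply cannot win contributing items once $U_j^{t-1}>V/\varepsilon$), so $\sum_t v_j^t q(U_j^{t-1})$ is a Riemann sum for the finite integral $\int_0^{V/\varepsilon}q=V(1+2\log\frac1\varepsilon)$, with discretization error $O(1/T)$. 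Note the sum is taken over $v_j^t\cdot(\text{ratio})$ rather than converted to log-increments, which is what keeps the constant exactly $1+2\log\frac1\varepsilon$ and avoids the $\log T$ divergence you would incur. If you want to salvage your decomposition, you would need to (a) use the hard cutoff $\frac{U_j^{t-1}}{U_i^{t-1}}\le \frac{B_j}{B_i\varepsilon}$ to truncate the sum, (b) handle the regime $U_j^{t-1}\le \varepsilon U_i^{t-1}$ with the $1/\varepsilon$ cap rather than ``doubling'' rounds, and (c) avoid the factor-of-$2$ slack from $u\le 2\log(1+u)$; at that point you will essentially have reconstructed the paper's integral argument.
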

\begin{paragraph}{Proof Sketch.}
    Due to \Cref{lemma-induction-property} and symmetry, it suffices to consider $\mathrm{Envy}_{21}$ in $2$-agent inputs. In the proof sketch, we assume $B_1 = B_2 = 1$ for simplicity. We transform the input by:
    \begin{enumerate}
        \item Set agent $1$'s valuation for all items in $A_2$ to $0$. 
        \item Move all items in $A_2$ to the beginning of the input sequence, and all items in $A_1$ to the end of the input sequence (preserving order).
    \end{enumerate}
    One can show that the agent's utilities under the budget-adapted greedy algorithm are invariant to this transformation. Therefore, it suffices to consider only the transformed inputs, where agent $2$ receives his entire share only in beginning $R$ rounds. 
    
    To find the worst-case envy for transformed inputs, a question from an adversarial point of view will be: given $(U_1^R, U_2^R) = (0, U)$, how can we design a value sequence for the coming rounds, such that agent $2$'s total valuation on items over $S$ rounds is maximized, while ensuring that nothing is allocated to agent $2$? This can be characterized by an optimization program:  
\begin{equation}
\label{eq-optimization-canonical}
    \begin{aligned}
    \max_{v_1^t, v_2^t} & \ \ \frac{1}{ U}\sum_{t=1}^{\infty} v_2^t \\
    \text{s.t.} & \ \ v_1^t, v_2^t \in \{0\}\cup[\varepsilon, 1], \ \ &\forall t \geq 1\\
                & \ \ \frac{v_2^t}{v_1^t} \leq \frac{U}{U_1^{t-1}}, \ \ &\forall t \geq 1\\
                & \ \ U_1^t \geq \sum_{s=1}^{t} v_{1}^s, \ \ &\forall t \geq 1 \\
    \end{aligned}
\end{equation}
Notice that in \eqref{eq-optimization-canonical} we re-index the rounds by starting with index $1$ at round $R+1$.
We call \eqref{eq-optimization-canonical} a \textit{canonical optimization program} for multiplicative envy maximization, parametrized by $U$. We observe that $v_2^t/v_1^t$ can be upper bounded by $q(U_1^{t-1})$, defined as
\begin{equation*}
    q(U_1^{t-1}) = \begin{cases}
        \min\left\{\frac{U}{U_1^{t-1}}, \frac{1}{\varepsilon} \right\}, & 0\leq U_1^{t-1} \leq \frac{U}{\varepsilon} \\
        0, &U_1^{t-1} > \frac{U}{\varepsilon} 
    \end{cases}.
\end{equation*}

    We can then give an upper-bound of the objective of the canonical optimization program \eqref{eq-optimization-canonical},
\begin{equation}
\label{eq-bound-with-q}
    \frac{1}{U}\sum_{t=1}^\infty v_2^t = \frac{1}{U}\sum_{t=1}^\infty v_1^t \cdot \frac{v_2^t}{v_1^t} \leq \frac{1}{U}\sum_{t=1}^\infty v_1^t \cdot q(U_1^{t-1}).
\end{equation}
    Each increment $v_1^t \leq 1$ is small when $U\rightarrow \infty$. One can show that the right hand side of \eqref{eq-bound-with-q} converges to a definite integral asymptotically as $U\rightarrow \infty$:
    $$\frac{1}{U}\sum_{t=1}^\infty v_1^t \cdot q(U_1^{t-1}) \rightarrow \frac{1}{U} \int_0^{U/\varepsilon} q(U_1) \mathrm{d}U_1 
    = 1 +2 \log \frac{1}{\varepsilon}.$$
    The convergence rate on the order of $O(1/T)$ is then achieved by more carefully calculating the above upper bound. 
    $\hfill \square$
\end{paragraph}    

    The analysis of the canonical optimization program \eqref{eq-optimization-canonical} also gives an upper bound on the number of items that agent $2$ ``envies of'' agent $1$, concretely, items in $A_1$ with nonzero value from agent $2$. \Cref{lemma-number-of-items} will be useful in \Cref{section-pace}. 

\begin{lemma}
\label{lemma-number-of-items}
    Let $C_i:=\{t: v_i^t>0\}$ be the set of items on which agent $i$ has non-zero value. When agents have equal weights, for $\bm{v}\in \mathcal{V}_{\varepsilon}^T$ and any $i,j \in [n]$, 
    $$ \lim_{T\rightarrow \infty} \sup_{v\in V_{\varepsilon}^T} \frac{|C_i\cap A_j|}{U_i(A_i)} \leq \frac{1}{\varepsilon}\left(1+\log \frac{1}{\varepsilon}\right).$$
    
\end{lemma}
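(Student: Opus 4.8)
\emph{Proof plan.} The plan is to prove this as a counting analogue of the envy analysis behind \Cref{thm-envy} (equivalently, of the canonical program \eqref{eq-optimization-canonical}): there, each item $t\in A_j$ is charged a \emph{value} increment with $v_i^t\le v_j^t\,q(U_j^{t-1})$ for a truncated-reciprocal weight $q$; here I would instead charge one \emph{unit} to each item of $C_i\cap A_j$ and prove a per-item bound $v_j^t\,\tilde q(U_j^{t-1})\ge 1$ for a closely related weight $\tilde q$, then sum over $A_j$. (One could first reduce to a two-agent instance via \Cref{lemma-induction-property}, but the estimate below goes through directly for $n$ agents.) Fix $i,j$ and set $U:=U_i(A_i)$. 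By the first part of Assumption~\ref{assumption-greedy} agent $i$ receives infinitely many items, and any item valued positively by some agent is awarded to an agent who values it by at least $\varepsilon$; hence $U\to\infty$ as $T\to\infty$, so the limit in the statement is meaningful.

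Define
\[
\tilde q(u):=\frac1\varepsilon\ \ (u\le U),\qquad \tilde q(u):=\frac{U}{\varepsilon u}\ \ \Bigl(U<u\le\frac U\varepsilon\Bigr),\qquad \tilde q(u):=0\ \ \Bigl(u>\frac U\varepsilon\Bigr),
\]
which is non-increasing, with $\int_{-1}^{\infty}\tilde q=\tfrac1\varepsilon+\tfrac1\varepsilon\int_0^{U}1\,\mathrm du+\tfrac1\varepsilon\int_{U}^{U/\varepsilon}\tfrac Uu\,\mathrm du=\tfrac1\varepsilon+\tfrac U\varepsilon\bigl(1+\log\tfrac1\varepsilon\bigr)$. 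The core step is that $v_j^t\,\tilde q(U_j^{t-1})\ge 1$ for every $t\in C_i\cap A_j$. Indeed $v_i^t\ge\varepsilon$ since $t\in C_i$; also $v_j^t\ge\varepsilon$, for if $v_j^t=0$ then agent $j$ has zero bang-per-buck while agent $i$'s is strictly positive, so $j$ cannot win round $t$. Since $j$ does win, $v_i^t/v_j^t\le U_i^{t-1}/U_j^{t-1}\le U/U_j^{t-1}$ (using $U_i^{t-1}\le U$), so $v_j^t\ge\varepsilon\,U_j^{t-1}/U$, which together with $v_j^t\le1$ also forces $U_j^{t-1}\le U/\varepsilon$. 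Combining with $v_j^t\ge\varepsilon$ yields $v_j^t\ge\varepsilon\max\{1,U_j^{t-1}/U\}$, and since $\tilde q(U_j^{t-1})=\bigl(\varepsilon\max\{1,U_j^{t-1}/U\}\bigr)^{-1}$ for $U_j^{t-1}\le U/\varepsilon$, the bound follows.

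Summing over $t\in C_i\cap A_j\subseteq A_j$ and adding the remaining nonnegative terms,
\[
|C_i\cap A_j|\ \le\ \sum_{t\in A_j}v_j^t\,\tilde q\bigl(U_j^{t-1}\bigr).
\]
Since $U_j$ is non-decreasing and changes only on items of $A_j$, in jumps $v_j^t\le1$, the right-hand side is a left-Riemann sum of $\int\tilde q$; using that $\tilde q$ is non-increasing and $v_j^t\le1$, one has $\tilde q(U_j^{t-1})\le\tfrac1{v_j^t}\int_{U_j^{t-1}}^{U_j^{t}}\tilde q(u-1)\,\mathrm du$, and summing these telescopes to at most $\int_{-1}^{\infty}\tilde q=\tfrac1\varepsilon+\tfrac U\varepsilon\bigl(1+\log\tfrac1\varepsilon\bigr)$. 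Dividing by $U=U_i(A_i)$ and letting $T\to\infty$ (so $U\to\infty$) gives $\lim_{T\to\infty}\sup_{v\in\mathcal{V}_{\varepsilon}^{T}}|C_i\cap A_j|/U_i(A_i)\le\tfrac1\varepsilon\bigl(1+\log\tfrac1\varepsilon\bigr)$.

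I expect the only genuinely delicate point to be this last one: converting the left-Riemann sum $\sum_{t\in A_j}v_j^t\,\tilde q(U_j^{t-1})$ into $\int\tilde q$ with the sharp constant, exactly the issue flagged in the proof sketch of \Cref{thm-envy}. Because $\tilde q$ is non-increasing the naive left sum over-estimates the integral, so the one-step shift above and the handling of the first item of $A_j$ (where $U_j^{t-1}=0$) need care, and these are also what produce the vanishing $O(1/U)$ correction. Everything else — the per-item inequality and non-starvation (hence $U\to\infty$) — is routine.
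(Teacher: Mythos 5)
Your proof is correct, and it takes a genuinely different and, in my view, cleaner route than the paper's. The paper first invokes \Cref{lemma-induction-property} to reduce to a two-agent instance, reuses the transformation and canonical program \eqref{eq-optimization-canonical} from the proof of \Cref{thm-envy}, and then splits the post-$R$ horizon into two phases according to whether $U_1^t \le U$ or $U < U_1^t \le U/\varepsilon$, bounding the count in the first phase by a pigeonhole on the $\ge\varepsilon$ per-round increments of $U_1$ and in the second by combining the value bound from the canonical program with $v_2^t\ge\varepsilon$. You instead extract a clean per-item lower bound $v_j^t \ge \varepsilon\max\{1, U_j^{t-1}/U\}$ directly from the greedy winning rule (using only $v_i^t\ge\varepsilon$, $U_i^{t-1}\le U$, and $v_j^t\le 1$), which packages the paper's two phases into a single non-increasing weight $\tilde q$ and makes $|C_i\cap A_j|\le\sum_{t\in A_j}v_j^t\,\tilde q(U_j^{t-1})$ immediate. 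The one-step-shifted Riemann sum argument is the right way to domesticate the left sum against the integral and yields the correct additive $O(1/U)$ error. What you gain is a self-contained proof with no two-agent reduction, no input transformation, and no dependence on the canonical program; what you give up is the reuse of already-built machinery, which is the paper's organizing theme. One small point worth tightening: the step ``agent $i$ receives infinitely many items, hence $U\to\infty$'' deserves a sentence of justification (it follows because a bounded $U_i(A_i)$ would force $i$'s bang-per-buck to stay bounded below at rounds in $C_i$, which the other agents cannot outbid indefinitely since each win raises their utility by $\ge\varepsilon$), and the claim $v_j^t\ge\varepsilon$ for $t\in C_i\cap A_j$ implicitly assumes the greedy tie-breaking never awards an item to an agent with zero value when some agent values it positively; both are mild and the paper makes the same tacit assumptions.
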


    Next, we complement \Cref{thm-envy} with a lower bound showing that the bound $1+2\log 1/\varepsilon$ on envy is tight for the integral greedy allocation. 
    
\begin{theorem}[Lower Bound for Multiplicative Envy]
\label{thm-envy-lower}
For input $\bm{v}\in\mathcal{V}_{\varepsilon}^T$, we have for any $i, j\in [n]$, 
$$  \lim_{T\rightarrow \infty} \sup_{v \in \mathcal{V}_{\varepsilon}^T} \frac{B_i}{B_j} \cdot  \frac{U_i(A_j)}{U_i(A_i)} \geq 1 + 2\log \frac{1}{\varepsilon}.$$
\end{theorem}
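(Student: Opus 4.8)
The plan is to exhibit, for each horizon $T$, an explicit instance whose multiplicative envy tends to $1+2\log\frac1\varepsilon$, essentially by realizing the worst case of the canonical optimization program \eqref{eq-optimization-canonical} and of the integral $\frac1U\int_0^{U/\varepsilon}q(U_1)\,\mathrm dU_1 = 1+2\log\frac1\varepsilon$ from the proof of \Cref{thm-envy}. By relabeling agents it suffices to produce, for every pair $(i,j)$, an instance making $\mathrm{Envy}_{ij}$ large; by \Cref{lemma-induction-property} it further suffices to build a two-agent instance realizing $\mathrm{Envy}_{21}\to 1+2\log\frac1\varepsilon$ and then embed it among $n$ agents, appending for each dummy agent $k\in\{3,\dots,n\}$ a private block of items valued $1$ by $k$ and $0$ by everyone else. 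Those blocks are allocated entirely to the respective dummies and leave the allocation of agents $1,2$ unchanged (this is exactly $\bm v|_{\{1,2\}}$ equalling the two-agent instance), while making $V_k\to\infty$ for every $k$, so the embedded instance lies in $\mathcal V_\varepsilon^T$.

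The two-agent instance is parametrized by $U\to\infty$ and has four stages, all designed so that greedy's comparison $v_1^t/U_1^{t-1}\ \ge\ v_2^t/U_2^{t-1}$ holds and the item goes to agent $1$ (agent $2$ keeps exactly what Stage~1 gives it). \emph{Stage 1}: $U$ items with $v_1^t=0$, $v_2^t=1$, all allocated to agent $2$, leaving $U_1=0$, $U_2=U$. \emph{Stage 2a} (while $U_1^{t-1}\le U\varepsilon$): $v_1^t=\varepsilon$, $v_2^t=1$ (ratio $1/\varepsilon$). \emph{Stage 2b} (while $U\varepsilon\le U_1^{t-1}\le U$): $v_1^t=\varepsilon$, $v_2^t=\varepsilon U/U_1^{t-1}$ (ratio $U/U_1^{t-1}$, with $v_2^t\in[\varepsilon,1]$ in this range). \emph{Stage 2c} (while $U\le U_1^{t-1}\le U/\varepsilon$): $v_1^t=\varepsilon U_1^{t-1}/U$, $v_2^t=\varepsilon$ (same ratio $U/U_1^{t-1}$, now with $v_1^t\in[\varepsilon,1]$). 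In Stages 2b and 2c the greedy comparison is an exact equality, broken toward agent $1$ by the min-index rule, so every such item goes to agent $1$; once $U_1^{t-1}>U/\varepsilon$ any item with $v_2^t\ge\varepsilon$ must go to agent $2$, so Stage 2 necessarily terminates there. Optionally pad with items only agent $1$ values.

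With this instance $U_2(A_2)=U$ while $U_2(A_1)=\sum_{t\in\text{Stage 2}}v_2^t$, and dividing by $U$ each sub-stage becomes a Riemann sum for $\frac1U\int q(U_1)\,\mathrm dU_1$ on the matching sub-interval: Stage 2a contributes $\to 1$ (about $U$ items of value $1$ over $U_1\in[0,U\varepsilon]$), Stage 2b contributes $\to\log\frac1\varepsilon$ (step $\varepsilon$ in $U_1$ over $[U\varepsilon,U]$, values summing to $\approx\int_{U\varepsilon}^{U}\frac{U}{U_1}\mathrm dU_1$), and Stage 2c contributes $\to\log\frac1\varepsilon$ (the multiplicative increments $v_1^t=\varepsilon U_1^{t-1}/U$ make $U_1^{t-1}$ grow geometrically, giving $\approx(U/\varepsilon)\log\frac1\varepsilon$ items each of value $\varepsilon$). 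Hence $\mathrm{Envy}_{21}=U_2(A_1)/U_2(A_2)\to 1+2\log\frac1\varepsilon$ as $U\to\infty$, which is the claimed lower bound.

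The main obstacle is the bookkeeping in Stage 2c, where $v_1^t$ is no longer constant: one must verify the greedy routing stays with agent $1$ at every step, check $v_1^t,v_2^t\in[\varepsilon,1]$ throughout, and bound the gap between the discrete sum $\sum\varepsilon$ over the non-uniform (multiplicatively spaced) partition of $[U,U/\varepsilon]$ and $\int_U^{U/\varepsilon}\frac{U}{U_1}\mathrm dU_1=U\log\frac1\varepsilon$, showing it is $O(1)$ and hence $O(1/U)$ after normalization. Rounding at the stage boundaries — stopping each sub-stage exactly when $U_1^{t-1}$ first crosses its threshold — likewise only perturbs the estimate at order $O(1/U)$.
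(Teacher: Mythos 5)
Your proposal is correct and takes essentially the same route as the paper: both construct an explicit two-agent worst-case instance realizing the optimum $1+2\log\frac{1}{\varepsilon}$ of the canonical program by giving agent $2$ value $U$ early and then driving $U_1$ from $0$ to $U/\varepsilon$ through three regimes (ratio pinned at $1/\varepsilon$, then at $U/U_1$ with $v_2$ varying, then at $U/U_1$ with $v_1$ varying), and both embed via \Cref{lemma-induction-property}. The paper's Table~2 uses a geometric staircase with constant values per phase and a discretization parameter $a\to 1$ whereas you vary values continuously at the exact tie boundary — a cosmetic rather than structural difference.
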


\subsection{Nash Welfare Analysis for Greedy}
We give an upper bound on the asymptotic competitive ratio \textit{w.r.t.} Nash welfare for the integral greedy algorithm.

\begin{theorem}[Upper Bound for Competitive Ratio]
\label{thm-cr-upper}
    For input space $\mathcal{V}_{\varepsilon}^T$ and any given $\alpha>0$ there exists a constant $\lambda>0$ (independent of $n$ and $T$), such that
$$  \lim_{T\rightarrow \infty} \sup_{v \in \mathcal{V}^T_\varepsilon} \left(\frac{\prod_{i=1}^n U_i(A_i)}{\prod_{i=1}^n U_i(A_i^{\star})} \right)^{1/n} \leq \lambda\cdot\left(n!\right)^{\frac{1+\alpha}{n}}.$$

For $\varepsilon = 1$, the above holds with $\lambda = 1, \alpha  = 0$.
\end{theorem}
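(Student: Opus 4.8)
The plan is to reduce to the per-agent geometric mean and then induct on the number of agents, peeling off one agent per step with the help of \Cref{lemma-induction-property}. Write $g_i := U_i(A_i)$ for agent $i$'s greedy utility and $U_i^\star := U_i(A_i^\star)$ for its utility in an offline NW-maximizing allocation $\bm x^\star = \{x^{\star,t}_i\}$ (no integrality of $\bm x^\star$ is used below). By scale invariance it suffices to bound $\bigl(\prod_i U_i^\star/\prod_i g_i\bigr)^{1/n}$, and I would prove by induction on $n$ the asymptotic-in-$T$ statement
\begin{equation*}
\prod_{i=1}^n \frac{U_i^\star}{g_i}\;\le\;\lambda(\varepsilon,\alpha)^{\,n}\,(n!)^{1+\alpha}\,\bigl(1+o(1)\bigr),
\end{equation*}
which is the claim after taking $n$-th roots (the fixed number of peeling steps keeps the accumulated error $(1+o(1))$). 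The base case $n=1$ is immediate: greedy then hands agent $1$ every item it values, so $g_1=U_1^\star$.

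For the inductive step, run greedy on the $n$-agent instance $\bm v$ to obtain $\{A_i\}$, pick a distinguished agent $i_0$ (the exact choice — roughly, the agent of smallest greedy utility, or the one whose greedy bundle is least used by $\bm x^\star$ for the others — is what makes the estimates below tight), and let $\bm v' := \bm v|_{[n]\setminus\{i_0\}}$ be the instance obtained by deleting $i_0$ and the items of $A_{i_0}$. By \Cref{lemma-induction-property} greedy returns the same bundles $\{A_i\}_{i\ne i_0}$ on $\bm v'$, so the $g_i$ are unchanged and $\bm v'$ still satisfies \Cref{assumption-greedy} (each survivor still has infinite monopolistic utility). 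Hence
\begin{equation*}
\prod_{i=1}^n\frac{U_i^\star(\bm v)}{g_i}
\;=\;\underbrace{\frac{U_{i_0}^\star(\bm v)}{g_{i_0}}}_{(\mathrm{I})}\,\cdot\,\underbrace{\prod_{i\ne i_0}\frac{U_i^\star(\bm v)}{U_i^\star(\bm v')}}_{(\mathrm{II})}\,\cdot\,\prod_{i\ne i_0}\frac{U_i^\star(\bm v')}{g_i},
\end{equation*}
and the last product is at most $\lambda^{\,n-1}((n-1)!)^{1+\alpha}(1+o(1))$ by the induction hypothesis; so it suffices to show $(\mathrm{I})\cdot(\mathrm{II})\le\lambda\,n^{1+\alpha}(1+o(1))$.

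Factor $(\mathrm{I})$ is handled by counting: every item an agent $i$ can receive under $\bm x^\star$ lies in $C_i$, so $U_i^\star(\bm v)\le\bigl(|C_i\cap A_i|+\sum_{j\ne i}|C_i\cap A_j|\bigr)(1+o(1))\le\bigl(g_i/\varepsilon+(n-1)\,\tfrac1\varepsilon(1+\log\tfrac1\varepsilon)g_i\bigr)(1+o(1))$ by \Cref{lemma-number-of-items}, using that each item has value $\le1$ and $g_i\to\infty$; thus $(\mathrm{I})\le c(\varepsilon)\,n\,(1+o(1))$ with $c(\varepsilon)=\tfrac1\varepsilon(1+\log\tfrac1\varepsilon)$. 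For $(\mathrm{II})$, restricting $\bm x^\star$ to the surviving agents and items is feasible for $\bm v'$, so $U_i^\star(\bm v')\ge U_i^\star(\bm v)-\sum_{t\in A_{i_0}}x^{\star,t}_i v_i^t$, while $\sum_{i\ne i_0}\sum_{t\in A_{i_0}}x^{\star,t}_i v_i^t\le|A_{i_0}|\le g_{i_0}/\varepsilon$ since the fractions of each item of $A_{i_0}$ carry total value $\le1$. Combining this with the complementary lower bound $U_i^\star(\bm v)\ge\tfrac{\varepsilon}{n}g_i$ — the cheapest of the $|C_i|\to\infty$ items valued by $i$ has Eisenberg--Gale price $\le n/|C_i|$ and $|C_i|\ge|A_i|\ge g_i$ — one bounds the factors of $(\mathrm{II})$; with the right $i_0$, and after absorbing the agents whose ratio $U_i^\star/g_i$ is already $O(1)$ into the constant, $(\mathrm{II})\le\lambda'(\varepsilon)\,n^{\alpha}(1+o(1))$. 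Multiplying gives $(\mathrm{I})\cdot(\mathrm{II})\le\lambda\,n^{1+\alpha}(1+o(1))$, and unwinding $f(n)^n\le\lambda\,n^{1+\alpha}f(n-1)^{n-1}$ with $f(1)=1$ yields $f(n)^n\le\lambda^{\,n-1}(n!)^{1+\alpha}$. When $\varepsilon=1$ one has $c(1)=1$ and, for the right $i_0$, the loss in $(\mathrm{II})$ can be made inactive, collapsing the bound to $f(n)^n\le n!$, i.e.\ $\lambda=1$, $\alpha=0$.

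The main obstacle is controlling factor $(\mathrm{II})$: deleting an agent's greedy bundle can genuinely destroy most of another agent's optimal utility (an agent may draw nearly all of its NW-optimal value from items greedy handed to $i_0$), which makes the naive restriction bound vacuous for that agent. Resolving this needs a case split — agents suffering near-total loss already have small $U_i^\star/g_i$ and hence contribute little to the product — together with a delicate choice of $i_0$, or alternatively a fresh worst-case optimization-program argument in the spirit of \eqref{eq-optimization-canonical}. This is precisely the step that forces the $\alpha$-slack in general and that becomes lossless when $\varepsilon=1$; a secondary point is that every estimate here is only asymptotic in $T$, so the induction must carry uniform $o(1)$ error control through all peeling steps.
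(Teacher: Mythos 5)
Your approach — peeling one agent per inductive step via \Cref{lemma-induction-property} — is genuinely different from the paper's, which sorts agents by greedy utility $U_1(A_1)\le\cdots\le U_n(A_n)$ and proves the direct per-agent bound $\lim_{T\to\infty} U_i^\star/U_i(A_i)\le \lambda\, i^\alpha\bigl(1+(n-i)(1+2\log\tfrac1\varepsilon)\bigr)$, whose product telescopes to $(n!)^{1+\alpha}$ up to constants. The paper gets the $i^\alpha$ factor from a separate structural result (\Cref{lemma-balanced}), proved by a directed-graph/shortest-path argument on the ``re-allocation graph'' between greedy and optimal, and the $(n-i)$ factor from an envy-style bound (\Cref{lemma-envy-generalized}) on items in $\bigcup_{k>i}(A_k\cap S_i)$, after observing via the greedy rule that no agent $k>i$ can take more than $U_i(A_i)$ items from $S_i$.

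However, your proposal has a genuine gap exactly where you flag it: factor $(\mathrm{II})$ is never actually bounded. Two concrete problems. First, the step ``$U_i^\star(\bm v')\ge U_i^\star(\bm v)-\sum_{t\in A_{i_0}}x^{\star,t}_iv_i^t$'' is false as stated: restricting $\bm x^\star$ to $\bm v'$ is feasible, which only implies $\prod_{i\ne i_0}U_i^\star(\bm v')\ge\prod_{i\ne i_0}\widetilde U_i$ by NW-optimality; it does not give an agentwise lower bound, because the NW-optimal allocation for $\bm v'$ is free to lower one agent's utility while raising another's. (The product form is what you need, so this is repairable, but you must argue at the level of the product.) Second, and more seriously, the bound $\sum_{i\ne i_0}\sum_{t\in A_{i_0}}x^{\star,t}_iv_i^t\le|A_{i_0}|\le g_{i_0}/\varepsilon$ can be a large fraction of some individual $U_j^\star(\bm v)$, and your proposed lower bound $U_i^\star(\bm v)\ge\tfrac\varepsilon n g_i$ is both under-justified (the EG-price heuristic you sketch is not a proof; what does hold is proportionality $U_i^\star\ge V_i/n\ge g_i/n$) and too weak to control the ratio for agents that concentrate their NW-optimal value on items of $A_{i_0}$. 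You acknowledge this requires ``a case split'' and ``a delicate choice of $i_0$,'' but no such choice is exhibited, and it is precisely the content of the paper's \Cref{lemma-balanced} that there is a uniform $\lambda n^\alpha$ control here — obtained not by any choice of $i_0$, but by a global counting argument over the directed graph with edge $(i,j)$ whenever $y_{ji}>0$, using the necessary optimality condition $u_j^\star<\varepsilon u_i^\star\Rightarrow y_{ji}=0$ to force a doubly-exponential growth in layer sizes. Without that or an equivalent mechanism, your $(\mathrm{II})\le\lambda' n^\alpha(1+o(1))$ is an unsupported assertion, as is the claim that the slack vanishes when $\varepsilon=1$. Your factor $(\mathrm{I})$ bound via \Cref{lemma-number-of-items} is sound and, together with the sorting idea and the item-counting on $S_i$ from the paper's $\varepsilon=1$ proof, could plausibly be assembled into a proof; as written, though, the argument does not close.
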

\begin{paragraph}{Proof Sketch} Assume without loss of generality that $U_1(A_1)\leq\cdots \leq U_n(A_n)$. The main idea of the proof is to show that $U_i^{\star}/U_i$ is bounded by $(n-i+1)\cdot i^{\alpha}$ asymptotically. Suppose this is not true, we show that $x_i^{\star}$ will include large proportion of $x_j$ ($j<i$), which will lead to contradiction with the optimality of $\bm{x}^\star$.
    
\end{paragraph}

Although \citet{azar2010allocate} gives an $O(\log(nT/\varepsilon))$ algorithm, we show that $(n!)^{1/n}$ factor is inevitable if one aims to remove the dependence on $T$. Hence, integral greedy algorithm is near-optimal in terms of $n$.
\begin{theorem}[Lower Bound for Competitive Ratio]
    \label{thm-opt}
    Even in the case $\varepsilon = 1$, for any feasible, deterministic online algorithm we have
    $$\left(\frac{\prod_{i=1}^n U_i(A_i)}{\prod_{i=1}^n U_i^\star} \right)^{1/n} \geq (n!)^{1/n}.$$
\end{theorem}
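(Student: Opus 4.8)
The plan is to prove that the competitive ratio — which the paper defines as the supremum over all inputs of the ratio between the offline Nash welfare and the online one — is at least $n!$, using only $\{0,1\}$-valued instances, so that after taking $n$-th roots we obtain exactly $(n!)^{1/n}$ in the case $\varepsilon = 1$. (I read the displayed inequality as the competitive ratio being $\ge (n!)^{1/n}$ per agent.) Fix an arbitrary deterministic online algorithm $\mathcal A$ and a large parameter $R$. I would feed $\mathcal A$ an input built adaptively in $n$ phases while maintaining a set $S$ of ``live'' agents, initially $S = [n]$. In phase $k = 1,\dots,n$ I present $M_k := R^{k}$ identical copies of an item valued $1$ by every agent in $S$ and $0$ by everyone else; $\mathcal A$ allocates each copy irrevocably as it arrives. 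Let $\mu_i^{(k)}$ be the total (possibly fractional) amount of phase-$k$ mass that $\mathcal A$ assigns to agent $i$. Since $\sum_{i\in S}\mu_i^{(k)} \le M_k$ and $|S| = n-k+1$, some live agent $a_k$ satisfies $\mu_{a_k}^{(k)} \le M_k/(n-k+1)$; I pick such an agent, remove it ($S \leftarrow S\setminus\{a_k\}$), and move to phase $k+1$. By construction the $a_k$ are distinct and $\{a_1,\dots,a_n\} = [n]$, every item value lies in $\{0,1\}$, and agent $a_k$ has value $0$ on every item of phases $k+1,\dots,n$.

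Two estimates then finish the argument. For the optimum: agent $a_j$ is live throughout phases $1,\dots,j$, hence values all $M_j$ items of phase $j$, so the feasible integral allocation that gives agent $a_j$ exactly the phase-$j$ block for each $j$ has Nash product $\prod_{j=1}^{n} M_j = R^{\,1+2+\dots+n} = R^{n(n+1)/2}$; therefore $\prod_{i=1}^n U_i^\star \ge R^{n(n+1)/2}$. For $\mathcal A$: since $a_k$ values nothing in phases $> k$, its online utility is $U_{a_k}(A_{a_k}) = \sum_{\ell=1}^{k} \mu_{a_k}^{(\ell)} \le \frac{R^{k}}{n-k+1} + \sum_{\ell=1}^{k-1} R^{\ell}$, and for fixed $n$ and $k$ the trailing sum is $O(R^{k-1})$, so $U_{a_k}(A_{a_k}) \le \frac{R^{k}}{n-k+1}\bigl(1 + O(1/R)\bigr)$. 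Multiplying over $k$ (and using $\{a_k\} = [n]$) gives $\prod_{i=1}^n U_i(A_i) = \prod_{k=1}^{n} U_{a_k}(A_{a_k}) \le \frac{R^{n(n+1)/2}}{n!}\bigl(1 + o(1)\bigr)$ as $R \to \infty$ (if some $U_{a_k}(A_{a_k})=0$ the ratio is $+\infty$ and we are trivially done). Dividing the two bounds, the competitive ratio on this family of instances is at least $n!\,(1 - o(1))$; sending $R \to \infty$ and taking $n$-th roots yields $(n!)^{1/n}$.

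The step I expect to be the main obstacle is exactly the adaptive removal rule, and in particular why it makes the construction work against \emph{every} online algorithm rather than just balancing ones such as integral greedy: in phase $k$ one must remove an agent that $\mathcal A$ \emph{under}-serves in that phase (the minimizer of $\mu_i^{(k)}$, not the maximizer), so that this agent's total online utility is dominated by its small phase-$k$ share, while simultaneously the offline optimum can funnel the entire — and, crucially, far larger — phase-$k$ block to that same agent. This is what forces the phase sizes to grow geometrically: geometric growth makes the last phase in which an agent is live dominate both its online and its offline utility up to a $1+O(1/R)$ factor, turning the per-phase penalty $n-k+1$ into the product $\prod_{k=1}^{n}(n-k+1) = n!$. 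The remaining work is routine bookkeeping: checking that the $\{0,1\}$ values keep us in the $\varepsilon = 1$ regime, controlling the geometric-series error terms, and verifying that no behavior of $\mathcal A$ (fractional allocations, wasting mass on non-live agents, or concentrating an entire phase on a single agent) can circumvent either estimate.
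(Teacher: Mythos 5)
Your proposal is correct and follows essentially the same approach as the paper's proof: an adaptive adversary divides the horizon into $n$ geometrically growing phases, presents items valued $1$ only by the currently ``live'' agents, eliminates after each phase an agent to whom the algorithm gave a small share, and lets the offline optimum funnel each (dominant) phase entirely to the agent eliminated at its end, yielding the product $\prod_{k=1}^n (n-k+1) = n!$. Your writeup is, if anything, a bit more careful in the bookkeeping (the explicit pigeonhole count $n-k+1$ and the $1+O(1/R)$ error control) than the paper's somewhat terse version, but the construction and the key idea are identical.
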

\paragraph{Proof Sketch.}
We sketch the construction of an adaptive adversary, who attempts to make low-utility agents hard to satisfy in the future. Divide the horizon into $n$ phases, each with length $T_i$, satisfying $T_{i}/T_{i-1} \rightarrow \infty.$ The adversary maintains a set of ``active agents'', initially containing all agents. In each round, only currently active agents see nonzero values. At the end of each phase, the agent in the active set who has lowest utility is eliminated. This results in a competitive ratio of $(n!)^{1/n}$, see the detailed proof in \Cref{proof-thm-opt}.

\subsection{Nash Welfare Analysis without \Cref{assumption-greedy}}
As an extension for our analysis on the integral greedy algorithm, we show how it can be adapted when \Cref{assumption-greedy} does not hold.
Despite the $\Omega(T)$ lower bound, we show that, when each agent begins with a \textit{seed utility} $\delta$, the competitive ratio of integral greedy algorithm is of order $O(\log T)$. The performance is measured \textit{w.r.t.} to the ratio of seeded welfare. 

The \textit{seeded} integral greedy algorithm is identical to \Cref{alg-greedy}, except that all agents are given an initial \emph{seed utility} $\delta$, which is taken into account when deciding the winning agent: 
\begin{align}
            i^t := \min \bigg(\mathrm{arg} \max_{i\in[n]} \frac{v_i^t}{\delta + U_i^{t-1}} \bigg), \ \  x_i^t = \bm{1}(i =i^t).
            \label{eq:seeded_int_greedy}
\end{align}

The full pseudocode for the seeded algorithm is presented in \Cref{alg-greedy-seeded}. 

\begin{algorithm}
\caption{Seeded Integral Greedy Algorithm (Equal Weights)}
\label{alg-greedy-seeded}
    \SetKwInput{KwInit}{Initialization}
    \KwIn{number of agents $n$, time horizon $T$, seed utility $\delta$}
    \KwInit{$U^0_i = 0$ for all $i$.}
    \For{$t = 1, \cdots, T$,}{
        Observe agent values for item $t$, and allocates the whole item to agent $i^t$:
        \begin{align}
            i^t := \min \bigg(\mathrm{arg} \max_{i\in[n]} \frac{v_i^t}{\delta + U_i^{t-1}} \bigg), \ \  x_i^t = \bm{1}(i =i^t).
        \end{align}
        Agent $i$ updates current utility
        $$ U_i^t = U_i^{t-1}+ x_i^t v_i^t.$$
    }
    \KwOut{Allocations $\{x_i^t\}_{i,t}$}
\end{algorithm}
For the seeded algorithm, we study the criterion $R_{\delta}(\bm{v})$ which is also defined \textit{w.r.t.} the seeded utility:
\begin{align*}
    R_\delta(v) = \sup_{\widetilde{U}} \Bigg\{ \frac{1}{n} \sum_{i=1}^n \frac{\widetilde{U}_i + \delta}{{U}_i + \delta}\Bigg\} \;,
\end{align*}
where the supremum runs over all feasible hindsight allocations, with resulting utility $\widetilde{U}_1, \cdots, \widetilde{U}_n.$
Notice that by the AM-GM inequality, $R_{\delta}$ is also competitive \textit{w.r.t.} the geometric mean.
However, due to the presence of the seed utility $\delta$, it is not directly comparable to the criterion $(\prod_{i=1}^n  {U^\star_i }/{U_i })^{1/n}$. 

\begin{theorem}[Upper Bound of $R_{\delta}$ for Seeded Integral Greedy]
\label{thm:seeded int greedy}
  Run the seeded integral greedy algorithm with seed utility $\delta$.
  For any input $\bm{v}$ satisfying $v_i^t \in [0,1]$,
\begin{align}\label{eq:thm seeded simplified}
       R_\delta(v) \leq 3 + \frac{4}{\delta}  + 2\log\Big(1+\frac{1}{\delta}\Big) + 2 \log T \;.
\end{align}
\end{theorem}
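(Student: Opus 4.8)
The plan is to compare the seeded greedy run against an arbitrary feasible hindsight allocation using only the algorithm's one-step selection rule, which collapses the whole statement into a bound on a single ``potential-increment'' sum. Write $W_i^t := \delta + U_i^t$, so that $W_i^0 = \delta$ and $W_i^t = W_i^{t-1} + x_i^t v_i^t \le W_i^T$; rule \eqref{eq:seeded_int_greedy} gives $v_i^t / W_i^{t-1} \le v_{i^t}^t / W_{i^t}^{t-1}$ for every round $t$ and every agent $i$. Fixing any feasible hindsight allocation $\widetilde x$ with $\sum_i \widetilde x_i^t \le 1$ and utilities $\widetilde U_i = \sum_t \widetilde x_i^t v_i^t$, I would first bound
\[ \widetilde U_i = \sum_{t=1}^T \widetilde x_i^t v_i^t \le \sum_{t=1}^T \widetilde x_i^t\, W_i^{t-1}\,\frac{v_{i^t}^t}{W_{i^t}^{t-1}} \le W_i^T \sum_{t=1}^T \widetilde x_i^t\,\frac{v_{i^t}^t}{W_{i^t}^{t-1}}, \]
then divide by $W_i^T$, sum over $i$, and use $\sum_i \widetilde x_i^t \le 1$ to get $\sum_i \widetilde U_i / W_i^T \le \sum_{t=1}^T v_{i^t}^t / W_{i^t}^{t-1}$. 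Since $W_i^T \ge \delta$ we also have $\sum_i \delta / W_i^T \le n$, and taking the supremum over hindsight allocations yields $R_\delta(v) \le 1 + \tfrac1n \sum_{t=1}^T v_{i^t}^t / W_{i^t}^{t-1}$.

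The core of the argument is then an agent-by-agent bound on $\sum_{t=1}^T v_{i^t}^t / W_{i^t}^{t-1}$. Fixing agent $i$, let $t_1 < \dots < t_m$ be the rounds it wins and write $a_k = v_i^{t_k} \in [0,1]$, $S_k = W_i^{t_k}$, so $S_0 = \delta$, $S_k = S_{k-1} + a_k$, and the agent's share of the sum is $\sum_k a_k / S_{k-1}$. I would split the rounds into ``small'' ones with $a_k \le S_{k-1}$ and ``large'' ones with $a_k > S_{k-1}$. For a small round, the elementary inequality $x \le 2\log(1+x)$ on $[0,1]$ gives $a_k/S_{k-1} \le 2\log(S_k/S_{k-1})$, and since all these logs are nonnegative the small rounds telescope to at most $2\log(S_m/\delta) \le 2\log(1 + T/\delta)$, using $U_i^T \le T$. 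For a large round, $S_k > 2S_{k-1}$ while $a_k \le 1$ forces $S_{k-1} < 1$; following the large rounds in order, the seeded utility exceeds $2^{j-1}\delta$ before the $j$-th of them, so the large contributions $a_k/S_{k-1} \le 1/S_{k-1}$ are dominated by $\sum_{j\ge1} 2^{-(j-1)}/\delta = 2/\delta$. Hence each agent contributes at most $2\log(1 + T/\delta) + 2/\delta$, giving $R_\delta(v) \le 1 + 2/\delta + 2\log(1 + T/\delta)$, and then $1 + T/\delta \le T(1 + 1/\delta)$ (valid since $T \ge 1$) turns this into $1 + 2/\delta + 2\log T + 2\log(1 + 1/\delta) \le 3 + \tfrac{4}{\delta} + 2\log(1 + \tfrac1\delta) + 2\log T$.

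The step I expect to be the main obstacle is the treatment of the ``large'' rounds. There a single one-step increment $v_{i^t}^t / W_{i^t}^{t-1}$ can be as large as $1/\delta$, and the naive estimate --- contribution $\le 1/\delta$ per large round, times an $O(\log(T/\delta))$ count of large rounds --- would leak a spurious $\tfrac{\log T}{\delta}$ term absent from the target bound. The observation that rescues the argument is that each large round at least doubles the seeded utility starting from its floor $\delta$, so not only are there $O(\log(1/\delta))$ of them, but their contributions decay geometrically along the sequence and therefore sum to an absolute $O(1/\delta)$ independent of $T$; after that, getting the constants to land inside $3 + 4/\delta + 2\log(1+1/\delta) + 2\log T$ (the $2/\delta$, and the passage from $\log(1+T/\delta)$ to $\log T + \log(1+1/\delta)$) is routine bookkeeping.
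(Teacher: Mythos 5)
Your proof is correct and follows the same two-step skeleton as the paper's: Step 1 (bounding $R_\delta$ by $1 + \frac{1}{n}\sum_t v_{i^t}^t/(\delta + U_{i^t}^{t-1})$ via the greedy selection rule and feasibility of the hindsight allocation) is essentially identical to the paper's ``Step 1. Ratio bounded by prices.'' Where you diverge is in Step 2. The paper invokes a black-box lemma of \citet{bach2019universal}, which states that for $a_1,\dots,a_m \in [0,a]$ and $a_0 > 0$ one has
$\sum_k a_k/(a_0+\sum_{j<k} a_j) \le 2 + 4a/a_0 + 2\log\bigl(1+\sum_{k<m}a_k/a_0\bigr)$,
and substitutes $a_0 = \delta$, $a = v_{\max} \le 1$. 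You instead give an elementary, self-contained proof of a slightly sharper version of this inequality: splitting rounds into ``small'' ($a_k \le S_{k-1}$, where $x \le 2\log(1+x)$ on $[0,1]$ lets the contributions telescope to $2\log(S_m/\delta)$) and ``large'' ($a_k > S_{k-1}$, where the doubling $S_{k_j-1} \ge 2^{j-1}\delta$ makes the contributions a geometric series summing to at most $2/\delta$). This yields $R_\delta \le 1 + 2/\delta + 2\log T + 2\log(1+1/\delta)$, which is actually tighter than the paper's $3 + 4/\delta + \cdots$; both sit inside the stated bound. The trade-off is that the paper's version is shorter by offloading the technical work to a citation, while yours is self-contained and makes explicit why the $1/\delta$ term does not pick up a $\log T$ factor: the large-round contributions decay geometrically from their first occurrence, so their count (which is $O(\log(1/\delta))$) never enters the bound.
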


The ratio is of order $\log T$, similar to the result given by \citet{azar2010allocate}. Unlike our previous results, here the ratio is unbounded as $T \to \infty$ since we are considering a broader range of inputs beyond \Cref{assumption-greedy}.
Also, there is a trade-off in choosing seed utility $\delta$. Although a larger $\delta$ brings a better bound in \eqref{eq:thm seeded simplified}, it is less able to tell us how the algorithm compares to the offline optimal, since $R_{\delta}(v) \rightarrow 1$ as $\delta$ grows large. 

\section{Analysis of PACE}
\label{section-pace}
We continue to analyzing the performance of PACE. Missing proofs in this section can be found in \Cref{appendix-pace}.
\subsection{Assumptions on the Input}

To analyze PACE in the adversarial setting, it is necessary to adopt more assumptions than for the integral greedy algorithm. This is because PACE projects the average utility up to round $t$ to $[\ell, r]$ in its decision, or equivalently, projects the total utility in round $t$ to $[\ell t, r t]$. In the stationary and non-stationary setting of \citet{liao2022nonstationary}, the expected utility of each agent grows uniformly with time; in that case the projection helps PACE achieve theoretical guarantees on its performance. However, in the adversarial setting, the input may vary drastically over time, which makes the projection operation more problematic. In the worst case, adversarial input might cause extreme behavior of the projection, such as not allocating any item to a certain agent. 

In their most generic form, our assumptions require that each agent achieves infinite utility as $T\rightarrow \infty$ and bounded non-zero valuation ratios; see Assumption \ref{assumption-pace-1}. Notice that unbounded utility is necessary if we hope to derive meaningful convergence guarantees for worst-case envy and Nash welfare.
\begin{assumption}
\label{assumption-pace-1}
    For each integer $T>0$, ${\mathcal{V}}_{\varepsilon}^{T}(\ell, r)$ is the set of inputs which satisfy:
    \begin{itemize}
        \item The utility of each agent under PACE is infinite when the projection bounds are set to be $[\ell, r]$.
        \item For each $i\in[n]$ and $t\in[T]$, $v_{i}^t \in \{0\}\cup[\varepsilon, 1]$.
    \end{itemize}
\end{assumption}
Since \cref{assumption-pace-1} is potentially hard to verify, we next identify sufficient conditions under which PACE guarantees infinite utility for each agent. 
\begin{assumption}
\label{assumption-pace-2}
    For integer $T>0$, and $c\in(0,1]$, ${\mathcal{V}}_{\varepsilon,c}^{T}$ is the set of inputs which satisfy:
    \begin{itemize}
        \item For each $i\in[n]$, $V_i \geq cT$, that is, the monopolistic utility of each agent under PACE is $\Theta(T)$.
        \item For each $i\in[n]$ and $t\in[T]$, $v_{i}^t \in \{0\}\cup[\varepsilon, 1]$.
    \end{itemize}
\end{assumption}
Assumption \ref{assumption-pace-2} strengthens Assumption \ref{assumption-greedy} by requiring the monopolistic utility of each agent to be linear in $T$. Intuitively, this matches the linear projection bounds $[\ell t, rt]$ on utility. We will show in \Cref{lemma-infinite-utility} that, with appropriate initialization of the projection bounds, Assumption \ref{assumption-pace-2} with any $c>0$ leads to infinite agent utilities, and thus implies Assumption \ref{assumption-pace-1} for some $\ell$ and $r$.

Conversely, if an agent's inputs have $o(T)$ agent monopolistic utility, the following example shows that problems may occur, see \Cref{example-sublinear}.
\begin{example}
\label{example-sublinear}
Cases with $\sum_{t=1}^T \bm{1}\{v_i^t>0\} = f(T) = o(T)$ for $i=1, 2$ in an $n$-agent case ($n>2$). For any $\ell \in (0,\varepsilon)$, there exists a sufficiently large $T_1$ such that $f(T_1)<\frac{\ell}{\varepsilon}T_1$. We can construct an instance where agents $1$ and $2$ have value zero for the first $\left(1-\frac{\ell}{\varepsilon}\right)T_1$ items and value $\varepsilon$ for the remaining items. One can check that both utilities never reach $\ell t$, and thus $\Hat{U}_1^t = \Hat{U}_2^t = \ell t$ always. Under lexicographical tie-breaking, agent $1$ will receive all items. Agent $2$ will receive nothing at all.
\end{example}

We remark that in order to initialize $\ell$ and $r$, PACE requires knowing a lower bound on $\varepsilon$. 
In contrast, the integral greedy algorithm does not require such knowledge.
A minimal requirement is to set $\ell \leq \varepsilon$:
\begin{example}
\label{example-projection}
Case with $\ell > \varepsilon$: Consider an input where every agent has value $\varepsilon$ in each round. The projection is then activated in every round for every agent, and the algorithm allocates all items to the first agent.
\end{example}

\subsection{Envy Analysis for PACE}
Next we show that the PACE algorithm, as long as it is appropriately initialized and  \cref{assumption-pace-1} holds, achieves $1+2\log 1/\varepsilon$ multiplicative envy bound asymptotically.
Our proof is performed by reducing the adversarial envy-maximizing problem in PACE to the canonical program \eqref{eq-optimization-canonical} by showing that PACE with our assumptions lead to a weakly harder problem than \eqref{eq-optimization-canonical} for the adversary.

\begin{theorem}[Upper Bound of Multiplicative Envy for PACE]
\label{thm-envy-pace}
Under Assumption \ref{assumption-pace-1} with $$ \ell < \frac{\varepsilon^2}{\varepsilon+1+(n-1)\left(1+\log(1/\varepsilon)\right)},\ r = 1,$$
PACE achieves
$$  \lim_{T\rightarrow \infty} \sup_{\bm v \in {\mathcal{V}}_{\varepsilon}^{T}(\ell, r)} \mathrm{Envy}_{ij} \leq 1 + 2\log \frac{1}{\varepsilon}, \ \ \forall i, j \in[n].$$
\end{theorem}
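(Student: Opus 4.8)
The plan is to follow the route of \Cref{thm-envy}: reduce the adversary's envy-maximization for PACE to the canonical program \eqref{eq-optimization-canonical}, and then read off $1+2\log(1/\varepsilon)$ from the analysis of that program (the $O(1/T)$ slack there disappearing as $T\to\infty$). I begin with a structural reduction. By the greedy reinterpretation of PACE from \Cref{section-setup}, with equal weights PACE gives item $t$ to $\min\arg\max_i v_i^t/\hat U_i^{t-1}$ with $\hat U_i^{t-1}=\prod_{[\ell(t-1),\,r(t-1)]}U_i^{t-1}$; since $r=1$ and $v_i^s\le 1$ force $U_i^{t-1}\le t-1$, the upper projection never binds, so throughout the run $\hat U_i^{t-1}=\max\{U_i^{t-1},\,\ell(t-1)\}$, and only the lower projection must be tamed.

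Fixing a pair $(i,j)$, relabeling so that $i$ is the envier and $j$ the envied, and setting $U:=U_i^T$ (which tends to infinity by \Cref{assumption-pace-1}), I would argue as follows. For every round $t\in A_j$ with $v_i^t>0$, the tie-broken $\arg\max$ rule gives $v_i^t/\hat U_i^{t-1}\le v_j^t/\hat U_j^{t-1}$, i.e.\ $v_i^t/v_j^t\le \hat U_i^{t-1}/\hat U_j^{t-1}$ --- exactly the shape of the constraint $v_2^t/v_1^t\le U/U_1^{t-1}$ in \eqref{eq-optimization-canonical}. To make the PACE instance, restricted to the rounds that feed $\mathrm{Envy}_{ij}$, a feasible point of \eqref{eq-optimization-canonical} (with $v_j,v_i$ in the roles of $v_1,v_2$) it suffices that $\hat U_j^{t-1}$ dominates the running value $\sum_{s<t,\ s\in A_j,\ v_i^s>0}v_j^s$ that the program uses for agent $1$ --- immediate from $\hat U_j^{t-1}\ge U_j^{t-1}$ --- and that $\hat U_i^{t-1}$ stays of order $U$ on every such round, so the lower projection contributes no (or at most a constant) slack on the envier's side. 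Granting both, the reduced instance meets all constraints of \eqref{eq-optimization-canonical} with parameter $\Theta(U)\to\infty$, and the analysis there (convergence of the right side of \eqref{eq-bound-with-q} to $1+2\log\frac1\varepsilon$) gives $\mathrm{Envy}_{ij}\le 1+2\log\frac1\varepsilon+o(1)$, whence the theorem on sending $T\to\infty$.

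The crux is controlling $\hat U_i^{t-1}$ on rounds that contribute to $\mathrm{Envy}_{ij}$, and this is where the threshold on $\ell$ is spent. The difficulty, absent from \Cref{thm-envy}, is that one cannot delete rounds on which agent $i$ has value $0$ nor reduce to $n=2$: such rounds advance the clock $t$ inside $\hat U_i^{t-1}=\max\{U_i^{t-1},\ell(t-1)\}$, so a flood of them could in principle make the lower projection for $i$ huge and hand the adversary arbitrary slack. The resolution must exploit that, once $U_i^{t-1}<\ell(t-1)$, a round with $v_i^t>0$ is won only by an agent whose own estimated utility is at most $\ell(t-1)/\varepsilon$ --- otherwise agent $i$'s bid $v_i^t/(\ell(t-1))\ge\varepsilon/(\ell(t-1))$ would be highest. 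Turning these wins into lower bounds on the other agents' utilities via $v_i^t\ge\varepsilon$, summing over the $n-1$ agents $\ne i$, and invoking a PACE analogue of \Cref{lemma-number-of-items} to cap each agent's tally of ``envied'' items, one pins down when the lower projection can first activate and how many envied items can follow; the balance is exactly what $\ell<\varepsilon^2/(\varepsilon+1+(n-1)(1+\log(1/\varepsilon)))$ encodes, with the $\varepsilon^2$ coming from the double appeal to $v_i^t\ge\varepsilon$ and the $(n-1)(1+\log(1/\varepsilon))$ from the per-agent item count.

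I expect the main obstacle to be exactly this interaction between zero-value rounds and the lower projection, which is self-referential: item and utility counts decide when the projection turns on, while the projection decides who wins and hence those counts. I would discharge it by an induction over rounds that simultaneously tracks the running envy of $i$ towards $j$, the running estimated utilities, and a ``budget'' measuring how far $\ell(t-1)$ has outrun $U_i^{t-1}$, using that on any round with $\hat U_i^{t-1}=U_i^{t-1}$ PACE coincides with integral greedy --- so \Cref{lemma-number-of-items} and the analysis of \eqref{eq-optimization-canonical} apply verbatim --- and showing the induction never leaves this regime on a round relevant to $\mathrm{Envy}_{ij}$, after which the $o(1)$ slack in the canonical-program bound vanishes as $T\to\infty$.
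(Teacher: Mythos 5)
Your plan correctly identifies the route: reduce PACE envy to the canonical program \eqref{eq-optimization-canonical}, treat the lower projection as the obstacle that the bound on $\ell$ must neutralize, and recognize that one cannot simply delete zero-value rounds or reduce to $n=2$ because such rounds still advance the clock inside $\hat U_i^{t-1}=\max\{U_i^{t-1},\ell(t-1)\}$. You also locate plausibly where the pieces of the threshold on $\ell$ come from. That much matches the paper.

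The gap is in your claimed resolution. You assert that an induction can show the lower projection on the envier contributes only $O(1)$ slack on the rounds that feed envy, but under Assumption \ref{assumption-pace-1} alone (which guarantees $U_i^T\to\infty$ but not $U_i^T=\Omega(T)$) this is false for the original sequence: a flood of $v_i^t=0$ rounds can push $\ell(t-1)$ far above $U_i^T$, and your observation that on such a round a winning $j\neq i$ has $\hat U_j^{t-1}\le\ell(t-1)/\varepsilon$ gives a bound relative to $\ell(t-1)$, not to $U_i^T$, so it does not cap the slack by itself. The missing device is the paper's input transformation: zero out each other agent's value off its own bundle, move all of $A_i$ to the front, and move all $v_i=0$ items to the end. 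After this, the envier's utility already equals its final value $U_i(A_i)$ by round $|A_i|=S$, and every round that could feed $\mathrm{Envy}_{ij}$ lies before a horizon $t^*$ chosen so that $\ell t^*<\varepsilon U_i(A_i)$, so the envier's lower projection never fires at all on those rounds. Two nontrivial companion steps you omit are then required: \Cref{lemma-auxiliary-1}, which shows the transformation preserves the PACE allocation inside $[1,t^*]$ (it handles the envied agents' lower projections firing by arguing PACE would still allocate to them), and the counting step (\Cref{lemma-auxiliary-2} together with $S+(n-1)m\le t^*$) which certifies that every relevant item really does land before $t^*$, discharging exactly the circularity you flag. Without the transformation and the $t^*$ bookkeeping, the reduction to the canonical program does not go through on the original sequence.
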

\paragraph{Proof.}
 Due to symmetry, we only focus on the envy of agent $1$. Similar to the proof of \Cref{thm-envy}, we first transform the input $\bm v$ into $\bm v^\prime$, such that the allocation of PACE is preserved. Notice that in this proof we use a different transformation, since the problem cannot be reduced to pairwise 2-agent cases. It is no longer obvious that the transformation preserves the PACE allocation; we will prove this later. The transformation is as follows:
\begin{enumerate}
        \item For agent $j \in \{2, 3, \cdots, n\}$, set agent $j$'s value to $0$ for all items that are not allocated to $j$.
        \item Move all items on which agent $1$ has value $0$ to the end of the input sequence.
        \item Move all items in $A_1$ to the beginning of the input sequence.
\end{enumerate}
Allocation for items in Step 2. is clearly preserved under PACE and does not affect the envy of agent 1, so we can assume such items do not exist without loss of generality.

Denote $S = |A_1|.$ 
For item $s$ in the original input $\bm v$, let $s^\prime > s$ be the round it appears in $\bm{v}^\prime$. We define
$
    t^* := \frac{\varepsilon S}{\varepsilon-\ell}\left(1+\frac{1}{\varepsilon}(n-1)\left(1+\log\frac{1}{\varepsilon}\right)\right).$
The proof is divided into three parts.

For the first part, \Cref{lemma-auxiliary-1} shows that in the first $t^*$ rounds in $\bm{v}^{\prime}$, the transformation preserves the PACE allocation.
\begin{lemma}
    \label{lemma-auxiliary-1}
    If $s^\prime < t^*$, then for agent $j \in \{2, \cdots, n\}$,
    \begin{equation*}
         U_j^{s^\prime}(\bm v^\prime)< \ell s^{\prime}  \implies \frac{\Hat{U}_j^{s^\prime}(\bm{v}^\prime)}{\Hat{U}_1^{s^\prime}(\bm{v}^\prime)}<\varepsilon.   
    \end{equation*}
    Also, all items in the first $t^*$ rounds of $\bm{v}^\prime$ has the same allocation result as $\bm v$.
\end{lemma}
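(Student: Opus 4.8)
I would prove the two assertions together by induction on the round $s'=1,2,\dots,\lceil t^*\rceil$ of $\bm v'$, with inductive hypothesis: \emph{PACE on $\bm v'$ gives items $1,\dots,s'-1$ to the same agents as PACE on $\bm v$}. (After discarding, as the proof permits, the items on which agent $1$ values $0$, every item of $\bm v'$ has $v_1^t\ge\varepsilon$.) Since $r=1$ and $U_i^t\le t$, the projected utilities are simply $\hat U_i^t(\bm v')=\max\{\ell t,\,U_i^t(\bm v')\}$. This is the PACE analogue of the allocation-preserving transformation used at the start of the proof of \Cref{thm-envy}.

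The core is to control the numerator and denominator of the ratio. For agent $1$: the first $S=|A_1|$ rounds of $\bm v'$ are exactly the items of $A_1$, and Step~1 of the transformation zeroed all other agents on those items, so agent $1$ is the only positive bidder and wins them all; with the inductive hypothesis (agent $1$ wins nothing else before round $s'$) this gives $U_1^{s'-1}(\bm v')\ge\varepsilon\min\{s'-1,S\}$, hence $U_1^{s'-1}(\bm v')=U_1(A_1)\ge\varepsilon S$ once $s'-1\ge S$. The assumed bound on $\ell$ is exactly what makes $t^*$ short enough that $\ell s'\le U_1^{s'}(\bm v')$ over the whole window, so agent $1$ is never clipped from below and $\hat U_1^{s'-1}(\bm v')=U_1^{s'-1}(\bm v')$. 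For an agent $j\in\{2,\dots,n\}$: under the inductive hypothesis agent $j$ only wins items of $A_j$, which first appear after round $S$, and the number of such items carrying positive value for agent $1$ (the only ones in the window) obeys a bound of the \Cref{lemma-number-of-items} type, $|A_j\cap C_1|\le\tfrac1\varepsilon(1+\log\tfrac1\varepsilon)\,U_1(A_1)$, and summing this over $j\ge 2$ yields exactly the quantity entering the formula for $t^*$. Hence each $U_j^{s'}(\bm v')$ stays below $\ell s'$ throughout, so $\hat U_j^{s'}(\bm v')=\ell s'$; under the lemma's hypothesis $U_j^{s'}(\bm v')<\ell s'$ we then get $\hat U_j^{s'}(\bm v')/\hat U_1^{s'}(\bm v')=\ell s'/U_1^{s'}(\bm v')$, which the calibration of $t^*$ against $\varepsilon^2$ forces below $\varepsilon$ — the first assertion.

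To close the induction I would determine the winner of round $s'$. If that item lies in $A_1$ ($s'\le S$), agent $1$ wins it as above. Otherwise it lies in some $A_j$, $j\ge2$; Step~1 having zeroed every agent $k\notin\{1,j\}$, agent $j$'s only rival is agent $1$, and agent $j$ wins in $\bm v'$ whenever $\hat U_j^{s'-1}(\bm v')/\hat U_1^{s'-1}(\bm v')<v_j^t/v_1^t$. Since $v_j^t\ge\varepsilon$ (agent $j$ won it in $\bm v$, so $v_j^t>0$) and $v_1^t\le 1$, we have $v_j^t/v_1^t\ge\varepsilon$, so the first assertion at round $s'-1$ closes the step, the strict inequality there also covering the tie-break (index $1$ beats index $j$). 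I expect the genuine difficulty to be this matching of scales: $t^*$ must simultaneously be large enough to cover both blocks of $\bm v'$, small enough that agent $1$ is never down-clipped and keeps $\hat U_1\ge\varepsilon S$, and small enough that each $\hat U_j$ is still pinned at the floor $\ell s'$. Reconciling these is precisely what pins down the formula for $t^*$ and the assumed bound on $\ell$, and carrying it through also requires care with tie-breaking and with the regime where $T$ — hence $S=|A_1|$, which tends to $\infty$ by \Cref{assumption-pace-1} — is large.
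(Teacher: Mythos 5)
Your overall framing (induction on the round $s'$ of $\bm v'$, inductive hypothesis that the allocation is preserved up to round $s'-1$) is essentially the scaffolding that the paper's proof also relies on implicitly, and the part about agent $1$ winning the first $S$ rounds and about the first assertion covering the tie-break is fine. However, there is a genuine gap in how you close the induction when $s'>S$.

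To show that an item $s\in A_j$ that appears at position $s'$ in $\bm v'$ is still won by agent $j$, you invoke ``the first assertion at round $s'-1$'', i.e.\ the implication $U_j^{s'-1}(\bm v')<\ell(s'-1)\implies \hat U_j^{s'-1}(\bm v')/\hat U_1^{s'-1}(\bm v')<\varepsilon$. But this is a \emph{conditional} statement, and your proposal justifies the hypothesis by asserting that ``each $U_j^{s'}(\bm v')$ stays below $\ell s'$ throughout.'' That assertion is not established by your argument: the bound you quote controls the \emph{number} of items $|A_j\cap C_1|$ relative to $U_1(A_1)$, hence how long the window $t^*$ can be, but it does not control $U_j^{s'}(\bm v')$ relative to $\ell s'$ round by round. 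For a given $s'$ just above $S$, agent $j$ can already have accumulated utility on the order of $s'-S$ (items worth $1$ each) while $\ell s'$ is small, so $U_j^{s'}(\bm v')\ge\ell s'$ is entirely possible within the window. The paper does not make your claim; it explicitly carries a case analysis on whether $U_j^{s'-1}(\bm v')$ is above or below $\ell(s'-1)$, and your induction simply has no argument for the first case.

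The missing idea is the comparison between the \emph{projected} ratios of $\bm v'$ and $\bm v$ at the corresponding rounds, which is what the paper uses precisely to handle the case $U_j^{s'-1}(\bm v')\ge\ell(s'-1)$. In that case one shows that the transformation can only help agent $j$'s bid: moving all of $A_1$ to the front makes $\hat U_1^{s'-1}(\bm v')=\max\{U_1(A_1),\ell(s'-1)\}\ge\max\{U_1^{s-1}(\bm v),\ell(s-1)\}=\hat U_1^{s-1}(\bm v)$, while the $j$-utility is preserved and one is off the floor in both sequences, so $\hat U_j^{s'-1}(\bm v')\le\hat U_j^{s-1}(\bm v)$. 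Hence $\hat U_j^{s'-1}(\bm v')/\hat U_1^{s'-1}(\bm v')\le\hat U_j^{s-1}(\bm v)/\hat U_1^{s-1}(\bm v)<v_j^s/v_1^s$, where the last inequality is just the fact that agent $j$ won item $s$ in $\bm v$, and so agent $j$ also wins at position $s'$ in $\bm v'$. Adding this comparison as the companion case to your ``$U_j<\ell s'$'' case closes the induction; without it, the step is incomplete.
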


For the second part, we show that in the first $t^*$ rounds, the problem of maximizing envy under PACE can be relaxed into a canonical program defined as \eqref{eq-optimization-canonical}. \Cref{lemma-auxiliary-2} gives upper bounds on multiplicative envy in the first $t^*$ rounds, as well as the number of rounds within $t^*$ for agent $j\in\{2, \cdots, n\}$ to have non-zero values.
\begin{lemma}
\label{lemma-auxiliary-2}
    For any agent $j\in \{2, \cdots, n\}$, let $B_j$ be the set of items that belongs to $A_j$ in the original sequence $\bm v$, and are permuted to the first $t^*$ position in the transformed sequence. Then it holds that
    \begin{equation*}
        \lim_{S\rightarrow \infty} \frac{U_1(B_j)}{U_1(A_1)} \leq 1 + 2\log \frac{1}{\varepsilon}, \ 
        \lim_{S\rightarrow \infty} \frac{|B_j|}{S} \leq \frac{1}{\varepsilon}\left(1+\log \frac{1}{\varepsilon}\right).
    \end{equation*}
\end{lemma}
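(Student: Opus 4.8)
The plan is to isolate, for each fixed $j\in\{2,\dots,n\}$, the two-agent interaction between agent $1$ and agent $j$ in rounds $S+1,\dots,t^*$ of $\bm v'$ and to exhibit it as a feasible instance of the canonical program~\eqref{eq-optimization-canonical} with parameter $U:=U_1(A_1)$. Write $S=|A_1|$; since agent $1$ wins $S$ items each with value in $[\varepsilon,1]$, we have $\varepsilon S\le U\le S$, so $U\to\infty$ as $S\to\infty$. I would first record two structural facts about $\bm v'$ on the range $\{S,\dots,t^*\}$. By \Cref{lemma-auxiliary-1} the PACE allocation there equals that under $\bm v$, so agent $1$ wins exactly $A_1$ in rounds $1,\dots,S$ and nothing afterward (rounds $S+1,\dots,t^*$ contain only items of $\bigcup_{k\ge2}A_k$); hence $U_1^s(\bm v')=U$ for all $s\in\{S,\dots,t^*\}$. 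Moreover the projection of agent $1$'s utility is inactive on this range: $U\le S\le s$ kills the upper cap (as $r=1$), while the hypothesis on $\ell$ together with $\varepsilon\le1$ and the definition of $t^*$ gives $\ell s\le\ell t^*\le\varepsilon S\le U$, so $\hat U_1^s(\bm v')=U$ throughout. Verifying $\ell t^*\le\varepsilon S$ from the stated bound on $\ell$ is the first computation I would carry out.

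With this in hand, fix $j\ge2$ and let the items of $B_j$ occur at rounds $s_1'<\cdots<s_m'$ in $\bm v'$ (all in $\{S+1,\dots,t^*\}$, $m=|B_j|$); put $w_\tau:=v_j^{s_\tau'}\in[\varepsilon,1]$ and $z_\tau:=v_1^{s_\tau'}\in[\varepsilon,1]$, where $z_\tau$ is nonzero by the reduction to inputs with $v_1>0$ everywhere already made in the proof. Since after Step~1 agent $j$ has nonzero value only on $A_j$, agent $j$ wins nothing between consecutive $B_j$-items or before $s_1'$, so $U_j^{s_\tau'-1}(\bm v')=\sum_{k<\tau}w_k$. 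Because agent $j$ wins item $s_\tau'$ it in particular beats agent $1$: $w_\tau/\hat U_j^{s_\tau'-1}\ge z_\tau/\hat U_1^{s_\tau'-1}=z_\tau/U$; combining with $\hat U_j^{s_\tau'-1}\ge U_j^{s_\tau'-1}=\sum_{k<\tau}w_k$ gives $z_\tau/w_\tau\le U/\sum_{k<\tau}w_k$. Thus the finite sequence $(w_\tau,z_\tau)_{\tau=1}^m$, padded with zeros, is feasible for~\eqref{eq-optimization-canonical} with its ``agent $1$'' playing $w_\tau$ and its ``agent $2$'' playing $z_\tau$, and its objective equals $\frac1U\sum_\tau z_\tau=U_1(B_j)/U$. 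Hence $U_1(B_j)/U$ is at most the optimal value of~\eqref{eq-optimization-canonical} with parameter $U$, which by the analysis in the proof of \Cref{thm-envy} converges to $1+2\log\frac1\varepsilon$ as $U\to\infty$; since $U\to\infty$ with $S$, the first claimed bound follows.

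For the second bound I would reuse the same feasible point together with the item-counting part of the canonical analysis (behind \Cref{lemma-number-of-items}). From $z_\tau\ge\varepsilon$, $w_\tau\le1$ and the ratio constraint, $w_\tau\ge\varepsilon\sum_{k<\tau}w_k/U$, which forces $U_j$ to grow by at least $\varepsilon$ per step while $\sum_{k<\tau}w_k\le U$ and by a factor $1+\varepsilon/U$ thereafter, up to the threshold $\sum_{k<\tau}w_k\le U/\varepsilon$ beyond which no further $B_j$-item is possible; the Riemann-sum estimate from the proof of \Cref{thm-envy} then yields $m\le\frac{U}{\varepsilon}\big(1+\log\frac1\varepsilon\big)+o(U)$. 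Dividing by $S$ and using $U/S\le1$ gives $\lim_{S\to\infty}|B_j|/S\le\frac1\varepsilon\big(1+\log\frac1\varepsilon\big)$.

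The main obstacle is the reduction itself: after deleting the interleaved items of the agents $k\notin\{1,j\}$ and re-indexing only the $A_j$-items, one must check that the resulting PACE trajectory satisfies the constraints of~\eqref{eq-optimization-canonical} \emph{and in the direction that makes the PACE adversary's problem no easier}. Two points need care: (i) showing the lower projection of agent $1$'s utility never fires on $\{S,\dots,t^*\}$, which is exactly where the quantitative choice of $\ell$ and the definition of $t^*$ enter; and (ii) noting that replacing $\hat U_j$ by the smaller $U_j$ only tightens the winning inequality, so a PACE trajectory is always canonical-feasible. Everything else is bookkeeping on top of the already-established bounds for the canonical program.
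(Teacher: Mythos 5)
Your proof is correct and follows essentially the same route as the paper's: reduce the two-agent $(1,j)$ interaction after round $S$ to a feasible point of the canonical program~\eqref{eq-optimization-canonical} with parameter $U=U_1(A_1)=aS$, then invoke the bounds already established for that program. You verify the two crucial facts the paper also uses — that $\ell t^*\le\varepsilon S\le U$ so that agent $1$'s projection is inactive and $\hat U_1^s(\bm v')=U$ throughout $\{S,\dots,t^*\}$, and that step~1 of the transformation makes agent $j$'s running utility equal to the partial sum $\sum_{k<\tau}w_k$ — and you correctly combine the PACE winning inequality with $\hat U_j\ge U_j$ to obtain $z_\tau/w_\tau\le U/\sum_{k<\tau}w_k$, which is exactly the canonical constraint. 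The paper's exposition differs only cosmetically: it sets up the intermediate optimization program~\eqref{eq-optimization-envy-pace-base} and then derives a cased step-bound~\eqref{eq-step-upper-bound} to exhibit the canonical $q$-function explicitly, whereas you argue canonical-feasibility directly from the two inequalities $\hat U_1 = U$ and $\hat U_j\ge U_j$. Both paths are valid; yours is arguably the cleaner of the two, and your derivation of the second inequality (dividing the item-count bound by $S$ and using $U\le S$) is exactly the paper's.
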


For the final part, we show that the length of the transformed sequence cannot exceed $t^*$. Let $m:=\frac{S}{\varepsilon}\left(1+\log \frac{1}{\varepsilon}\right)$ be the asymptotic upper bound of $|B_j|$ given in \Cref{lemma-auxiliary-2}. It is straightforward to show with some simple algebra that $S+(n-1)m \leq t^*$. Notice that $S+(n-1)m$ is an asymptotic upper bound on the number of non-empty items (an empty item is one that is evaluated zero by all buyers) within $t^*$ rounds in the transformed sequence. Therefore, $S+(n-1)m \leq t^*$ implies that the transformed sequence has to end within $t^*$ rounds. Thus $t^*$ is an upper bound of both the original and transformed sequence, whose allocation results are guaranteed to be the same by \Cref{lemma-auxiliary-1}. The upper bound for envy in \Cref{lemma-auxiliary-2} is then a valid upper bound for the original sequence. 

$\hfill \square$

\subsection{Nash Welfare Analysis for PACE}
\label{section-nw-pace}
Next we focus on establishing a worst-case guarantee for the asymptotic competitive ratio of Nash Welfare for PACE under Assumption \ref{assumption-pace-2}.
To begin with, we show that Assumption \ref{assumption-pace-2} implies infinite agent utilities with appropriate initialization.
\begin{lemma}
\label{lemma-infinite-utility}
    For any input $\bm{v}\in \mathcal{V}_{\varepsilon, c}^T$ (defined in \cref{assumption-pace-2}), if $$ \ell < \frac{c \varepsilon^2}{1 +(n-1)(1+\log 1/\varepsilon)}, \ r=1,$$ 
    then there exists a constant $d>0$ (which depends on $\ell$), such that for each $i\in[n]$, PACE satisfies $\lim_{T\rightarrow \infty} U_i({A_i}) \geq dT.$
    Furthermore, $\frac{d}{c} = \Omega\left(\frac{1}{n}\right)$ as $n\rightarrow \infty$.
\end{lemma}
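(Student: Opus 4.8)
The plan is to turn a lower bound on $U_i(A_i)$ into a lower bound on $|A_i|$, and that in turn into an upper bound on the total number of rounds $T$. Two structural facts are the starting point. First, since $r=1$ and $U_j^t=\sum_{s\le t}x_j^s v_j^s\le t$, the projection is a pure flooring: $\Hat U_j^t=\max(\ell t,U_j^t)$ for all $j,t$. Second, every item is valued positively by some agent and $\Hat U_j^{t-1}>0$ for $t\ge 2$, so the PACE winner $i^t$ always satisfies $v_{i^t}^t>0$; hence $A_i\subseteq C_i$ and $U_i(A_i)\ge\varepsilon|A_i|$. Combining with the accounting identity $|C_i|=|A_i|+\sum_{j\ne i}|C_i\cap A_j|$ and Assumption~\ref{assumption-pace-2} (which gives $|C_i|\ge V_i\ge cT$), it suffices to control how many items agent $i$ values but does not receive.

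The cleanest way to do this is to re-run the combinatorial core of the proof of \Cref{thm-envy-pace}. Relabel agent $i$ as agent $1$, apply the same transformation (zero out each $j\ne 1$ outside $A_j$, push $A_1$ to the front and the items with $v_1^t=0$ to the back), and set $S:=|A_1|$ and $m:=\frac{S}{\varepsilon}(1+\log\frac1\varepsilon)$. Exactly as in \Cref{lemma-auxiliary-1} and \Cref{lemma-auxiliary-2}, the transformation preserves the PACE allocation on the first $t^*:=\frac{\varepsilon S}{\varepsilon-\ell}\bigl(1+\frac1\varepsilon(n-1)(1+\log\frac1\varepsilon)\bigr)$ rounds, each agent $j\ne 1$ occupies at most $m+o(S)$ of those rounds, and since $S+(n-1)m=\frac{\varepsilon-\ell}{\varepsilon}\,t^*<t^*$ bounds the number of (all nonempty) items in the first $t^*$ rounds, the sequence must terminate within $t^*$ rounds. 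Therefore $T\le t^*(1+o(1))$, and rearranging,
\[
  |A_i|\;=\;S\;\ge\;\frac{(\varepsilon-\ell)\,T}{\varepsilon+(n-1)(1+\log(1/\varepsilon))}-o(T),
  \qquad
  U_i(A_i)\;\ge\;\varepsilon|A_i|\;\ge\;\frac{\varepsilon(\varepsilon-\ell)}{\varepsilon+(n-1)(1+\log(1/\varepsilon))}\,T-o(T),
\]
so one may take $d=\dfrac{\varepsilon(\varepsilon-\ell)}{\varepsilon+(n-1)(1+\log(1/\varepsilon))}$, which is positive because the hypothesis forces $\ell<\frac{c\varepsilon^2}{1+(n-1)(1+\log(1/\varepsilon))}\le\varepsilon$. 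For the last claim, use $c\le1$ and pick $\ell$ a fixed fraction below its threshold so that $\varepsilon-\ell\ge\varepsilon/2$ (valid since $c\varepsilon\le1$); then $d\ge\frac{\varepsilon^2/2}{\varepsilon+(n-1)(1+\log(1/\varepsilon))}$ and hence $\frac{d}{c}\ge\frac{\varepsilon^2}{2(\varepsilon+(n-1)(1+\log(1/\varepsilon)))}=\Omega(1/n)$ as $n\to\infty$, matching the $\Theta(n)$ scaling seen in \Cref{thm-opt}.

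The main obstacle is the circularity hiding in the phrase ``exactly as in \Cref{lemma-auxiliary-1} and \Cref{lemma-auxiliary-2}'': those lemmas, and the canonical program \eqref{eq-optimization-canonical} behind them, are asymptotic in $S=|A_i|$, yet linear growth of $|A_i|$ is precisely what we are trying to establish, and \Cref{thm-envy-pace} itself is stated under Assumption~\ref{assumption-pace-1}. I would resolve this by a two-step bootstrap. First, rule out bounded $|A_i|$ directly: if $U_i(A_i)$ stays bounded, then for all large $t$ one has $\Hat U_i^{t-1}=\ell(t-1)$, so any agent $j$ that takes an item agent $i$ values must satisfy $U_j^{t-1}\le\Hat U_j^{t-1}\le\frac1\varepsilon\,\Hat U_i^{t-1}\le\frac{\ell}{\varepsilon}T$; since $U_j$ only increases and gains at least $\varepsilon$ at each such step, $|C_i\cap A_j|\le\frac{\ell}{\varepsilon^2}T+O(1)$, whence $cT\le|C_i|\le|A_i|+(n-1)\frac{\ell}{\varepsilon^2}T+O(1)<cT$ for large $T$ because $\ell<\frac{c\varepsilon^2}{1+(n-1)(1+\log(1/\varepsilon))}<\frac{c\varepsilon^2}{n-1}$ — a contradiction. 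Hence $|A_i|\to\infty$, after which the asymptotic estimates of \Cref{lemma-auxiliary-2} legitimately apply and deliver the linear bound above. A secondary technical point is to verify that the $t^*$-termination argument in the proof of \Cref{thm-envy-pace} uses only $|A_i|\to\infty$ (equivalently, that the nonempty-item count bound $S+(n-1)m$ has a constant-factor slack below $t^*$, which it does, the ratio being $\frac{\varepsilon-\ell}{\varepsilon}$), so that it can be invoked here without presupposing Assumption~\ref{assumption-pace-1}.
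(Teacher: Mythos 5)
Your route is genuinely different from the paper's. The paper proves the lemma by a direct contradiction: it assumes $U_i(A_i) < \bar\ell T$ for some $\bar\ell$ with $\ell \le \bar\ell < \frac{c\varepsilon}{1+(n-1)(1+\log 1/\varepsilon)}$, relaxes the PACE allocation constraint against agent $i$ into the canonical program \eqref{eq-optimization-canonical} parametrized by $\bar\ell T$ (using $\Hat U_j^{t-1} \ge U_j^{t-1}$ and $\Hat U_i^{t-1} \le \bar\ell T$), applies \Cref{lemma-number-of-items} to each $|C_i\cap A_j|$, and concludes $|C_i| < cT$ --- contradicting $V_i \ge cT$. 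Because the canonical program is parametrized directly by $\bar\ell T \to\infty$, the asymptotic estimate in \Cref{lemma-number-of-items} applies without any bootstrap; the circularity you worry about (needing $S\to\infty$ to invoke the asymptotic lemmas) simply never arises. Your re-use of the \Cref{thm-envy-pace} transformation and the $t^*$-termination argument is a heavier tool for the same job, and your two-step bootstrap (first ruling out bounded $U_i(A_i)$ by a direct counting argument, then invoking the asymptotics) is a valid but more circuitous way to get to the same place. Both routes rest on the same combinatorial core --- bounding $|C_i\cap A_j|$ via a canonical-program reduction --- so they buy comparable constants.

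There is, however, a concrete error: after the termination argument you write ``Therefore $T\le t^*(1+o(1))$.'' That step in the proof of \Cref{thm-envy-pace} is carried out \emph{after} the items with $v_i^t=0$ have been removed without loss of generality, so it bounds the length of the \emph{truncated} sequence, i.e.\ $|C_i|$, not the true horizon $T$. You need to write $|C_i| \le t^*(1+o(1))$ and then combine it with $|C_i|\ge V_i\ge cT$ (which you correctly recorded earlier) to get $cT \le t^*(1+o(1))$. Doing this puts a factor of $c$ into the final constant, giving
\[
d \;=\; \frac{c\,\varepsilon(\varepsilon-\ell)}{\varepsilon+(n-1)(1+\log(1/\varepsilon))}\,,
\]
whereas your stated $d=\frac{\varepsilon(\varepsilon-\ell)}{\varepsilon+(n-1)(1+\log(1/\varepsilon))}$ omits the $c$. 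The omission is not cosmetic: since $U_i(A_i)\le V_i$ and there exist inputs in $\mathcal V_{\varepsilon,c}^T$ with $V_i \approx cT$, any valid $d$ must satisfy $d\lesssim c$, which your expression violates whenever $c$ is small. Your closing claim $\frac{d}{c}=\Omega(1/n)$ survives with the corrected $d$ (it then becomes independent of $c$), but the constant you offer for the first part of the lemma is wrong. A secondary point worth flagging: \Cref{lemma-auxiliary-1} and \Cref{lemma-auxiliary-2} are proved under the $\ell$-bound of \Cref{thm-envy-pace}, namely $\ell<\frac{\varepsilon^2}{\varepsilon+1+(n-1)(1+\log(1/\varepsilon))}$, which is neither implied by nor implies the $\ell<\frac{c\varepsilon^2}{1+(n-1)(1+\log(1/\varepsilon))}$ hypothesis of this lemma (the comparison flips with $c$); if you want to re-use those lemmas verbatim you would need to reconcile the two conditions, whereas the paper's direct reduction to \Cref{lemma-number-of-items} sidesteps this entirely.
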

The proof of \Cref{lemma-infinite-utility} is also by a reduction from the canonical problem \eqref{eq-optimization-canonical}. We remark that \Cref{lemma-infinite-utility} yields more than infinite agent utilities: it also tells us that the utilities are linear in $T$. Moreover, from $d/c = \Omega\left(n^{-1}\right)$ we know that PACE computes an asymptotic approximate proportional allocation, which helps to derive a bounded competitive ratio \textit{w.r.t.} Nash welfare. The bound can be furthermore refined using the envy results. 

\begin{theorem}[Upper Bound of Competitive Ratio for PACE]
\label{thm-cr-pace}
    Under Assumption \ref{assumption-pace-2}, for any input $\bm{v}\in \mathcal{V}_{\varepsilon, c}^T$, if $$ \ell < \frac{c \varepsilon^2}{1 +\varepsilon+(n-1)(1+\log 1/\varepsilon)}, \ r=1,$$ 
    PACE achieves
$$  \lim_{T\rightarrow \infty} \sup_{v \in \mathcal{V}^T_\varepsilon} \left(\frac{\prod_{i=1}^n U_i^\star}{\prod_{i=1}^n U_i(A_i)} \right)^{1/n} \leq \left(1+2\log \frac{1}{\varepsilon}\right) \cdot \frac{1}{c}.$$
\end{theorem}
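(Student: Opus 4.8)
# Proof Proposal for Theorem (Upper Bound of Competitive Ratio for PACE)

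\textbf{Overall approach.} The plan is to combine two ingredients that are already available from the preceding development: (i) the linear-in-$T$ lower bound on each agent's realized utility under PACE, namely $\liminf_{T\to\infty} U_i(A_i)/T \geq d$ with $d/c = \Omega(1/n)$, established in \Cref{lemma-infinite-utility}; and (ii) the multiplicative envy bound $\mathrm{Envy}_{ij} \leq 1 + 2\log(1/\varepsilon)$ from \Cref{thm-envy-pace}, which applies because the hypothesis on $\ell$ here is strictly stronger than the one in \Cref{thm-envy-pace} and Assumption \ref{assumption-pace-2} implies Assumption \ref{assumption-pace-1} via \Cref{lemma-infinite-utility}. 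The key observation is that an envy bound plus a proportionality-type bound together control how large the offline optimum $U_i^\star$ can be relative to the PACE utilities, agent by agent.

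\textbf{Key steps, in order.} First I would fix an arbitrary agent $i$ and bound $U_i^\star$ from above in terms of the PACE allocation. Since $\bm{x}^\star$ is feasible, $U_i^\star = \sum_{j=1}^n U_i(A_j^\star \cap \text{stuff})$ — more cleanly, write $U_i^\star = \sum_{j} \sum_{t \in A_j} x_i^{\star,t} v_i^t \leq \sum_j U_i(A_j)$, i.e. agent $i$'s value for the entire item set is an upper bound on any feasible bundle. Then use the envy bound: $U_i(A_j) \leq (1 + 2\log(1/\varepsilon))\, U_i(A_i)$ for every $j$ (this is exactly $\mathrm{Envy}_{ij} \leq 1+2\log(1/\varepsilon)$ with equal weights, where $\mathrm{Envy}_{ij} = U_i(A_j)/U_i(A_i)$ in the integral case). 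Hence $U_i^\star \leq n\,(1+2\log(1/\varepsilon))\,U_i(A_i)$, which already gives $\prod_i (U_i^\star/U_i(A_i))^{1/n} \leq n(1+2\log(1/\varepsilon))$ — but that has a spurious factor of $n$. To remove it, I instead bound $U_i^\star$ directly: $U_i^\star \leq V_i = \sum_{t=1}^T v_i^t \leq T$ (since $v_i^t \leq 1$), while from \Cref{lemma-infinite-utility}, $U_i(A_i) \geq dT$ asymptotically. This gives $U_i^\star / U_i(A_i) \leq 1/d \cdot (1+o(1))$. Combining with $d \geq$ (something) $\cdot c / n$ would again reintroduce $n$; so the real point must be to interpolate: use $U_i^\star \leq \sum_j U_i(A_j)$ and split the sum into the $j=i$ term (contributing $U_i(A_i)$, ratio $1$) and bound the off-diagonal terms more cleverly using \emph{both} envy and the fact that $\sum_j U_i(A_j) = V_i \leq T$ together with $U_i(A_i) \geq dT$. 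Concretely, $\sum_{j\neq i} U_i(A_j) \leq V_i - U_i(A_i) \leq T - dT$, and also each $U_i(A_j) \leq (1+2\log(1/\varepsilon)) U_i(A_i)$; the product over $i$ of the resulting per-agent ratios, via AM–GM / the structure of Assumption \ref{assumption-pace-2}, should collapse to $(1+2\log(1/\varepsilon))/c$. The cleanest route: show $U_i^\star \leq U_i(A_i)/c \cdot (1+o(1))$ is false in general, but $\prod_i U_i^\star \leq \prod_i (V_i) \leq \prod_i T = T^n$ while $\prod_i U_i(A_i) \geq \prod_i (c\,U_i^\star/(1+2\log 1/\varepsilon))$ — no. Let me reconsider: the correct chain is $U_i^\star \leq V_i \leq T$ and, using $V_i \geq cT$ from Assumption \ref{assumption-pace-2} together with the envy bound to relate $U_i(A_i)$ to $V_i/n$...

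\textbf{The main obstacle.} The crux — and where I expect the real work to lie — is getting the $1/c$ factor rather than $n/c$ or $(1+2\log 1/\varepsilon)\cdot n$. This requires a sharp argument that $U_i^\star$ cannot be much larger than $U_i(A_i)/c$: intuitively, $U_i^\star \leq V_i \leq T$ on one hand, and PACE's near-proportionality ($U_i(A_i) \geq dT$ with $d \gtrsim c/n$) on the other, but the naive combination loses a factor of $n$. The resolution should be that one does \emph{not} need $U_i(A_i) \geq dT$ for \emph{every} $i$ with the same $d$; rather, one argues per-agent that $U_i^\star \leq V_i$ and $U_i(A_i) \geq c \cdot (\text{agent } i\text{'s ``fair share''})$, and since $U_i^\star$ itself is at most agent $i$'s value for everything, the offline NW-optimal allocation — being Pareto efficient and proportional up to the structure forced by $\bm{v} \in \mathcal{V}_{\varepsilon,c}^T$ — cannot give agent $i$ more than a $1/c$-inflation of what PACE gives, because the envy bound forces PACE's allocation to be ``balanced'' across agents in exactly the way that limits how concentrated $\bm{x}^\star$ can be. I would therefore carry out Step 2 as: $U_i^\star \leq V_i$; then use $\sum_i U_i(A_i) = \sum_i V_i \geq ncT$ is wrong — rather $\sum_t \max_i v_i^t \geq$ something. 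The honest plan is: (1) reduce to the integral-allocation formulas for envy and utility; (2) use $U_i^\star \leq V_i \leq T$; (3) use \Cref{lemma-infinite-utility} and \Cref{thm-envy-pace} to show $U_i(A_i) \geq c\,T/(1+2\log(1/\varepsilon)) \cdot (1+o(1))$ for each $i$ by combining $V_i \geq cT$ with $V_i = \sum_j U_i(A_j) \leq n \cdot$ — no, $\leq (1+2\log(1/\varepsilon))U_i(A_i) \cdot n$, again the factor $n$. I will need to find the exact cancellation in the authors' \Cref{lemma-infinite-utility} statement — the constant $d$ there, not just its order — to land exactly on $c/(1+2\log(1/\varepsilon))$; that identification of $d$ is the single hardest point, after which the theorem follows by taking products and $T\to\infty$.
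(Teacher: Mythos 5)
The ingredients you identified are correct (the envy bound from \Cref{thm-envy-pace}, the linear lower bound on $U_i(A_i)$ from \Cref{lemma-infinite-utility} via Assumption \ref{assumption-pace-2}, the decomposition $V_i = \sum_j U_i(A_j)$), and you correctly diagnosed that the naive per-agent chain loses a factor of $n$. But you never found the step that cancels it, and the place you went hunting for it is the wrong place.

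The factor of $n$ is \emph{not} recoverable per agent, and the constant $d$ in \Cref{lemma-infinite-utility} will not rescue it: the paper's per-agent bound really is
\begin{equation*}
\lim_{T\to\infty}\frac{U_i(A_i)}{T} \;\geq\; \frac{c}{1+(n-1)(1+2\log(1/\varepsilon))} \;\geq\; \frac{c}{n\left(1+2\log(1/\varepsilon)\right)},
\end{equation*}
obtained by summing the envy bound over $j$ to get $V_i \leq \left(1+(n-1)(1+2\log(1/\varepsilon))\right) U_i(A_i)$ and then invoking $V_i\geq cT$. This has the $1/n$ you were fighting. The cancellation comes from the \emph{numerator}, not the denominator: because $\bm{x}^\star$ is feasible and $v_i^t\leq 1$, we have $\sum_i U_i^\star \leq \sum_t \sum_i x_i^{\star,t} v_i^t \leq T$, so by AM--GM
\begin{equation*}
\prod_{i=1}^n U_i^\star \;\leq\; \left(\frac{T}{n}\right)^{n}.
\end{equation*}
This is strictly sharper than your $\prod_i U_i^\star \leq \prod_i V_i \leq T^n$ by exactly the factor $n^{n}$ you were missing. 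Taking the ratio of products and then the $n$-th root, the $n$'s cancel and you land on $(1+2\log(1/\varepsilon))/c$. Your write-up explicitly considered and discarded the product-of-$V_i$ bound (``$\prod_i U_i^\star \leq \prod_i V_i \leq T^n$ --- no''), which shows you were in the right neighborhood but applied AM--GM to the wrong quantity; you needed $\sum_i U_i^\star \leq T$, not $U_i^\star \leq T$ entrywise.
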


\Cref{thm-cr-pace} gives an upper bound depends only on parameters $\varepsilon$ and $c$, which are both independent of $T$. We remark that the constant $1/c$ might not be tight. 

We also remark that in both \Cref{thm-envy-pace} and \Cref{thm-cr-pace} $\ell$ is at most the order of $1/n$, which is aligned with the stationary and non-stationary setting (projecting utilities to an $\Omega(1/n)$ bound is unreasonable since there are $n$ agents). However with adversarial input $\ell$ decreases as $\varepsilon\rightarrow 0$, which means that PACE requires a wider projection interval when its input becomes potentially more extreme.

\section{Conclusion}
\label{section-discussions}
We proved horizon-independent bounds for envy and Nash welfare for the integral greedy algorithm and PACE under adversarial inputs with mild assumptions. Our results complete the first best-of-many-worlds result for online fair allocation, since PACE thus achieves guarantees under stochastic~\citep{gao2021online}, stochastic but nonstationary~\citep{liao2022nonstationary}, and adversarial inputs.
Moreover, our results on integral greedy are of independent interest, as they characterize assumptions needed to achieve guarantees for that algorithm.

It remains open whether the constant in \Cref{thm-cr-pace} can be improved. A more general open question is to explore more best-of-many-worlds online fair allocation algorithms, with potentially different performance measures and assumptions.
%
%
%
\bibliographystyle{plainnat}
\bibliography{main}

\begin{thebibliography}{34}
\providecommand{\natexlab}[1]{#1}
\providecommand{\url}[1]{\texttt{#1}}
\expandafter\ifx\csname urlstyle\endcsname\relax
  \providecommand{\doi}[1]{doi: #1}\else
  \providecommand{\doi}{doi: \begingroup \urlstyle{rm}\Url}\fi

\bibitem[Anari et~al.(2017)Anari, Oveis~Gharan, Saberi, and
  Singh]{anari2016nash}
Nima Anari, Shayan Oveis~Gharan, Amin Saberi, and Mohit Singh.
\newblock Nash social welfare, matrix permanent, and stable polynomials.
\newblock In \emph{8th Innovations in Theoretical Computer Science Conference
  (ITCS 2017)}. Schloss Dagstuhl-Leibniz-Zentrum fuer Informatik, 2017.

\bibitem[Anari et~al.(2018)Anari, Mai, Gharan, and Vazirani]{anari2018nash}
Nima Anari, Tung Mai, Shayan~Oveis Gharan, and Vijay~V Vazirani.
\newblock Nash social welfare for indivisible items under separable,
  piecewise-linear concave utilities.
\newblock In \emph{Proceedings of the Twenty-Ninth Annual ACM-SIAM Symposium on
  Discrete Algorithms}, pages 2274--2290. SIAM, 2018.

\bibitem[Azar et~al.(2016)Azar, Buchbinder, and Jain]{azar2010allocate}
Yossi Azar, Niv Buchbinder, and Kamal Jain.
\newblock How to allocate goods in an online market?
\newblock \emph{Algorithmica}, 74\penalty0 (2):\penalty0 589--601, 2016.

\bibitem[Bach and Levy(2019)]{bach2019universal}
Francis Bach and Kfir~Y Levy.
\newblock A universal algorithm for variational inequalities adaptive to
  smoothness and noise.
\newblock In \emph{Conference on learning theory}, pages 164--194. PMLR, 2019.

\bibitem[Balseiro et~al.(2023)Balseiro, Lu, and Mirrokni]{balseiro2023best}
Santiago~R Balseiro, Haihao Lu, and Vahab Mirrokni.
\newblock The best of many worlds: Dual mirror descent for online allocation
  problems.
\newblock \emph{Operations Research}, 71\penalty0 (1):\penalty0 101--119, 2023.

\bibitem[Banerjee et~al.(2022)Banerjee, Gkatzelis, Gorokh, and
  Jin]{banerjee2022online}
Siddhartha Banerjee, Vasilis Gkatzelis, Artur Gorokh, and Billy Jin.
\newblock Online nash social welfare maximization with predictions.
\newblock In \emph{Proceedings of the 2022 Annual ACM-SIAM Symposium on
  Discrete Algorithms (SODA)}, pages 1--19. SIAM, 2022.

\bibitem[Barman et~al.(2018)Barman, Krishnamurthy, and
  Vaish]{barman2018finding}
Siddharth Barman, Sanath~Kumar Krishnamurthy, and Rohit Vaish.
\newblock Finding fair and efficient allocations.
\newblock In \emph{Proceedings of the 2018 ACM Conference on Economics and
  Computation}, pages 557--574, 2018.

\bibitem[Barman et~al.(2020)Barman, Bhaskar, Krishna, and
  Sundaram]{barman2020tight}
Siddharth Barman, Umang Bhaskar, Anand Krishna, and Ranjani~G Sundaram.
\newblock Tight approximation algorithms for p-mean welfare under subadditive
  valuations.
\newblock \emph{arXiv preprint arXiv:2005.07370}, 2020.

\bibitem[Benade et~al.(2018)Benade, Kazachkov, Procaccia, and
  Psomas]{benade2018make}
Gerdus Benade, Aleksandr~M Kazachkov, Ariel~D Procaccia, and
  Christos-Alexandros Psomas.
\newblock How to make envy vanish over time.
\newblock In \emph{Proceedings of the 2018 ACM Conference on Economics and
  Computation}, pages 593--610, 2018.

\bibitem[Bogomolnaia et~al.(2022)Bogomolnaia, Moulin, and
  Sandomirskiy]{bogomolnaia2022fair}
Anna Bogomolnaia, Herv{\'e} Moulin, and Fedor Sandomirskiy.
\newblock On the fair division of a random object.
\newblock \emph{Management Science}, 68\penalty0 (2):\penalty0 1174--1194,
  2022.

\bibitem[Caragiannis et~al.(2019)Caragiannis, Kurokawa, Moulin, Procaccia,
  Shah, and Wang]{caragiannis2019unreasonable}
Ioannis Caragiannis, David Kurokawa, Herv{\'e} Moulin, Ariel~D Procaccia,
  Nisarg Shah, and Junxing Wang.
\newblock The unreasonable fairness of maximum nash welfare.
\newblock \emph{ACM Transactions on Economics and Computation (TEAC)},
  7\penalty0 (3):\penalty0 1--32, 2019.

\bibitem[Castiglioni et~al.(2022)Castiglioni, Celli, and
  Kroer]{castiglioni2023online}
Matteo Castiglioni, Andrea Celli, and Christian Kroer.
\newblock Online learning under budget and {ROI} constraints and applications
  to bidding in non-truthful auctions.
\newblock \emph{arXiv preprint arXiv:2302.01203}, 2022.

\bibitem[Celli et~al.(2022)Celli, Castiglioni, and Kroer]{celli2022best}
Andrea Celli, Matteo Castiglioni, and Christian Kroer.
\newblock Best of many worlds guarantees for online learning with knapsacks.
\newblock \emph{arXiv preprint arXiv:2202.13710}, 2022.

\bibitem[Chaudhury et~al.(2021)Chaudhury, Garg, and Mehta]{chaudhury2021fair}
Bhaskar~R Chaudhury, Jugal Garg, and Ruta Mehta.
\newblock Fair and efficient allocations under subadditive valuations.
\newblock In \emph{Proceedings of the AAAI Conference on Artificial
  Intelligence}, 2021.

\bibitem[Cole and Gkatzelis(2018)]{cole2018approximating}
Richard Cole and Vasilis Gkatzelis.
\newblock Approximating the nash social welfare with indivisible items.
\newblock \emph{SIAM Journal on Computing}, 47\penalty0 (3):\penalty0
  1211--1236, 2018.

\bibitem[Cole et~al.(2017)Cole, Devanur, Gkatzelis, Jain, Mai, Vazirani, and
  Yazdanbod]{cole2017convex}
Richard Cole, Nikhil Devanur, Vasilis Gkatzelis, Kamal Jain, Tung Mai, Vijay~V
  Vazirani, and Sadra Yazdanbod.
\newblock Convex program duality, fisher markets, and nash social welfare.
\newblock In \emph{Proceedings of the 2017 ACM Conference on Economics and
  Computation}, pages 459--460, 2017.

\bibitem[Conitzer et~al.(2022)Conitzer, Kroer, Panigrahi, Schrijvers,
  Stier-Moses, Sodomka, and Wilkens]{conitzer2022pacing}
Vincent Conitzer, Christian Kroer, Debmalya Panigrahi, Okke Schrijvers,
  Nicolas~E Stier-Moses, Eric Sodomka, and Christopher~A Wilkens.
\newblock Pacing equilibrium in first price auction markets.
\newblock \emph{Management Science}, 68\penalty0 (12):\penalty0 8515--8535,
  2022.

\bibitem[Eisenberg and Gale(1959)]{eisenberg1959consensus}
Edmund Eisenberg and David Gale.
\newblock Consensus of subjective probabilities: The pari-mutuel method.
\newblock \emph{The Annals of Mathematical Statistics}, 30\penalty0
  (1):\penalty0 165--168, 1959.

\bibitem[Gao and Kroer(2023)]{gao2023infinite}
Yuan Gao and Christian Kroer.
\newblock Infinite-dimensional fisher markets and tractable fair division.
\newblock \emph{Operations Research}, 71\penalty0 (2):\penalty0 688--707, 2023.

\bibitem[Gao et~al.(2021)Gao, Peysakhovich, and Kroer]{gao2021online}
Yuan Gao, Alex Peysakhovich, and Christian Kroer.
\newblock Online market equilibrium with application to fair division.
\newblock \emph{Advances in Neural Information Processing Systems},
  34:\penalty0 27305--27318, 2021.

\bibitem[Garg et~al.(2018)Garg, Hoefer, and Mehlhorn]{garg2018approximating}
Jugal Garg, Martin Hoefer, and Kurt Mehlhorn.
\newblock Approximating the nash social welfare with budget-additive
  valuations.
\newblock In \emph{Proceedings of the Twenty-Ninth Annual ACM-SIAM Symposium on
  Discrete Algorithms}, pages 2326--2340. SIAM, 2018.

\bibitem[Garg et~al.(2020)Garg, Kulkarni, and Kulkarni]{garg2020approximating}
Jugal Garg, Pooja Kulkarni, and Rucha Kulkarni.
\newblock Approximating nash social welfare under submodular valuations through
  (un) matchings.
\newblock In \emph{Proceedings of the fourteenth annual ACM-SIAM symposium on
  discrete algorithms}, pages 2673--2687. SIAM, 2020.

\bibitem[He et~al.(2019)He, Procaccia, Psomas, and Zeng]{he2019achieving}
Jiafan He, Ariel~D Procaccia, CA~Psomas, and David Zeng.
\newblock Achieving a fairer future by changing the past.
\newblock \emph{IJCAI'19}, 2019.

\bibitem[Huang et~al.(2022)Huang, Li, Shu, and Wei]{huang2022online}
Zhiyi Huang, Minming Li, Xinkai Shu, and Tianze Wei.
\newblock Online nash welfare maximization without predictions.
\newblock \emph{arXiv preprint arXiv:2211.03077}, 2022.

\bibitem[Kaneko and Nakamura(1979)]{kaneko1979nash}
Mamoru Kaneko and Kenjiro Nakamura.
\newblock The nash social welfare function.
\newblock \emph{Econometrica: Journal of the Econometric Society}, pages
  423--435, 1979.

\bibitem[Lee(2017)]{lee2017apx}
Euiwoong Lee.
\newblock Apx-hardness of maximizing nash social welfare with indivisible
  items.
\newblock \emph{Information Processing Letters}, 122:\penalty0 17--20, 2017.

\bibitem[Li and Vondr{\'a}k(2022)]{li2022constant}
Wenzheng Li and Jan Vondr{\'a}k.
\newblock A constant-factor approximation algorithm for nash social welfare
  with submodular valuations.
\newblock In \emph{2021 IEEE 62nd Annual Symposium on Foundations of Computer
  Science (FOCS)}, pages 25--36. IEEE, 2022.

\bibitem[Liao et~al.(2022)Liao, Gao, and Kroer]{liao2022nonstationary}
Luofeng Liao, Yuan Gao, and Christian Kroer.
\newblock Nonstationary dual averaging and online fair allocation.
\newblock \emph{Advances in Neural Information Processing Systems},
  35:\penalty0 37159--37172, 2022.

\bibitem[McGlaughlin and Garg(2020)]{mcglaughlin2020improving}
Peter McGlaughlin and Jugal Garg.
\newblock Improving nash social welfare approximations.
\newblock \emph{Journal of Artificial Intelligence Research}, 68:\penalty0
  225--245, 2020.

\bibitem[Moulin(2004)]{moulin2004fair}
Herv{\'e} Moulin.
\newblock \emph{Fair division and collective welfare}.
\newblock MIT press, 2004.

\bibitem[Nash~Jr(1950)]{nash1950bargaining}
John~F Nash~Jr.
\newblock The bargaining problem.
\newblock \emph{Econometrica: Journal of the econometric society}, pages
  155--162, 1950.

\bibitem[Varian(1974)]{varian1974equity}
Hal~R Varian.
\newblock Equity, envy, and efficiency.
\newblock \emph{Journal of Economic Theory}, 9\penalty0 (1):\penalty0 63--91,
  1974.

\bibitem[Xiao(2009)]{xiao2009dual}
Lin Xiao.
\newblock Dual averaging method for regularized stochastic learning and online
  optimization.
\newblock \emph{Advances in Neural Information Processing Systems}, 22, 2009.

\bibitem[Zeng and Psomas(2020)]{zeng2020fairness}
David Zeng and Alexandros Psomas.
\newblock Fairness-efficiency tradeoffs in dynamic fair division.
\newblock In \emph{Proceedings of the 21st ACM Conference on Economics and
  Computation}, pages 911--912, 2020.

\end{thebibliography}
\appendix
\newpage
\section{Additional Related Work}
\label{appendix-related-work}
\paragraph{Offline Nash Welfare Approximation}
To explore PACE's behavior under adversarial inputs, we focus on the approximation of the EG objective, which is equivalent to maximizing Nash welfare. Nash welfare was introduced by \citet{nash1950bargaining}, and is one of the ideal proxies for balancing fairness and efficiency in allocation problems ~\citep{kaneko1979nash}. Maximizing Nash welfare is well-studied in the offline settings. For divisible items, it is equivalent to the Eisenberg-Gale convex program ~\citep{eisenberg1959consensus}. For indivisible items, computing a Nash-welfare-maximizing allocation is APX-hard~\citet{lee2017apx} for additive utilities. \citet{cole2018approximating} gave the first approximation algorithm, with $2e^{1/e}$ competitive ratio. \citet{anari2016nash} applied matrix permanent and stable polynomials to get an $e$-approximation. \citet{cole2017convex} improved the ratio to $2$. The state-of-the-art algorithm, given subsequently by \citet{barman2018finding}, has a ratio of $e^{1/e}$. 
More general utility classes have been considered beyond additive ones. \citet{garg2018approximating} achieved a $2e^{1/e}+o(1)$ ratio under budget-additive values. \citet{anari2018nash} considered separable, piecewise-linear and concave valuations and gave an $e^2$-competitive algorithm. For submodular utilities, \citet{garg2020approximating} achieved an $O(n\log n)$-approximation, which was recently improved to a constant ratio~\citep{li2022constant}. For more generalized forms of submodular utilities, an $O(n)$ ratio was reached by \citet{barman2020tight} and \citet{chaudhury2021fair} independently.

\paragraph{Online Allocation with Envy Guarantees}
Besides Nash welfare maximization, our analysis on multiplicative envy is also related to the line of works focused on achieving (possibly approximate) envy-freeness in the online setting. \citet{bogomolnaia2022fair} assume stochastic input and enforce envy-freeness as a soft constraint, while maximizing social welfare. \citet{he2019achieving} further allow reallocating previous items, and show that $O(T)$ reallocation is enough to achieve envy-freeness up to an item. \citet{benade2018make} considers envy minimization in the stochastic, indivisible setting, and show that allocating each item to each agent uniformly at random is near-optimal up to logarithmic factors. \citet{zeng2020fairness} considers the indivisible setting with non-adaptive adversary, showing that nontrivial approximation of envy-freeness and Pareto-optimality is hard to achieve simultaneously. In contrast to the above works, we focus on \textit{multiplicative} envy instead of additive envy. While it is shown by \citet{caragiannis2019unreasonable} that Nash welfare maximizing allocation has approximate envy-freeness, our analysis on multiplicative envy is independent of other fairness measures.
Finally note that the PACE algorithm, while focused on the stronger guarantee of asymptotically maximizing Nash welfare in the stochastic setting, actually achieves asymptotic envy-freeness as well, since it converges to the equilibrium allocation of the underlying Fisher market.
\section{Missing Proofs in \Cref{section-greedy}}
\label{appendix-greedy}
\subsection{Proof of \Cref{thm-envy}}
\label{proof-thm-envy}
According to \Cref{lemma-induction-property}, it suffices to bound the envy of all $2$--agent instances. We will show that for any input $\bm v \in \mathcal{V}_{\varepsilon}^T$, \begin{equation}
    \frac{B_2 U_2(A_1)}{B_1 U_2(A_2)}\leq 1 + 2\log \frac{1}{\varepsilon} + \frac{(1+\varepsilon^4)(1+\varepsilon^2)}{\varepsilon^6} \frac{1}{T} + \frac{(1+\varepsilon^2)^2}{\varepsilon^8} \frac{1}{T^2}
\end{equation}
Consider the following transformation operation from instance $\bm{v}$ into ${\bm{v}}^\prime$:
    \begin{enumerate}
        \item Set agent $1$'s valuation for all items in $A_2$ to $0$. 
        \item Move all items in $A_2$ to the beginning of the input sequence, and all items in $A_1$ to the end of the input sequence (in arbitrary order).
    \end{enumerate}
We claim that the allocation result under the budget-adapted greedy allocation algorithm is invariant to this transformation. 
Clearly, all items in $A_2$ will still be allocated to agent $2$, since we've set agent $1$'s valuation on them to zero. For item $t \in A_1$, let $t^\prime$ be the round it appears in $\bm v^{\prime}$. One can check
\begin{equation*}
     \frac{B_1 v_1^t}{B_2 v_2^t} \geq \frac{U_1^t(\bm v)}{U_2^t(\bm v)} \geq \frac{U_1^{t^\prime}(\bm{v}^\prime)}{U_2^{t^\prime}(\bm{v}^\prime)},
\end{equation*}
where the first inequality is due to $t\in A_1$, and the second inequality is because $U_1^{t^\prime}(\bm{v}^\prime) = U_1^{t}(\bm{v})$ and $U_2^{t^\prime}(\bm{v}^\prime) = \sum_{s=1}^T x_2^s v_2^s \geq U_1^{t}(\bm{v})$.

Therefore, it suffices to consider only the transformed sequences, i.e., ones with agent $2$ getting every item in the first $R$ rounds, and then nothing thereafter. To find the transformed sequence of maximum envy, we next note that we can increase agent $2$'s valuation on the items occurring after round $R$ as much as possible, while ensuring that none of them are allocated to agent $2$ by the algorithm, while weakly increasing envy. This is equivalent to the following optimization problem, with the multiplicative envy of agent $2$ upper bounded by its objective:
\begin{equation*}
\label{eq-optimization-envy}
    \begin{aligned}
    \max_{v_1^t, v_2^t} & \ \ \frac{B_2}{B_1 V_2^0}\sum_{t=R+1}^{\infty} v_2^t \\
    \text{s.t.} & \ \ v_1^t, v_2^t \in \{0\}\cup[\varepsilon, 1] \\
                & \ \ \frac{v_1^t}{v_2^t} \geq \frac{B_2 U_1^{t-1}}{B_1 V_2^0} \\
                & \ \ U_1^t = \sum_{s=R+1}^{t} v_{1}^s \\
    \end{aligned}
\end{equation*}

Notice the constraint that all items after round $R$ are allocated to agent $1$. Together with the $\{0\}\cup[\varepsilon, 1]$ range for values, this gives an upper bound of $v_2^t/v_1^t$ on each round:
\begin{equation*}
    \frac{v_2^t}{v_1^t}\leq q(U_1^{t-1}) := \begin{cases}
        \min\left\{\frac{B_1 V_2^0}{B_2 U_1^{t-1}}, \frac{1}{\varepsilon}\right\} & U_1^{t-1} \leq \frac{B_1}{B_2\varepsilon}V_2^0\\
        0 & U_1^{t-1} > \frac{B_1}{B_2\varepsilon}V_2^0
    \end{cases}
\end{equation*}

The multiplicative envy can then be further upper bounded by
\begin{equation}
\label{eq-upper-bound-q}
    \frac{B_2 U_2(A_1)}{B_1 U_2(A_2)} \leq \frac{B_2}{B_1 V_2^0} \sum_{t=R+1}^{R+S} v_1^t q(U_1^{t-1})
\end{equation}
where $S$ is the number of rounds in the above optimization problem with $v_1^t, v_2^t > 0$. Since we should not allocate anything to agent $2$ after round $R$, we have $U_1^{S} \in \left(0,\frac{B_1 V_2^0}{B_2 \varepsilon}+1\right]$.

Next, we show that the right hand side of (\ref{eq-upper-bound-q}) can be approximated with an integration. We have
\begin{align}
     &\frac{B_2}{B_1 V_2^0}\left(\sum_{t=R+1}^{R+S} v_1^t q(U_1^{t-1}) - \int_{0}^{U_1^{R+S}} q(u)\mathrm{d} u\right) \nonumber\\
     =&\frac{B_2}{B_1 V_2^0} \sum_{t=R+1}^{R+S} \left(v_1^t q(U_1^{t-1}) - \int_{U_1^{t-1}}^{U_1^t} q(u)\mathrm{d}u\right) \nonumber\\
     =&\frac{B_2}{B_1 V_2^0} \sum_{t=R+1}^{R+S} \int_{U_1^{t-1}}^{U_1^t} \left(q(U_1^{t-1} ) - q(u)\right) \mathrm{d} u \nonumber\\
     \overset{\text{(a)}}{\leq}& \frac{B_2}{B_1 V_2^0}\sum_{t=R+1}^{R+S} (v_1^t)\left(q(U_1^{t-1})- q(U_1^t)\right)\nonumber\\
     \overset{\text{(b)}}{\leq}& \frac{B_2}{B_1 V_2^0}\sum_{t=R+1}^{R+S} \frac{B_1}{B_2 \varepsilon^2 V_2^0} (v_1^t)^2 \nonumber\\
     =& \frac{1}{\varepsilon^2 (V_2^0)^2} \sum_{t=R+1}^{R+S} (v_1^t)^2\nonumber\\
     \overset{\text{(c)}}{\leq}& \frac{U_1^{R+S}}{\varepsilon^2 (V_2^0)^2} \nonumber
\end{align}
where (a) is because $q(u)$ is non--increasing, (b) is because the right--side derivative of $q(u)$ is upper bounded by $\frac{B_1 }{B_2 \varepsilon^2 V_2^0}$, and (c) is because $v_1^t \leq 1$. Combined with \ref{eq-upper-bound-q} we have
\begin{equation}
    \frac{B_2 U_2(A_1)}{B_1 U_2(A_2)} \leq \frac{B_2}{B_1 V_2^0}\int_{0}^{U_1^{R+S}} q(u)\mathrm{d} u + \frac{U_1^{R+S}}{\varepsilon^2 (V_2^0)^2} \\
\end{equation}
We then deal with $U_1^{R+S}$ using the relation $T\geq R+S$. This is true because we allow no ``null'' items in a $T$--round instance. From $\varepsilon S \leq U_1^{R+S} \leq V_2^0/\varepsilon +1 \leq R/\varepsilon +1$ we know $V_2^0\geq \frac{\varepsilon^3}{ 1+\varepsilon^2}T$. Hence, 
\begin{align*}
    \frac{B_2 U_2(A_1)}{B_1 U_2(A_2)} &\leq \frac{B_2}{B_1 V_2^0}\int_{0}^{V_2^0/\varepsilon +1} q(u)\mathrm{d} u + \frac{1}{\varepsilon^2 V_2^0} \left(\frac{1}{\varepsilon}+\frac{1}{V_2^0}\right)\\
    &=  \frac{B_2}{B_1 V_2^0}\int_{0}^{V_2^0/\varepsilon} q(u)\mathrm{d} u + \frac{B_2}{B_1 V_2^0}\int_{V_2^0/\varepsilon}^{V_2^0/\varepsilon + 1} q(u)\mathrm{d} u +\frac{1}{\varepsilon^2 V_2^0} \left(\frac{1}{\varepsilon}+\frac{1}{V_2^0}\right)\\
    &\leq 1 + 2\log \frac{1}{\varepsilon} + \frac{\varepsilon}{V_2^0} + \frac{1}{\varepsilon^2 V_2^0} \left(\frac{1}{\varepsilon}+\frac{1}{V_2^0}\right)\\
    &\leq 1 + 2\log \frac{1}{\varepsilon} + \frac{(1+\varepsilon^4)(1+\varepsilon^2)}{\varepsilon^6} \frac{1}{T} + \frac{(1+\varepsilon^2)^2}{\varepsilon^8} \frac{1}{T^2}\\
    &= 1 + 2\log \frac{1}{\varepsilon}+ O\left(\frac{1}{T}\right)
\end{align*}
This finishes our proof. 
\subsection{Proof of \Cref{lemma-number-of-items}}
   For the same reason as \Cref{thm-envy}, it suffices to consider $j=1, i=2$ in $2$-agent instances, with all inputs transformed as described in the proof of \Cref{thm-envy}. $U$ is the short for $U_2(A_2)$, and also the parameter for the canonical optimization problem.

    Consider the two phases in the horizon each with length $T_1$ and $T_2$. We upper bound them respectively:
    \begin{enumerate}
        \item Rounds with $U_1^t \leq U$. Since the increment of $v_1^t$ is at least $\varepsilon$ whenever agent $1$ receives an item, there are at most $U/\varepsilon$ rounds in this phase.
        \begin{equation*}
            \lim_{U\rightarrow \infty} \frac{T_1}{U} \leq \frac{1}{\varepsilon}.
        \end{equation*}
        \item Rounds with $U<U_1^t \leq U/\varepsilon$. Assume this phase begins at round $s$. By the analysis of the canonical optimization problem \eqref{eq-optimization-canonical}, agent $2$'s envy in this phase can be upper bounded asymptotically. 
        \begin{equation*}
            \frac{1}{U}\sum_{t=s}^{s+T_2-1} v_2^t \leq \frac{1}{U}\sum_{t=s}^{\infty} v_1^t q(U_1^{t-1}) \rightarrow \frac{1}{U} \int_{U}^{U/\varepsilon} q(U_1)\mathrm{d} U_1 = 1+\log \frac{1}{\varepsilon}.
        \end{equation*}
        Since $v_2^t$ is at least $\varepsilon$ in each round, 
        \begin{equation*}
            \lim_{U\rightarrow \infty} \frac{T_2}{U} \leq \frac{1}{\varepsilon} \leq \lim_{U\rightarrow \infty} \frac{\sum_{t=s}^{s+T_2-1}v_2^t}{U} \cdot \frac{1}{\varepsilon} \leq \frac{1}{\varepsilon}\left(1+\log \frac{1}{\varepsilon}\right).
        \end{equation*}
    \end{enumerate}
    Combining the bounds for $T_1$ and $T_2$ proves \Cref{lemma-number-of-items}.

\subsection{Proof of \Cref{thm-envy-lower}}

    We prove \Cref{thm-envy-lower} by constructing a hard $2$--agent instance that reaches $1+2\log \frac{1}{\varepsilon}$ envy asymptotically. The construction follows the spirit of the hard instances described in the proof of \Cref{thm-envy}, where agent $2$ receives items in the beginning $T_0$ rounds, but nothing afterwards.

 We construct a hard instance for $B_1 = B_2$. For unequal weights the construction is similar. For $T_0, a>0$, consider an instance with $2k+2$ phases. In each phase, the same item appears for many rounds. The agent's valuations on these items are described in Table \ref{table-envy}.   
\begin{table}[t]
\label{table-envy}
\centering
\begin{tabular}{  | l | l | l | l | l | l | l }
\hline
  Phase & Length & Valuation 1 & Valuation 2 & $U_1$ (after the phase) & $U_2$ (after the phase)\\ \hline
  & & & & &  \\
A1& $T_0$ & $0$ & $1$ & $0$ & $T_0$ \\ 
& &  &  & &\\\hline
  & & & & &  \\
A2& $T_0$ & $\varepsilon$ & $1$ & $\varepsilon \cdot T_0$ & $T_0$ \\ 
& &  &  & &\\\hline
  & & & & &  \\
B1& $\left(1 - \frac{1}{a}\right) T_0$ & $\varepsilon \cdot a^1$ & $1$ & $\varepsilon\cdot a^1 \cdot T_0$ & $T_0$ \\ 
& &  &  & &\\\hline
  & & & & &  \\
B2& $\left(1 - \frac{1}{a}\right) T_0$ & $\varepsilon \cdot a^2$ & $1$ & $\varepsilon\cdot a^2 \cdot T_0$ & $T_0$ \\ 
& &  &  & &\\\hline
...& ... & ... & ... & ... & ... \\ \hline
  & & & & &  \\
B$k$& $\left(1 - \frac{1}{a}\right) T_0$ & $\varepsilon \cdot a^k = 1$ & $1$ & $\varepsilon\cdot a^k \cdot T_0 = T_0$  & $T_0$ \\ 
& &  &  & &\\\hline
  & & & & &  \\
C1& $ \frac{1}{\varepsilon}\left(1 - \frac{1}{a}\right) T_0$ & $\varepsilon \cdot a^1$ & $\varepsilon$ & $a^1 \cdot T_0$ & $T_0$ \\ 
& &  &  & &\\\hline
  & & & & &  \\
C2& $ \frac{1}{\varepsilon}\left(1 - \frac{1}{a}\right) T_0$ & $\varepsilon \cdot a^2$ & $\varepsilon$ & $a^2 \cdot T_0$ & $T_0$ \\ 
& &  &  & &\\\hline
...& ... & ... & ... & ... & ... \\ \hline
  & & & & &  \\
C$k$& $ \frac{1}{\varepsilon}\left(1 - \frac{1}{a}\right) T_0$ & $\varepsilon \cdot a^k = 1$ & $\varepsilon$ & $a^k \cdot T_0 = \frac{1}{\varepsilon} T_0$ & $T_0$ \\ 
& &  &  & &\\ \hline
\end{tabular}
\caption{A worst-case instance for envy}
\end{table}

One can check that the boundary condition $v_1^t/v_2^t = U_1^t /U_2^t$ holds at the end of each phase, and all items after phase A1 is allocated to agent $1$. Agent $2$'s monopolistic utility is 
$$ V_2 = 2\left( 1+ \left(1-\frac{1}{a}\right)\cdot \log_{a}{\frac{1}{\varepsilon}}\right) T_0$$
As $T_0 \rightarrow \infty$, we consider $a \rightarrow 1$, we have
$$\left(1-\frac{1}{a}\right)\cdot \log_a{\frac{1}{\varepsilon}} \rightarrow \log \frac{1}{\varepsilon}, \ \frac{V_2}{T_0} = 2\left(1+\log \frac{1}{\varepsilon}\right).$$
Out of his monopolistic utility, agent $2$ only gets utility $T_0$. This gives $1+2\log 1/\varepsilon$ multiplicative envy asymptotically.

\subsection{Proof of \Cref{thm-cr-upper}, $\varepsilon=1$}
Without loss of generality, we assume $U_1(A_1) \leq U_2(A_2) \leq \cdots \leq U_n(A_n)$. Our main proof strategy is to show that for any input instance $\bm v \in \mathcal{V}_{1}^T$, we have for $i\in [n]$, 
\begin{equation}
\label{eq-n-i+1}
    \lim_{T\rightarrow \infty} \frac{U_i^\star}{U_i(A_i)}\leq n-i+1.
\end{equation}
For $i\in [n]$, define $S_i\subseteq [T]$ as the set of items on which at least one agent in $\{1, 2, \cdots, i\}$ has nonzero valuation,
\[S_i = \{t: \ \exists \ j \in [i], \ v_j^t=  1\}.\]
Consider agent $k$ where $k>i$. Notice that the integral greedy algorithm guarantees that the number of items in $S_i$ which are allocated to agent $k$ is strictly upper bounded by $U_i(A_i)+1$. If not, then the last such item $t$ should have been allocated to some agent $j\leq i$ since $U_{k}^{t-1}> U_i(A_i)>U_j(A_j)\geq U_j^{t-1}$, which is a contradiction. Therefore, we get an upper bound on the amount of items that buyer $k>i$ receives in $S_i$:
\begin{equation*}
    |A_{k} \cap S_i| \leq U_i(A_i) \ \ (1\leq i \leq {n-1}, k> i).
\end{equation*}
Define the item set $D_i = \bigcup_{k\geq i}(A_k\cap S_i)$. Then, 
\begin{equation}
\label{eq-k-on-Si}
    |D_i| \leq (n-i+1)U_i(A_i) + (n-i).
\end{equation}
Intuitively, it is possible that in an offline optimal allocation, all items in $D_i$ should be completely shared among the first $i$ agents, i.e., nothing in $D_i$ is allocated to agent $k>i$. To verify this, simply consider a case where $|S_i|$ is negligibly small compared with $|S_{i+1}|$. Further, it is also possible that the all items in $D$ are given to agent $i$ in an optimal allocation, for example, in the case where buyer $1, 2, \cdots, i-1$ have zero valuation on these items. This will lead to ${U_i^\star} = U_i(A_i)+(n-i)(U_i(A_i)+1)$. 

However, we will show that $U_i^\star$ can not be any larger by contradiction. Suppose $U_i(A_i^{\star}) \geq (n-i+1)(U_i(A_i)+1)$. By inequality \eqref{eq-k-on-Si}, at least one item in $\Bar{D}_i = \bigcup_{j<i }(A_j\cap S_{i-1})$ is partly allocated to agent $i$ in $\bm{x}_i^\star$. 
This means that an item that belongs to agent $j$ $(j<i)$ in the integral greedy allocation should be re-allocated to agent $i$ in the offline optimal allocation, even with a large $U_i^\star> U_j^\star$. This is impossible for an optimal allocation, since allocating this item to agent $j$ strictly increases the geometric mean. 

Therefore,
$$U_i(A_i^{\star})/U_i(A_i) \leq (n-i+1) +\frac{1}{U_i(A_i^{\star})}.$$
Combining this with the fact that the optimal offline allocation has $U_i(A_i) \rightarrow \infty \ (T\rightarrow \infty)$ for all $i\in [n]$, this proves (\ref{eq-n-i+1}).
    $\hfill \square$

\subsection{Proof of \Cref{thm-cr-upper}, $\varepsilon \in (0,1)$}
\label{proof-thm-cr-upper}

Without loss of generality, we assume the integral greedy algorithm gives us $U_1(A_1) \leq U_2(A_2) \leq \cdots \leq U_n(A_n)$. To prove \Cref{thm-cr-upper}, it suffices to show that for any $\alpha>0$ there exists a constant $\lambda$ such that for all $i\in [n]$ and any input instance in $\mathcal{V}_{\varepsilon}^T$,
\begin{equation}
\label{eq-upper-n-i}
    \lim_{T\rightarrow \infty} \frac{U_i^\star}{U_i(A_i)}\leq \lambda \cdot i^\alpha \cdot\left(1+(n-i)\left(1+2\log \frac{1}{\varepsilon}\right)\right).
\end{equation}

We introduce two lemmas to prove \eqref{eq-upper-n-i}. 
\Cref{lemma-envy-generalized} shows that for a set of item $A \subset A_i$ which belongs to agent $i$ in the greedy algorithm, there is an upper bound on the increment of welfare if one transfers $A$ to other agents. It is a generalization of the envy result. The proof of \ref{lemma-envy-generalized} is similar to \ref{thm-envy}, and is deferred to \Cref{proof-lemma-envy-generalized}.
\begin{lemma}
\label{lemma-envy-generalized}
For $i\in [n]$, let $A \subseteq A_i$ be a set of items that are allocated to agent $i$ by the integral greedy algorithm, such that $|A|\rightarrow \infty$ as $T\rightarrow \infty$. Let $J \subseteq [n]\backslash\{i\}$ be a set of agents not including $i$, and $U = \max_{j \in J} U_j(A_j)$. Then for any $\bm{v} \in \mathcal{V}_{\varepsilon}^T$, 
    $$\lim_{T\rightarrow \infty} 
    \frac{\sum_{s \in A} \max_{j \in J} v_j^s}{U} \leq \int_{0}^{\min\left\{\frac{1}{\varepsilon}, \frac{U_1(A)}{U}\right\}} r(u) \mathrm{d}u ,$$
where \begin{equation*}
    r(u) = \begin{cases}
        \frac{1}{\varepsilon} & 0\leq u \leq \varepsilon\\
        \frac{1}{u} & \varepsilon<u\leq \frac{1}{\varepsilon} \\
        0 & u>\frac{1}{\varepsilon}
    \end{cases}.
\end{equation*}
\end{lemma}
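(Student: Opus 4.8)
The plan is to localize the estimate to each item of $A$ via the greedy selection rule and then recognize the resulting expression as a left Riemann sum of a non-increasing function, in the spirit of the proof of \Cref{thm-envy} but without reducing to the two-agent case. The key per-item inequality is the following. Fix $s \in A$; since the integral greedy algorithm assigned item $s$ to agent $i$, we have $v_i^s / U_i^{s-1} \ge v_j^s / U_j^{s-1}$ for every $j \in J$ (reading a zero denominator as $+\infty$; under \Cref{assumption-greedy} agent $i$ never wins a zero-valued item, so $v_i^s \ge \varepsilon$). Since utilities are non-decreasing, $U_j^{s-1} \le U_j(A_j) \le U$, hence $v_j^s \le v_i^s\, U / U_i^{s-1}$; combined with $v_j^s \le 1 \le v_i^s / \varepsilon$ this gives $\max_{j\in J} v_j^s \le v_i^s \min\{1/\varepsilon,\, U/U_i^{s-1}\}$. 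Finally, if $U_i^{s-1} > U/\varepsilon$ then each $v_j^s < \varepsilon v_i^s \le \varepsilon$, which forces $v_j^s = 0$ by the valuation range; hence in every case $\max_{j \in J} v_j^s \le v_i^s\, r(U_i^{s-1}/U)$, with $r$ exactly the truncated function in the statement.

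Next I would sum over $s \in A$ and pass to an integral. List $A = \{s_1 < \dots < s_M\}$ and set $w_k = v_i^{s_k}$, $W_k = \sum_{\ell \le k} w_\ell$, so $W_0 = 0$ and $W_M = U_i(A)$. Because $r$ is non-increasing and $U_i^{s_k - 1} \ge W_{k-1}$, the per-item bound yields $\sum_{s\in A} \max_{j\in J} v_j^s \le \sum_{k=1}^M w_k\, r(W_{k-1}/U)$, a left-endpoint Riemann sum of $x \mapsto r(x/U)$ over $[0, U_i(A)]$ with step sizes $w_k \le 1$. Monotonicity of $r$ bounds the discretization gap: $\sum_k w_k r(W_{k-1}/U) \le \int_0^{U_i(A)} r(x/U)\,dx + \sum_k w_k\big(r(W_{k-1}/U) - r(W_k/U)\big)$, and the trailing sum telescopes to at most $r(0) = 1/\varepsilon$. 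Substituting $x = Uu$ and using that $r$ vanishes beyond $1/\varepsilon$,
\[
\frac{1}{U}\sum_{s\in A}\max_{j\in J} v_j^s \;\le\; \int_0^{\min\{1/\varepsilon,\; U_i(A)/U\}} r(u)\,du \;+\; \frac{1}{\varepsilon U}.
\]
Letting $T \to \infty$, the additive term $1/(\varepsilon U)$ vanishes because $U = \max_{j\in J} U_j(A_j) \to \infty$ (each agent's utility under greedy tends to infinity under \Cref{assumption-greedy}), which gives the claim.

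The step I expect to be the main obstacle is the error control in the second paragraph: one must check that the discretization gap is genuinely $O(1)$ in $T$ — this is where $v_i^s \le 1$ (bounded step sizes) and the non-increasing, truncated shape of $r$ are essential — and that this constant is then washed out by the division by $U$, which is exactly why the hypothesis $|A| \to \infty$ (and behind it, the fact that greedy drives all utilities to infinity) is needed. A secondary point requiring care is the handful of degenerate early rounds with $U_j^{s-1} = 0$: these must be dispatched using the $+\infty$-ratio convention together with lexicographic tie-breaking, so that both the per-item inequality and the assertion ``$v_i^s \ge \varepsilon$ for $s \in A$'' remain valid there.
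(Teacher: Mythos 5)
Your proof is correct, and it reaches the same per-item inequality and Riemann-sum bound as the paper, but via a more direct route. The paper's proof first invokes \Cref{lemma-induction-property} to drop agents outside $J\cup\{i\}$, then transforms the input (pushing $A_i$ to the end, zeroing irrelevant valuations) so that the problem reduces to the canonical optimization program \eqref{eq-optimization-canonical}, and finally reuses the integral analysis from \Cref{thm-envy}. You skip all of that scaffolding: the greedy selection rule directly gives $v_j^s/U_j^{s-1}\le v_i^s/U_i^{s-1}$ for each $s\in A$, and combining $U_j^{s-1}\le U$ with the $[\varepsilon,1]$ value range yields $\max_{j\in J}v_j^s\le v_i^s\,r(U_i^{s-1}/U)$ without any reordering or two-agent reduction. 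Your observation that $U_i^{s_k-1}\ge W_{k-1}$ (the partial sum over $A$ alone), plus monotonicity of $r$, cleanly handles the fact that $A$ may be a proper subset of $A_i$ — a point the paper's optimization-program formulation glosses over, since it writes the sum over all of agent $i$'s items. The telescoping bound $\sum_k w_k(r(W_{k-1}/U)-r(W_k/U))\le r(0)=1/\varepsilon$ and the vanishing $1/(\varepsilon U)$ error as $T\to\infty$ are exactly the mechanism the paper uses in the proof of \Cref{thm-envy}, so the technical core is shared; what you gain is a self-contained argument that makes the role of each hypothesis ($v_i^s\le 1$ for step control, $U\to\infty$ for error washout) explicit. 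One last sanity check worth stating if you write this up: under \Cref{assumption-greedy} every agent's utility under integral greedy does indeed diverge (otherwise the stalled agent's ratio $v_j^t/U_j^{t-1}\ge\varepsilon/U_j^{t-1}$ would eventually dominate, a contradiction), which justifies $U\to\infty$; you gesture at this but it deserves a sentence.
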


The next lemma guarantees that asymptotically, the utility of any single agent in the offline optimal allocation can be at most $\lambda$ times as large as the utility of the ``wealthiest'' agent in the integral greedy allocation. $\lambda$ is a positive constant which does not depend on $T$ and $n$. The proof of \Cref{lemma-balanced} can be found in \Cref{proof-lemma-envy-balanced}.
\begin{lemma}
    \label{lemma-balanced}
For any given input $\bm{v} \in \mathcal{V}_{\varepsilon}^T$ and $\alpha>0$, there exists a constant $\lambda >0$ such that for any $k \in [n]$,

$$ \lim_{T\rightarrow \infty} \frac{U_k^\star}{\max_{j\in [n]} U_j(A_j)}\leq \lambda n^\alpha.$$
\end{lemma}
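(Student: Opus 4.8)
Fix an agent $k$ and let $\bm{x}^\star$ be a Nash-welfare-maximizing allocation; write $\bar U:=\max_{j\in[n]}U_j(A_j)$. The starting point is to split the value $U_k^\star$ that $k$ draws from $\bm x^\star$ according to \emph{which agent received each underlying item in the greedy run}: letting $\zeta_j$ be $k$'s value for the part of its optimal bundle lying in $A_j$, we have $U_k^\star=\sum_{j=1}^n\zeta_j$. Each such part lies inside $A_j$, so $\zeta_j\le U_k(A_j)$, and the envy bound \Cref{thm-envy} gives $U_k(A_j)\le(1+2\log\frac1\varepsilon)\,U_k(A_k)\le(1+2\log\frac1\varepsilon)\,\bar U$. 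Summing over the $n$ owners already yields $U_k^\star\le n\,(1+2\log\frac1\varepsilon)\,\bar U$: a horizon-independent bound, but only of order $n$. More refined, applying \Cref{lemma-envy-generalized} with $J=\{k\}$ and exploiting that the slice $A_j\cap\operatorname{supp}x_k^{\star}$ is usually small gives $\zeta_j\le U_k(A_k)\,\Phi\!\big(w_j/U_k(A_k)\big)$ with $w_j\le U_j(A_j)$ and $\Phi(x)=\int_0^{\min\{1/\varepsilon,x\}}r(u)\,\mathrm du$ bounded, increasing and concave; but Jensen over $n$ slices still only produces a bound of the form $n\cdot(\text{per-owner average})$, i.e. again $O(n)$.

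\paragraph{Reaching the $n^\alpha$ bound.} To remove all but an arbitrarily small power of $n$ I would bring in the \emph{optimality} of $\bm x^\star$, not just its feasibility, via the Eisenberg--Gale equilibrium: with equilibrium prices $p^\star$, every item $s$ that $\bm x^\star$ assigns to $k$ satisfies $v_k^s=U_k^\star p_s^\star$ while its greedy owner obeys $v_{i^s}^s\le U_{i^s}^\star p_s^\star$; combined with the greedy inequality $v_{i^s}^s/U_{i^s}^{s-1}\ge v_k^s/U_k^{s-1}$ this forces $U_{i^s}^{s-1}\le U_{i^s}^\star\,U_k(A_k)/U_k^\star$, i.e. each of $k$'s optimal items was grabbed by greedy while its owner's running utility was small relative to $U_{i^s}^\star$. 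I would then run a strong induction on the number of agents, using \Cref{lemma-induction-property}: order agents by $U_j(A_j)$, peel off a top group $H$, bound the $H$-owned part of $A_k^\star$ directly from \Cref{lemma-envy-generalized}, and bound the rest via the induction hypothesis applied to the sub-instance $\bm v|_{[n]\setminus H}$ (on which greedy behaves identically), choosing $|H|$ so that each recursion level multiplies the running bound by a factor $1+O(\delta)$. With $O(\log n)$ levels this yields $n^{O(\delta)}$, and $\delta$ is chosen from $\alpha$ at the cost of a constant $\lambda=\lambda(\alpha,\varepsilon)$; the base case $n=1$ gives ratio $1$.

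\paragraph{The main obstacle.} The decisive difficulty is that a per-owner decomposition, no matter how sharp the per-owner estimate, cannot on its own do better than $O(n)$: one must rule out $n$ owners each contributing $\Theta(\bar U)$ to $U_k^\star$, and that is precisely a statement about the concentration forbidden by Nash-welfare optimality. Consequently the equilibrium-price estimate above has to be \emph{threaded through} the recursion rather than used once: the naive peeling only gives $g(n)\le c\,|H|+g(n-|H|)$, which telescopes back to $g(n)=\Theta(n)$, so one needs the optimality of $\bm x^\star$ to show $k$ cannot simultaneously extract a constant fraction of $\bar U$ from each of many disjoint high-utility groups, turning the recursion multiplicative. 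A secondary, more routine hurdle is bookkeeping the $T$-vanishing error in \Cref{lemma-envy-generalized} uniformly across the $O(\log n)$ recursion levels and checking the per-level loss is genuinely $1+O(\delta)$. Once these are settled, composing the per-level estimates with the base case gives $\lim_{T\to\infty}U_k^\star/\bar U\le\lambda n^\alpha$.
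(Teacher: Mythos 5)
Your proposal correctly identifies the central difficulty --- ruling out that many agents each contribute $\Theta(\max_j U_j(A_j))$ to $U_k^\star$ --- but it does not resolve it, and you acknowledge as much: the peeling recursion $g(n) \le c|H| + g(n-|H|)$ telescopes to $\Theta(n)$, and the conversion to a multiplicative per-level loss is deferred to an unspecified use of EG equilibrium prices. That missing step is exactly where the proof lives, so as written this is a gap, not a proof.

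The paper closes it with a graph-layer argument rather than a peeling recursion. Normalize by $\max_j U_j(A_j)$ and write $u_i^\star = u_i - z_i + \sum_{j\ne i} y_{ji}$, where $y_{ji}$ is agent $i$'s (normalized) value on the part of greedy bundle $A_j$ re-allocated to $i$ in $\bm x^\star$. \Cref{lemma-envy-generalized} gives $y_{ji} \le c_1\min\{u_i,u_j\} + o_T(1)$ and $\sum_{i\ne j} y_{ji} \le c_2\min\{\max_{i\ne j} u_i, u_j\} + o_T(1)$. The new ingredient you are missing is a local first-order optimality condition on the NW maximizer: if $u_j^\star < \varepsilon u_i^\star$ then $y_{ji}=0$, because any item in $A_j$ that $i$ values is also valued by $j$ (it was won by $j$ in greedy), the value ratio lies in $[\varepsilon,1/\varepsilon]$, and moving a sliver of it from $i$ to $j$ would strictly increase the geometric mean. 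One then builds a directed graph with edge $(i,j)$ whenever $y_{ji}>0$ and looks at BFS layers $J(\ell)$ from $k$. The optimality condition forces $u_j^\star \ge \varepsilon^\ell g(n)$ throughout $J(\ell)$, while the $y$-bounds force $|J(\ell+1)| = \Omega(g(n)\cdot |J(\ell)|)$; since the total number of vertices is $n$, this yields $g(n) = O(e^{\sqrt{\log n}}) = o(n^\alpha)$. The multiplicative blow-up of layer sizes is precisely the mechanism your plan lacks: ordering agents by their greedy utilities $U_j(A_j)$ and peeling a top group $H$ does not interact with the optimality of $\bm x^\star$ in any way you make precise, so there is nothing to stop the additive $O(n)$ accumulation you correctly flag as the obstacle.
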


After introducing the two auxiliary lemmas, we return to the proof of \Cref{thm-cr-upper}.

For $i\in [n]$, define $S_i\subseteq [T]$ as the set of items on which at least one agent in $\{1, 2, \cdots, i\}$ has nonzero valuation. 
\[S_i = \{t: \ \exists \ j \in [i], \ v_j^t=  1\}.\]

For agent $k\geq i$, we consider $A_k \cap S_i$, which is the set of items in $S_i$ that are allocated to agent $k$. 

In the proof sketch, we give the intuition that, it is possible that the optimal allocation gives all items in $A_k \cap S_i, (k>i)$ to agent the first $i$ agents, which might cause a large ratio of $U_i^\star / U_i(A_i)$. 
To bound the ratio, we first apply \Cref{lemma-envy-generalized} to bound the maximum welfare of the first $i$ agents on item set $A_k \cap S_i$. For $k>i$,
\begin{equation}
\label{eq-k-on-Si-1}
    \lim_{T\rightarrow \infty} 
    \frac{\sum_{s \in S_i\cap A_k} \max_{j \leq i} v_j^s}{U_i} \leq 1+2\log \frac{1}{\varepsilon}.
\end{equation}
Then, consider the input $\bm{v}^\prime$, which is defined as setting agent $k$'s value to $0$ on every item in $S_i$. Because agent $1, 2, \cdots, i$ are given strictly more items in $\bm{v}^\prime$, both the integral greedy algorithms should have non-decreasing agent utilities when $T$ is large enough. (The amount of decrease should be negligible in $T$).\begin{equation}
\label{eq-monotone}
\begin{aligned}
   \lim_{T\rightarrow \infty} \frac{U_j^\star(\bm v)}{U_i(\bm{v})} \leq \lim_{T\rightarrow \infty} \frac{U_j^\star(\bm{v}^\prime)}{U_i(\bm{v})}, \forall j\in [i]\\
    \lim_{T\rightarrow \infty} \frac{U_j(\bm v)}{U_i(\bm{v})} \leq \lim_{T\rightarrow \infty} \frac{U_j(\bm{v}^\prime)}{U_i(\bm{v})}, \forall j\in [i] 
\end{aligned}
\end{equation}
Consider the integral greedy allocation after transforming from $\bm{v}$ to $\bm{v}^\prime$. By equation \eqref{eq-k-on-Si-1}, the total utility increase of the $i$ agents is asymptotically upper bounded by $(n-i)\left(1+2\log \frac{1}{\varepsilon}\right)$. Then the utility of the ``wealthiest'' agent after the transformation should be upper bounded by
\begin{equation}
    \label{eq-monotone-application}
    \lim_{T\rightarrow \infty} \frac{\max_{j\in [i]} U_j(\bm{v}^\prime)}{U_i(\bm{v})} \leq 1+ (n-i)\left(1+2\log \frac{1}{\varepsilon}\right).
\end{equation}
If \eqref{eq-monotone-application} does not hold, this will contradict the the non-decreasing property in \eqref{eq-monotone}.

Combining \eqref{eq-monotone-application} and the first inequality in \eqref{eq-monotone}, 

\begin{align*}
    \lim_{T\rightarrow \infty} \frac{U_i^\star(\bm v)}{U_i(\bm{v})} &\leq \lim_{T\rightarrow \infty} \frac{U_i^\star(\bm{v}^\prime)}{U_i(\bm{v})}\\
    &= \lim_{T\rightarrow \infty} \frac{\max_{j\in [i]}U_j(\bm{v}^\prime)}{U_i(\bm{v})} \cdot \lim_{T\rightarrow \infty} \frac{U_i^\star(\bm{v}^\prime)}{\max_{j\in [i]}U_j(\bm{v}^\prime)} \\
    &\leq \left(1+ (n-i)\left(1+2\log \frac{1}{\varepsilon}\right)\right)\lim_{T\rightarrow \infty} \frac{U_i^\star(\bm{v}^\prime)}{\max_{j\in [i]}U_j(\bm{v}^\prime)} \\
    &\overset{\text{(a)}}{\leq} \lambda \cdot i^{\alpha} \cdot \left(1+ (n-i)\left(1+2\log \frac{1}{\varepsilon}\right)\right).
\end{align*}
In the last step (a), we applied \Cref{lemma-balanced} to $\bm{v}^\prime$, since $\bm{v}^\prime$ can be regarded as an input sequence with $i$ agents. This proves \eqref{eq-upper-n-i}.

\subsection{Proof of \Cref{lemma-envy-generalized}}
\label{proof-lemma-envy-generalized}
The proof of \ref{lemma-envy-generalized} follows the same pattern as the proof of \Cref{thm-envy}. First, transform the input while preserving the result of integral greedy algorithm. Then, use an optimization program to characterize that gives maximum value to the left-hand side of \ref{lemma-envy-generalized}.

Thanks to \Cref{lemma-induction-property}, we do not need to consider agents that are not in $J\cup \{i\}$, thus assume without loss of generality that $i = n$, $J = \{1, 2, \cdots, n-1\}$ and $U_1(A_1)\leq \cdots \leq U_{n-1}(A_{n-1}) = U$.

For any given input $\bm{v}$, we transform it into $\bm{v}^\prime$ without changing the result of integral greedy allocation:
\begin{enumerate}
    \item For each $j\in\{1, 2, \cdots, n-1\}$ and each item in $A_j$, set all agents' value to $0$ except agent $j$.
    \item Put all items in $A_n$ to the end of the sequence.
\end{enumerate}

Therefore, it suffices to only consider sequences that only allocate items to agent $n$ after other agents have received all items in the beginning $R$ rounds.
\begin{equation*}
\label{eq-optimization-envy-generalized}
    \begin{aligned}
    \max_{v_i^t} & \ \ \frac{1}{U}\sum_{t=R+1}^{\infty} \max_{j<n}v_j^t \\
    \text{s.t.} & \ \ v_i^t \in \{0\}\cup[\varepsilon, 1], \ \ \forall i\in[n] \\
                & \ \ \frac{v_n^t}{v_j^t} \geq \frac{U_n^{t-1}}{U_j(A_j)}, \  \ \forall j \in [n-1]\\
                & \ \ U_n^t = \sum_{s=R+1}^{t} v_{n}^s \\
    \end{aligned}
\end{equation*}
An upper bound of $(\max_{j<n} v_j^t)/v_n^t$ is given by $\min\left\{\frac{1}{\varepsilon}, \frac{U}{U_n^{t-1}}\right\}$. The rest of the analysis is identical to the analysis of the canonical optimization program \eqref{eq-optimization-canonical} in \Cref{thm-envy}.

\subsection{Proof of \Cref{lemma-balanced}}
\label{proof-lemma-envy-balanced}

For simplicity of notation, in this proof we denote $U_j = U_j(A_j)$. 

Consider modifying the integral greedy allocation into an optimal one: for each $i, j \in [n]$, some items that belong to $i$ in the integral greedy allocation might be partly re-allocated to $j$ in the optimal allocation, and vice versa. To characterize this procedure of modification, we define the following variables:
\begin{itemize}
    \item $Z_i := \sum_{t\in A_i}(1-x_i^{\star,t})\cdot v_i^t$ is agent $i$'s value on the items that belong to him in the integral greedy allocation, but not in the optimal allocation.
    \item $Y_{ji}:= \sum_{t\in A_j}x_i^{\star, t} v_i^t$ as agent $i$'s value on the items that belong to agent $j$ in the integral greedy allocation, but now belong to agent $i$ in the optimal allocation. Specifically, we let $Y_{jj} = 0$.
\end{itemize}
From the above definition,
\begin{equation}
\label{eq-in-out}
    U_i^\star = U_i - Z_i  + \sum_{j\neq i} Y_{ji}, \ i \in [n].
\end{equation}

Divide all variables in \eqref{eq-in-out} by $\max_{j\in [n]}U_j(A_j)$ to get $u_i^{\star}, u_i, z_i$ and $y_{ji}$. Then \eqref{eq-in-out} becomes
\begin{equation}
\label{eq-in-out-asymptotic}
    u_i^\star = u_i - z_i  + \sum_{j\neq i} y_{ji}, \ i \in [n].
\end{equation}
All above variables should be non-negative. Applying \Cref{lemma-envy-generalized}, we get further constraints on $y_{ji}$:
\begin{align}
    y_{ji} &\leq u_i \int_{0}^{\min\{\frac{1}{\varepsilon}, \frac{z_j}{u_i}\}}r(u) \mathrm{d} u +o_T(1) \leq c_1 \min\{u_i, u_j\}+o_T(1) \label{condition-1}\\
    \sum_{i \neq j} y_{ji}&\leq \max_{i\neq j} u_i \int_{0}^{\min\{\frac{1}{\varepsilon}, \frac{z_j}{u_i}\}}r(u) \mathrm{d} u +o_T(1) \leq c_2 \min\{\max_{i\neq j} u_i, u_j\} +o_T(1) \label{condition-2}
\end{align}
In the above conditions we use $o_T$ to explicitly indicate that the asymptotic notation is on $T$. $c_1, c_2 \leq 1/\varepsilon$ are constants.

Notice that so far we have not yet used the condition that $(u_1^\star, \cdots, u_n^\star)$ is an optimal allocation. We use a necessary condition (not sufficient) to characterize optimality:
\begin{equation}
\label{eq-optimality-necessasry}
    u_j^\star < \varepsilon u_i^\star \implies y_{ji} = 0 
\end{equation}
Condition \eqref{eq-optimality-necessasry} holds because when $u_j^{\star} < \varepsilon u_i^\star$ in an optimal allocation, agent $j$ must have no envy towards agent $i$. If not, then allocating any envied item to agent $j$ will strictly increase the geometric mean, since all non-negative values are in $[\varepsilon,1]$.

Next, we show that $u_k^\star = o(n^\alpha)$, for any $\alpha>0$. 
If $u_k^{\star}$ has a constant upper bound, this is obviously true. Next, we show that if $u_k^{\star} \geq g(n)$ as $T$ goes to infinity, where $g(n)\rightarrow \infty$ as $n\rightarrow \infty$, we can derive $g(n) = o(n^\alpha)$ for any positive $\alpha$. Notice that the asymptotic notion is on $n$ when we refer to $g(n)$. 

Consider a directed graph $D$ with $[n]$ as its vertex set, and the set of edges is defined as:
\begin{equation*}
    E(D) = \left\{ (i, j): y_{ji} > 0\right\}
\end{equation*}

Define $J(\ell)\subseteq [n]$ as the set of all vertices $j$ such that the distance from $i$ to $j$ is $\ell$, i.e., the shortest directed path from $i$ to $j$ has length $\ell$. $J(0) = \{i\}$. We denote $m_\ell:=|J(\ell)|$.


For $j \in J(\ell)$, there exists a path $j_0 j_1 \cdots j_\ell$ where $j_0 = k, j_\ell = j$. By condition \eqref{eq-optimality-necessasry}, the existence of edge $(j_s, j_r)$ implies $u_{j_r}^\star \geq \varepsilon u_{j_s}^\star$. Thus $u_{j}^\star \geq \varepsilon u_{j-1}^\star \geq \cdots \geq \varepsilon^{\ell} u_k^{\star} = g(n)$. This gives a lower bound on the sum of $u_j^{\star}$ where $j\in J(\ell)$:
\begin{equation}
\label{eq-layer-l-lower}
    \sum_{j \in J(\ell)} u_j^{\star} \geq \varepsilon^\ell \cdot g(n)\cdot m_\ell.
\end{equation}

Next, we show inductively that $m_\ell = \Omega(g(n)\cdot m_{\ell-1}).$ For $\ell = 1$, this is true because as $T\rightarrow \infty$, by condition \eqref{condition-1},
$$ g(n) \leq u_k^\star \leq u_k + \sum_{j \in J(1)}y_{jk} \leq u_k +c_1 m_1.$$

For $\ell \geq 1$, suppose $m_{\ell^\prime} = \Omega(g(n)\cdot m_{\ell^\prime-1}) = \Omega(g^{\ell^\prime}(n))$ is true for $\ell^\prime = 1, 2, \cdots, \ell$. Consider again the sum of all $u_j^{\star}$ where $j\in J(\ell)$, as $T\rightarrow \infty$,
\begin{align}
    \sum_{j\in J(\ell)} u_{j}^\star &\leq \sum_{j\in J(\ell)}u_j + \sum_{j\in J(\ell)}\sum_{i \neq j} y_{ij}  \nonumber \\ \nonumber
    &\overset{\text{(a)}}{=} \sum_{j\in J(\ell)}u_j + \sum_{j\in J(\ell)}\sum_{\ell^\prime = 0}^{\ell+1} \sum_{i \in J(\ell^\prime)}y_{ij}\\ \nonumber
    &=\sum_{j\in J(\ell)}u_j + \sum_{\ell^\prime = 0}^{\ell+1} \sum_{i \in J(\ell^\prime)}\sum_{j\in J(\ell)}y_{ij}\\
    &\overset{\text{(b)}}{\leq} \sum_{j\in J(\ell)}u_j + \sum_{\ell^\prime = 0}^{\ell+1} \sum_{i \in J(\ell^\prime)} c_2 u_i \nonumber \\
    &\leq (1+c_2)m_{\ell} + c_2 m_{\ell+1} + \sum_{\ell^\prime=0}^{\ell+1} c_2{m_\ell^\prime} \nonumber \\
    &\overset{\text{(c)}}{=} O(m_\ell) + c_2 m_{\ell+1}, \label{eq-m-l-induction}
\end{align}
where (a) is because $y_{ij} = 0$ for $i \in J(\ell^\prime)$ if $\ell^\prime \geq \ell +2$, (b) is a direct application of condition \ref{condition-2}, and (c) is by the induction hypothesis. 

Combining \eqref{eq-layer-l-lower} and \eqref{eq-m-l-induction}, $c_2 m_{\ell+1} + O(m_{\ell}) \geq \varepsilon^\ell g(n) m_\ell = \Omega(m_\ell g(n))$. This tells us $m_{\ell+1} = \Omega(g(n) \cdot m_{\ell})$ and thus completes the induction.

Define $\ell^* := \max\{\ell: J(\ell) \text{ is non-empty}\}$, i.e., $\ell^*$ is the length of the longest path starting from $S$. Because $m_{\ell^*} = \Omega\left(g^{\ell^{*}}(n)\right) \leq n$, we have 
\begin{equation}
\label{eq-asymptotic-calculation-1}
    \ell^* = O\left(\frac{\log n}{\log g(n)}\right).
\end{equation}
Meanwhile, by \eqref{eq-m-l-induction}, $m_{\ell^*} = \Omega(\varepsilon^{\ell^*} g(n) m_{\ell^{*}})$. We then know that,
\begin{equation}
\label{eq-asymptotic-calculation-2}
    g(n) = O\left(\frac{1}{\varepsilon^{\ell^*}}\right).
\end{equation}
Combining \eqref{eq-asymptotic-calculation-1} and \eqref{eq-asymptotic-calculation-2} one can derive $g(n) = O(e^{\sqrt{\log n}}) = o(n^{\alpha})$ for any $\alpha >0$.

\subsection{Proof of \Cref{thm-opt}}
\label{proof-thm-opt}
Consider dividing the horizon of length $T$ into $n$ different phases, with length $T_1, \cdots, T_n$, satisfying
$$
\sum_{j=1}^n T_j = T, \ \lim_{T\rightarrow \infty} \frac{T_{j-1}}{T_j} = 0  \ \  (\forall j = \{1,2,\cdots, n\})
$$
For the first phase (i.e. for $t \in \{1, 2, \cdots, T_1\}$), $v_i^t = 1$ for all $i\in [n]$. Because there are $T_1$ items and $n$ buyers, for any online integral algorithm there exists a buyer $i_1$ whose utility after phase 1 satisfies ${U}_{i_1}^{T_1} \leq \frac{T_1}{n}$. 

Recursively, for each phase $k\geq 2$, we set for all $t = T_{k-1}+1, \cdots, T_k$:
\begin{equation*}
    v_i^t =\begin{cases}
        0 & i \in \{i_1, \cdots, i_{k-1}\}\\
        
        1 & \text{o.w.}
    \end{cases},
\end{equation*}
and let $i_k \not \in \{i_1, \cdots, i_{k-1}\}$ be a buyer whose utility after round $T_k$ satisfies ${U}_{i_k}^{T_k} \leq \frac{1}{n}\sum_{j=1}^{k} T_j$.

Notice that the optimal offline allocation will allocate everything in phase $k$ to buyer $i_k$. We have $\lim_{T\rightarrow \infty} U_i^{\star} = T_{n-i+1}$ and $$\lim_{T\rightarrow \infty} \frac{({U}_{i})^{\star, T}}{{U}_{i}^T} \geq \frac{1}{i}.$$ This leads to a ratio at least $ (n!)^{1/n}$.

\subsection{Proof of \Cref{thm:seeded int greedy}}
Define $\vmax=\max_{i,\tau} v^\tau_i $, $\vimax = \max_\tau \vtaui$, and $\bar v_{\mathrm{max}} = \frac1n\sumiton v_{i,\mathrm{max}}$.
In this proof, we study the general case with $\vitau \in [0, v_{\mathrm{max}}]$. 
We will prove that
\begin{align}\label{eq:thm seeded}
       R_\delta(v) \leq 3 + \frac{4}{\delta} \bar v_{\mathrm{max}} + 2\log\Big(1+\frac{\vmax}{\delta}\Big) + 2 \log T \;.
\end{align}

The proof is divided into two major steps.
\subsubsection{Step 1. Ratio bounded by prices.} 
We first introduce notations for quantities that appear during the run of \Cref{alg-greedy-seeded}.

Let $\ptau = \max_i \beta ^\tau_i \vtaui = {v^\tau_\itau}/ ({U^{\tau-1}_\itau + \delta})$ be the \textit{price} of item $\tau$, and 
$\beta^\tau _i = \frac{ 1 }{\delta + \Utaumi}$.  Then \Cref{alg-greedy-seeded}
\Cref{eq:seeded_int_greedy} can be written as $\itau = \min \arg \max_i \{ \beta_{i}^{\tau} v_{i}^\tau \}$. Also let $\utaui = \vitau 1( i = i^\tau)$ be the utility of agent $i$ from item $\tau$ according to the allocation of the seeded integral greedy algorithm.
Let $U^\tau_i = \sum_{s=1,\dots,\tau} u^s_i$ be the cumulative utilities for $\tau \geq 1$ and $U^0_i = 0$ and $U_i = U_i^T$.

We begin with the following claim.

\begin{claim}
For any feasible hindsight allocation $x^*$ and its resulting utilities $\widetilde{U}_i$, 
\begin{align}
    \frac1n \sumiton \frac{\widetilde{U}_i + \delta}{U_i + \delta} \leq 1+  \frac1n  \sumtau \ptau  = 1+\frac1n \sumiton \sumtau \frac{\utaui}{ \Utaumi + \delta} \;.
    \label{eq:seeded bound by priced}
\end{align}
\end{claim}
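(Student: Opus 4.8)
The plan is to read the seeded greedy prices $p^\tau$ as a ``dual certificate'' and run a weak-duality-style argument: the price of item $\tau$ dominates \emph{every} agent's current bang-per-buck, so any hindsight bundle can be charged against the sum of prices. Concretely, I would first record the pointwise inequality that drives everything. By definition $p^\tau=\max_i \beta_i^\tau v_i^\tau$ with $\beta_i^\tau = 1/(\delta+U_i^{\tau-1})$, hence for every agent $i$ and round $\tau$,
\[
v_i^\tau \;\le\; p^\tau\,(\delta+U_i^{\tau-1}) \;\le\; p^\tau\,(\delta+U_i),
\]
where the second step uses that the cumulative utilities $U_i^0\le U_i^1\le\cdots\le U_i^T=U_i$ are non-decreasing since $u_i^\tau\ge 0$.

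Next I would plug this into the hindsight utilities. Writing $\widetilde U_i=\sum_{\tau=1}^T x_i^{*\tau} v_i^\tau$ with $x_i^{*\tau}\ge 0$, the displayed bound gives $\widetilde U_i\le (\delta+U_i)\sum_{\tau} x_i^{*\tau} p^\tau$, i.e. $\widetilde U_i/(\delta+U_i)\le \sum_\tau x_i^{*\tau} p^\tau$; adding $\delta/(\delta+U_i)\le 1$ yields $(\widetilde U_i+\delta)/(\delta+U_i)\le 1+\sum_\tau x_i^{*\tau} p^\tau$. Averaging over $i\in[n]$, swapping the order of summation, and using feasibility $\sum_i x_i^{*\tau}\le 1$ of the hindsight allocation gives
\[
\frac1n\sum_{i=1}^n \frac{\widetilde U_i+\delta}{\delta+U_i}
\;\le\; 1+\frac1n\sum_{\tau=1}^T p^\tau\Big(\sum_{i=1}^n x_i^{*\tau}\Big)
\;\le\; 1+\frac1n\sum_{\tau=1}^T p^\tau,
\]
which is the inequality in the claim. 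Finally, the identity $\sum_\tau p^\tau=\sum_i\sum_\tau u_i^\tau/(U_i^{\tau-1}+\delta)$ is immediate from the allocation rule: at round $\tau$ only $i=i^\tau$ receives the item, so $u_i^\tau=v_i^\tau\,\mathbf 1(i=i^\tau)$ and $\sum_i u_i^\tau/(U_i^{\tau-1}+\delta)=v_{i^\tau}^\tau/(U_{i^\tau}^{\tau-1}+\delta)=p^\tau$; summing over $\tau$ closes the chain.

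There is no real obstacle in this particular claim — it is exactly the ``primal bounded by prices'' step, and the greedy prices are tailor-made to dominate each agent's bang-per-buck. The one point to handle carefully is the relaxation $U_i^{\tau-1}\le U_i$: keeping the tighter $U_i^{\tau-1}$ would not give a clean ratio, and this deliberate slack is precisely what is paid for later, when $\sum_\tau p^\tau$ is bounded (via a telescoping/logarithm argument) and produces the $2\log T$ term in the final statement of \Cref{thm:seeded int greedy}.
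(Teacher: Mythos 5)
Your proof is correct and follows essentially the same route as the paper's: bound $v_i^\tau$ by $p^\tau(\delta+U_i^{\tau-1})\le p^\tau(\delta+U_i)$ using monotonicity of cumulative utilities and the definition of $p^\tau$, sum against $x_i^{*\tau}$, use the supply constraint $\sum_i x_i^{*\tau}\le 1$, split off the $\delta/(\delta+U_i)\le1$ term, and observe that $\sum_\tau p^\tau$ collapses to $\sum_i\sum_\tau u_i^\tau/(U_i^{\tau-1}+\delta)$ because only the winner receives nonzero $u_i^\tau$. The only cosmetic difference is that you multiply through by $(\delta+U_i)$ where the paper keeps everything in ratio form.
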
 

 Notice that $\frac1n \sumiton \frac{\widetilde{U}_i + \delta}{U_i + \delta} \leq \frac1n \sumiton \frac{\widetilde{U}_i }{U_i + \delta} + 1$. Then, 
  \$
  & \frac1n \sumiton \frac{\widetilde{U}_i }{U_i + \delta}
  \\
  & = \frac1n \sumiton \bigg(\frac{ \sumtau \xsttaui \vtaui }{ U_i + \delta} \bigg)
  \\
  & = \frac1n \sumiton \sumtau \bigg(\frac{ \vtaui }{ U_i + \delta} \cdot  \xsttaui \bigg)
  \\
  & \leq \frac1n \sumiton \sumtau \bigg(\frac{ \vtaui }{ {\Utaumi }+ \delta} \cdot  \xsttaui \bigg) =: (\text{A}) \;. 
  \$
By the definition of $\ptau$, it holds for all agent $i$,
\$
\frac{ \vtaui }{ {\Utaumi }+ \delta}  \leq \ptau   \; .
\$
Applying the supply feasibility constraint,
  \$
  (\mathrm{A})  \leq \frac1n \sumiton \sumtau \ptau \cdot  \xsttaui 
 \leq  \frac1n \sumtau \ptau \; .
 \$
Next, use integrality of \Cref{alg-greedy-seeded} and rewrite the sum of prices as the sum of running ratios:
 \$
 & \frac1n \sumtau \ptau
    \\
  & = \frac1n \sumtau \frac{v^\tau_\itau}{U^{\tau-1}_\itau + \delta}\cdot x^\tau_{\itau}
  \\
  & = \frac1n \sumtau  \bigg(\frac{v^\tau_\itau}{U^{\tau-1}_\itau + \delta}\cdot x^\tau_{\itau}  + \sum_{i\neq \itau} \frac{\vtaui}{\Utaumi + \delta}\cdot \xtaui\bigg) 
  \\
  & = \frac1n \sumtau \sumiton \bigg( \frac{\vtaui}{\Utaumi + \delta}\cdot \xtaui\bigg) \;.
\$

\subsubsection{Step 2. Introducing a basic inequality.}

\begin{lemma}[\citet{bach2019universal}] For nonnegative numbers $ a_{1}, \ldots, a_{n} \in[0, a]$ and any $a_0\geq 0$, it holds
  \$\sum_{i=1}^{n} \frac{a_{i}}{a_{0}+\sum_{j=1}^{i-1} a_{j}} \leq 2+\frac{4 a}{a_{0}}+2 \log \left(1+\sum_{i=1}^{n-1} a_{i} / a_{0}\right)\; . \$
  \label{lm:bach}
\end{lemma}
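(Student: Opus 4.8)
The plan for \Cref{lm:bach} is to express each summand through the running partial sums $S_0:=a_0$ and $S_i:=a_0+\sum_{j=1}^{i}a_j$ ($i\ge1$), so that the $i$-th term is $a_i/S_{i-1}=(S_i-S_{i-1})/S_{i-1}$, where $(S_i)_{i\ge0}$ is nondecreasing and $S_i\ge a_0$; we may assume $a_0>0$, since otherwise the right-hand side is infinite (or all $a_i$ vanish) and there is nothing to prove. First I would partition the indices into the \emph{small-ratio} set $L:=\{i:a_i\le S_{i-1}\}$ and the \emph{large-ratio} set $H:=\{i:a_i>S_{i-1}\}$, and bound $\sum_{i\in L}a_i/S_{i-1}$ by a telescoping logarithm and $\sum_{i\in H}a_i/S_{i-1}$ by a geometric series.

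For $i\in L$ the ratio $t:=a_i/S_{i-1}$ lies in $[0,1]$, where $t\le 2\log(1+t)$ (both sides vanish at $0$ and $\tfrac{d}{dt}[2\log(1+t)-t]=\tfrac{2}{1+t}-1\ge0$ there); hence $a_i/S_{i-1}\le 2\log(S_i/S_{i-1})$. Summing over $i\in L$, then over all $i$ (legitimate since each $\log(S_i/S_{i-1})\ge0$), the sum telescopes: $\sum_{i\in L}a_i/S_{i-1}\le 2\log(S_n/S_0)=2\log\!\left(1+a_0^{-1}\sum_{j=1}^{n}a_j\right)$. For $i\in H$ we have $S_i=S_{i-1}+a_i>2S_{i-1}$, so each such round more than doubles the running total; writing $H=\{i_1<\dots<i_m\}$, monotonicity of $(S_i)$ between consecutive elements of $H$ gives by induction $S_{i_k-1}\ge 2^{k-1}a_0$, so $a_{i_k}/S_{i_k-1}\le a/(2^{k-1}a_0)$ and $\sum_{i\in H}a_i/S_{i-1}\le \sum_{k\ge1}a/(2^{k-1}a_0)=2a/a_0$.

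Adding the two estimates yields $\sum_{i=1}^{n}a_i/S_{i-1}\le \tfrac{2a}{a_0}+2\log(1+a_0^{-1}\sum_{j=1}^{n}a_j)$. To land on the claimed form I would peel the last summand out of the logarithm using $\log(1+s+t)=\log(1+s)+\log(1+\tfrac{t}{1+s})\le\log(1+s)+t$ with $s=a_0^{-1}\sum_{j=1}^{n-1}a_j$ and $t=a_n/a_0\le a/a_0$, which turns $\sum_{j\le n}$ into $\sum_{j\le n-1}$ at the cost of a further $2a/a_0$; the additive constant $2$ in the statement is then only slack (the argument in fact gives $\tfrac{4a}{a_0}+2\log(1+a_0^{-1}\sum_{j=1}^{n-1}a_j)$). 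The main obstacle is the large-ratio set $H$: these are precisely the rounds where the logarithm/integral comparison behind the $L$-bound breaks down, and the key point is that such a round forces geometric growth of the running totals $S_{i-1}$ across $H$, which simultaneously limits their number and forces their contributions into a convergent geometric series of size $O(a/a_0)$; the rest is bookkeeping of constants. Applied per agent $i$ to the utility sequence $(u_i^1,\dots,u_i^T)$ with $a_0=\delta$, $a=\vimax$, then averaged over $i$ and combined with the Step~1 bound via concavity of $\log$, this lemma yields \Cref{thm:seeded int greedy}.
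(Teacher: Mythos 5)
The paper cites \Cref{lm:bach} from \citet{bach2019universal} and does not reproduce a proof, so there is no in-paper argument to compare against. Your proof is correct and self-contained. The small-ratio / large-ratio decomposition is exactly the right idea: on $L=\{i:a_i\le S_{i-1}\}$ the inequality $t\le 2\log(1+t)$ for $t\in[0,1]$ converts each term into $2\log(S_i/S_{i-1})$, and extending the sum to all $i$ (each summand nonnegative by monotonicity of $S_i$) telescopes to $2\log(S_n/S_0)$; on $H=\{i:a_i>S_{i-1}\}$ the doubling $S_i>2S_{i-1}$ combined with monotonicity between consecutive large indices gives $S_{i_k-1}\ge 2^{k-1}a_0$, hence a convergent geometric series bounded by $2a/a_0$. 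Peeling $a_n$ out of the logarithm via $\log(1+s+t)\le\log(1+s)+t$ with $t=a_n/a_0\le a/a_0$ correctly converts the range $\sum_{j\le n}$ into $\sum_{j\le n-1}$ at an extra cost of $2a/a_0$, and your final bound $\tfrac{4a}{a_0}+2\log(1+a_0^{-1}\sum_{j=1}^{n-1}a_j)$ is in fact strictly sharper than the stated one (the additive $+2$ is slack). The only cosmetic point is the edge case $a_0=0$: you handle it by noting the right-hand side is then $+\infty$ whenever some $a_j>0$ (so $a>0$), which suffices; the application in the paper always uses $a_0=\delta>0$. Overall this is a clean, correct elementary proof of the cited inequality.
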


We use the above inequality to bound the right-hand side of \eqref{eq:seeded bound by priced}. For a fixed agent $i$, let  $\vimax = \max_\tau \vtaui$, then
\$
& \sumtau \bigg( \frac{\vtaui}{\Utaumi + \delta}\cdot \xtaui\bigg) 
\\
& = \sumtau  \bigg( \frac{\vtaui \xtaui}{\sumstaui \vsi\xsi + \delta}\bigg) 
\\
& \leq 2 + 4\frac{\vimax}{\delta} + 2 \log \bigg(1+ \sum_{\tau = 1}^{T-1} \frac{\utaui}{\delta}\bigg) \tag{Invoking \Cref{lm:bach}}
\\
& = 2 + 4\frac{\vimax}{\delta} + 2 \log T + 2 \log \bigg(\frac1T + \frac1T \sum_{\tau = 1}^{T-1} \frac{\utaui}{\delta}\bigg)
\\
& \leq  2 + 4\frac{\vimax}{\delta} + 2 \log T + 2 \log \bigg(1 + \frac{\vmax}{\delta} \bigg) \;. 
\$
Finally, putting together we have the desired result. 
\$
& \frac1n \sumiton \frac{\widetilde{U}_i + \delta}{U_i + \delta} 
\\
& \leq 1 + \frac1n \sumiton \Bigg( 2 + 4\frac{\vimax}{\delta} + 2 \log T + 2 \log \bigg(1 + \frac{\vmax}{\delta} \bigg) \Bigg)
\\
& =  3 + \frac{4}{\delta} \bar v_{\mathrm{max}} + 2\log\bigg(1+\frac{\vmax}{\delta}\bigg) + 2 \log T \;.
\$

\section{Missing Proofs in \Cref{section-pace}}
\label{appendix-pace}

\subsection{Proof of \Cref{lemma-auxiliary-1}}

By the definition of $t^*$ and the bound on $\ell$ one can check:
\begin{equation*}
    U_1(A_1)\cdot S >\frac{1}{\varepsilon} \ell t^{*}. 
\end{equation*}
If $U_j^{s^\prime-1}(\bm v^\prime)< \ell (s^{\prime} -1)$, then 
\begin{equation*}
    \frac{\Hat{U}_j^{s^\prime-1}(\bm{v}^\prime)}{\Hat{U}_1^{s^\prime-1}(\bm{v}^\prime)} = \frac{\ell t^*}{U_1(A_1) \cdot S} <\varepsilon.
\end{equation*}
Now we show that the PACE allocation is preserved for the first $t^*$ items in the transformed sequence. For an original item $s\in A_j$ and corresponding item in $s^\prime$ in $\bm{v}^\prime$ ($s^\prime \leq t^*$), we check the two cases:
\begin{enumerate}
    \item  When $U_j^{s^\prime-1}(\bm v^\prime)\geq \ell (s^{\prime} -1)$, we have
    \begin{align*}
        \Hat{U}_1^{s^\prime-1}(\bm{v}^\prime) &= \max\{U_1(A_1), \ell (s^\prime-1)\} \geq \max\{U_1^s(\bm v), \ell (s-1)\} \geq  \Hat{U}_1^s(\bm v), \\
        \Hat{U}_j^{s^\prime-1}(\bm{v}^\prime) &= \max\{U_j^{s^\prime-1}(\bm v^\prime), \ell s^\prime\} = U_j^{s^\prime-1}(\bm v^\prime) = \max\{U_j^{s-1}(\bm v), \ell (s-1)\} = \Hat{U}_j^s(\bm v).
    \end{align*}
    Hence $\Hat{U}_j^{s^\prime-1}(\bm{v}^\prime)/\Hat{U}_1^{s^\prime-1}(\bm{v}^\prime) \geq \Hat{U}_j^s(\bm v)/\Hat{U}_1^s(\bm v) > v_j^s/v_1^s$. 
    \item When $U_j^{t^\prime-1}(\bm v^\prime)< \ell (t^{\prime} -1)$, by \Cref{lemma-auxiliary-1},
    \begin{align*}
         \frac{\Hat{U}_j^{s^\prime-1}(\bm{v}^\prime)}{\Hat{U}_1^{s^\prime-1}(\bm{v}^\prime)}<\varepsilon\leq \frac{v_j^{s}}{v_1^s}   
    \end{align*}
\end{enumerate}
Because we have set all agents' valuations to $0$ except agent $1$ and $j$, in both cases the item is indeed allocated to agent $j$, which is the same as the original input $\bm v$.

\subsection{Proof of \Cref{lemma-auxiliary-2}}
\label{proof-lemma-auxiliary-2}
In this proof, we let $a={U_1(A_1)}/{|A_1|} \in [\varepsilon, 1]$ be the per-round utility gain of agent $1$ in the beginning $S$ rounds.

To identify the sequence that gives maximum envy of agent $1$ on agent $j$'s bundle $B_j$, we use the same technique as the proof of \Cref{thm-envy}. The problem of maximizing agent $1$'s envy is equivalent to the following program:
\begin{equation}
\label{eq-optimization-envy-pace-base}
    \begin{aligned}
    \max_{v_1^{t_i}, v_j^{t_i}, t_i} & \ \ \frac{1}{aS}\sum_{i=1}^{k} v_1^{t_i} \\
    \text{s.t.} & \ \ S < t_1 < \cdots < t_k \leq t^* \\
                & \ \ v_1^{t_i}, v_j^{t_i} \in \{0\}\cup[\varepsilon, 1] \\
                & \ \ \frac{v_j^{t_i}}{v_1^{t_i}} \geq \frac{\Hat{U}_j^{t_{i-1}}}{\Hat{U}_1^{t_{i-1}}} \\
                & \ \ \Hat{U}_1^{t_i} = \max\{\ell {t_i}, aS\} \\
                & \ \ \Hat{U}_j^{t_i} = \max\{\ell {t_i}, {U}_j^{t_i}\} \\
                & \ \ {U}_j^{t_i} =  \sum_{i=1}^{k} v_{j}^{t_i}\\
    \end{aligned}
\end{equation}
$\{t_1, \cdots, t_k\}$ are $k$ positions where we put the items of $B_j$.  

Recall the technique we have used when proving \Cref{thm-envy}, where we introduced the function $q$ such that $q(u)$ is an upper bound of $v_1^{t+1}/v_j^{t+1}$ when $U_{j}^{t} = u$. When the horizon is infinitely long, each choice of $v_1^t$ makes up only a small step, and the envy is upper--bounded by the integral $\frac{1}{U_1^S} \int_{0}^{\infty} q(u) \mathrm{d} u$ in the asymptotic sense. 

There is a major difference when it comes to PACE. In the integral greedy algorithm, the upper bound of $v_1^{t+1}/v_j^{t+1}$ is clearly given by $\min\{\frac{1}{\varepsilon}, \frac{U_1^S}{u}\}$ once we know that $U_j^t = u$. However, in PACE, knowing only $U_j^t=u$ is not sufficient to bound on $v_1^{t+1}/v_2^{t+1}$. The bound may also depend on $t$, the position of the item.
\begin{equation*}
    \frac{v_1^{t}}{v_2^{t}} \leq \min\left\{\frac{1}{\varepsilon}, \frac{\max\{\ell (t-1), aS\}}{\max\{\ell (t-1), {U}_2^{t-1}\}}\right\}.
\end{equation*}
Thanks to \Cref{lemma-auxiliary-1} and the fact that $\ell t^{*}<aS$, we can simplify the above equation and remove the dependence on $t$.
\begin{equation}
    \label{eq-step-upper-bound}
    \frac{v_1^{t}}{v_2^{t}}  \leq
    \begin{cases}
        \frac{1}{\varepsilon} & U_2^{t-1}/aS \leq \varepsilon\\
        \frac{aS}{U_2^{t-1}} & \varepsilon< U_2^{t-1}/aS \leq 1/ \varepsilon \\
        0 & U_2^{t-1}/aS \geq 1/\varepsilon
    \end{cases}.
\end{equation}

The upper bound \eqref{eq-step-upper-bound} is identical to the constraint in a canonical optimization program of an envy maximizer parametrized by $aS$, defined as \eqref{eq-optimization-canonical} in the proof of \Cref{thm-envy}. Hence, both asymptotic upper bounds for envy and $|B_j \cap C_1|$ in the canonical optimization program still hold. 

\subsection{Proof of \Cref{lemma-infinite-utility}}
\label{proof-lemma-infinite-utility}
To prove \Cref{lemma-infinite-utility}, we show that for any given $c$, by choosing $\ell \leq \Bar{\ell}$ where $\Bar{\ell} <\frac{c\varepsilon}{1+(n-1)(1+\log 1/\varepsilon)}$, we have for any $i\in [n]$, 
$$\lim_{T\rightarrow \infty} \frac{U_i(A_i)}{T} \geq {\Bar{\ell}}.$$

We show this by contradiction. Assume there exists $i\in[n]$ such that $U_i(A_i) < T$. Let $C_i = \{t: v_i^t\}$ be the set of items which agent $i$ has non-zero value. Consider item $s$, the last item in $A_j\cap C_i$. To construct a contradiction, we show that $U_i(A_i)< \Bar{\ell} T$ implies an upper bound on each $|A_j\cap C_i|$, and thus $|C_i|$ (with a union bound). This might contradict the lower bound $V_i$. 

Consider an adversarial setting where the adversary attempts to maximize $|A_j \cap C_i|$, subject to the assumptions on the input and the allocation rule of PACE. 
Suppose $A_j \cap C_i=  \{t_1, \cdots, t_m\}$ where $t_1 < \cdots < t_m$. We relax the constraint on the adversary as follows:
\begin{enumerate}
    \item For $t\in A_j\cap C_i$, the adversary subject to constraint
    $$ \frac{v_j^t}{\Hat{U}_j^{t-1}} \geq \max_{k\in[n]}\frac{v_k^t}{\Hat{U}_k^{t-1}}. $$
    We relax this into
    \begin{equation}
    \label{eq-relaxed-step-1}
        \frac{v_j^t}{\Hat{U}_j^{t-1}} \geq \frac{v_i^t}{\Hat{U}_i^{t-1}}.
    \end{equation}
    \item Notice that $\Hat{U}_j^{t} \geq U_j^t$ and $\Hat{U}_i^{t} \leq \Bar{\ell} T$ by assumption. Then we can further relax \eqref{eq-relaxed-step-1} into 
    \begin{equation}
    \label{eq-relaxed-step-2}
        \frac{v_j^t}{{U}_j^{t-1}} \geq \frac{v_i^t}{\Bar{\ell} T}.
    \end{equation}
    \item The constraint on $U_j^{t}$ is minimum:
    \begin{equation}
        \label{eq-relaxed-step-3}
        U_j^t \geq \sum_{s\in A_j \cap C_i, s<t} v_j^s.
    \end{equation}
\end{enumerate}

The relaxed adversarial setting consists of constraint \eqref{eq-relaxed-step-2}, \eqref{eq-relaxed-step-3} and the value domain requirement $v_i^t, v_j^t \in [0,1]$. The value it concerns are $v_i^t, v_j^t$ where $t\in C_i \cap A_j$. 

Recall the definition of the canonical optimization program for an envy-maximizer in \eqref{eq-optimization-canonical}. We notice that the above relaxed constraint is identical to a canonical optimization program parametrized by $\Bar{\ell} T$. By \Cref{lemma-number-of-items}, the length of the value sequence in the relaxed setting is asymptotically bounded,
\begin{equation}
    \label{eq-relaxed-number-of-items-bound}
    \lim_{T\rightarrow \infty} \frac{|C_i\cap A_j|}{T} \leq \frac{\Bar{\ell}}{\varepsilon}\left(1+\log \frac{1}{\varepsilon}\right).
\end{equation}
Since \eqref{eq-relaxed-number-of-items-bound} is an upper bound for the relaxed setting, it must also hold for the true adversarial setting. We then know that 
\begin{align*}
    \lim_{T\rightarrow \infty} \frac{|C_i|}{T}
    &= \lim_{T\rightarrow \infty} \frac{|A_i|}{T} + \sum_{j\neq i} \lim_{T\rightarrow \infty} \frac{|C_i \cap A_j|}{T}\\
    &\leq \frac{\Bar{\ell}}{\varepsilon} \left(1+ (n-1)\left(1+\log \frac{1}{\varepsilon}\right)\right)\\
    &<c,
\end{align*}
where the last step is by the definition of $\Bar{\ell}$. 

However, this is impossible, since $|C_i| < cT$ implies $V_i < cT$. By contradiction we must have $U_i(A_i) \geq \Bar{\ell} T$ for sufficiently large $T$.

\subsection{Proof of \Cref{thm-cr-pace}}
\label{proof-thm-cr-pace}
Since $\ell < \frac{c\varepsilon}{1+(n-1)\left(1+\log 1/\varepsilon\right)}, r=1$, by \Cref{lemma-infinite-utility}, 
\begin{equation*}
    \bm{v} \in \mathcal{V}_{\varepsilon, c}^T \implies  \bm{v} \in \mathcal{V}_{\varepsilon}^T( \ell, r).
\end{equation*}
We can then apply the envy bound in \Cref{thm-envy-pace} to $\bm{v} \in \mathcal{V}_{\varepsilon, c}^T$ to get that for all $i\neq j$,
\begin{equation*}
    \lim_{T\rightarrow \infty} \frac{U_i(A_j)}{U_i(A_i)} \leq {1+2\log \frac{1}{\varepsilon}}.
\end{equation*}
Summing over all $j \in [n]$ and using $V_i = \sum_{j=1}^n U_i(A_j)$, we have
\begin{equation*}
    \lim_{T\rightarrow \infty} \frac{V_i}{U_i(A_i)} = \lim_{T\rightarrow \infty} \sum_{j=1}^n\frac{U_i(A_j)}{U_i(A_i)}  \leq 1+(n-1)\left({1+2\log \frac{1}{\varepsilon}}\right).
\end{equation*}
Applying the condition $V_i \geq cT$,
\begin{equation*}
\label{eq-pace-proportionality-pace}
    \lim_{T\rightarrow \infty} \frac{U_i(A_i)}{T} \geq c\left(1+(n-1)\left(1+2\log \frac{1}{\varepsilon}\right)\right)^{-1} \geq \frac{c}{n} \left(1+2\log \frac{1}{\varepsilon}\right)^{-1}.
\end{equation*}
On the other hand, by the AM-GM inequality, the optimal value of Nash Welfare can be bounded as follows:
\begin{equation*}
    \prod_{i=1}^n {U_i^\star} \leq \left(\frac{T}{n}\right)^n.
\end{equation*}
Following from the above two inequalities,
\begin{align*}
    \lim_{T\rightarrow \infty}\prod_{i=1}^n \frac{U_i^\star}{U_i(A_i)}\leq \lim_{T\rightarrow \infty} \prod_{i=1}^n U_i^\star \cdot \frac{T}{U_i(A_i)} \cdot T\leq \frac{1}{c} \left(1+2\log \frac{1}{\varepsilon}\right).
\end{align*}
This proves \Cref{thm-cr-pace}.

\end{document}